\documentclass[runningheads]{llncs}

\usepackage{amsmath}
\usepackage{amssymb}
\usepackage{latexsym}
\usepackage{cite}

\usepackage[pagewise]{lineno}

\usepackage{graphicx}

\usepackage[loose]{subfigure}

\usepackage[ruled,vlined,linesnumbered]{algorithm2e}

\newcommand{\NP} {NP}

\begin{document}
\title{A Reduction System for Optimal 1-Planar Graphs
\thanks{Supported by the Deutsche Forschungsgemeinschaft (DFG), grant
    Br835/18-1.}
}

\author{Franz J.\ Brandenburg}
\institute{%
University of Passau,
94030 Passau, Germany \\
\email{brandenb@fim.uni-passau.de}
}

\maketitle

\begin{abstract}

There is a graph reduction system so that every optimal 1-planar
graph can be reduced to an irreducible extended wheel graph,
provided the reductions are applied such that the given graph class
is preserved. A graph is optimal 1-planar if it can be drawn in the
plane with at most one crossing per edge and is optimal if it has
the maximum of $4n-8$ edges.

We show that the reduction system is context-sensitive so that the
preservation of the graph class can be granted by local conditions
which can be tested in constant time. Every optimal 1-planar graph
$G$ can be reduced to every extended wheel graph whose size is in a
range from the (second) smallest one to some upper bound that
depends on $G$. There is a reduction to the smallest extended wheel
graph if $G$ is not 5-connected, but not conversely. The reduction
system has side effects and is non-deterministic and non-confluent.
Nevertheless, reductions can be computed in linear time.
\end{abstract}


\section{Introduction}

Coloring graphs is a classical graph problem. The 4-color problem
for planar graphs was open for long and was   solved by Appel et
al.~ \cite{ah-epgfc-77, ahk-epgfI-77} using a compute program. A
simpler proof was given by Robertson et al.~\cite{rsst-4color-97}.
Ringel \cite{ringel-65} studied the coloring problem of 1-planar
graphs. A graph is 1-planar if it can be drawn in the plane with at
most one crossing per edge. 1-planar graphs appear when a planar
graph and its dual are drawn simultaneously. They are 6-colorable
\cite{b-np6ct-95} and the bound is tight since $K_6$ is 1-planar.

Structural properties of 1-planar graphs were first studied by
Bodendiek, Schumacher  and Wagner \cite{bsw-bs-83,bsw-1og-84,
s-s1pg-86}. They observed that  1-planar graphs with the maximum
number of edges can be obtained from 3-connected planar
quadrangulations by adding a pair of crossing edges in each
quadrangular face. Consequently, 1-planar graphs with $n$ vertices
have at most $4n-8$ edges. Bodendiek et al.~\cite{bsw-1og-84} called
1-planar graphs with $4n-8$ edges \emph{optimal} and proved that
there are such graphs for $n = 8$ and for all $n \geq 10$, and not
for $n \leq 7$ and $n = 9$.

The extended wheel graphs $XW_{2k}$ play an important role for
optimal 1-planar graphs. An \emph{extended wheel graph} $XW_{2k}$
for $k \geq 3$ consists of a circle $C = (v_1, \ldots, v_{2k})$ of
even length and two distinguished vertices $p$ and $q$, called
poles. There is an edge between   the vertices at distance two on
$C$. In addition, there is an edge between each pole and each vertex
on $C$. Note that there is no edge between the poles. Hence,
$XW_{2k}$ has $2k+2$ vertices and $8k$ edges, see
Fig.~\ref{XWgraphs}. The notation $XW_{2k}$ is taken from Suzuki
\cite{s-rm1pg-10} and is related to Schumacher's \cite{s-s1pg-86} $2
* \hat{C}_{2k}$.

\begin{figure}
   \centering
   \subfigure[The smallest extended wheel graph $XW_6$ drawn as a crossed
       cube. Any two non-adjacent
     vertices $p$ and $q$ of $XW_6$ can be taken as poles.]{
     \parbox[b]{5.5cm}{%
       \centering
       \includegraphics[scale=0.57]{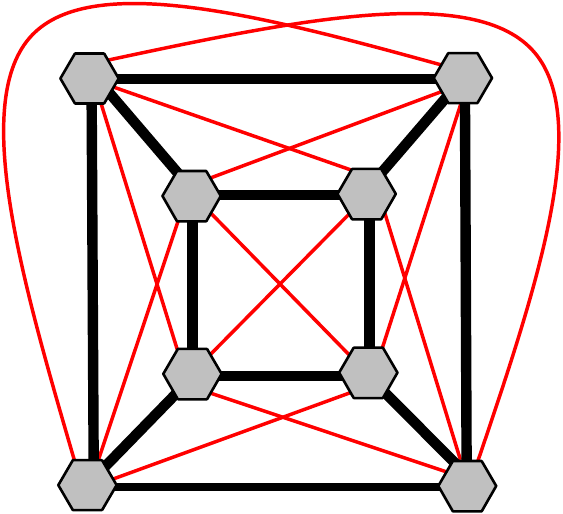}
     }
   }
   \hfil
   \subfigure[The extended wheel graph $XW_{10}$]{
     \parbox[b]{5.5cm}{%
       \centering
       \includegraphics[scale=0.34]{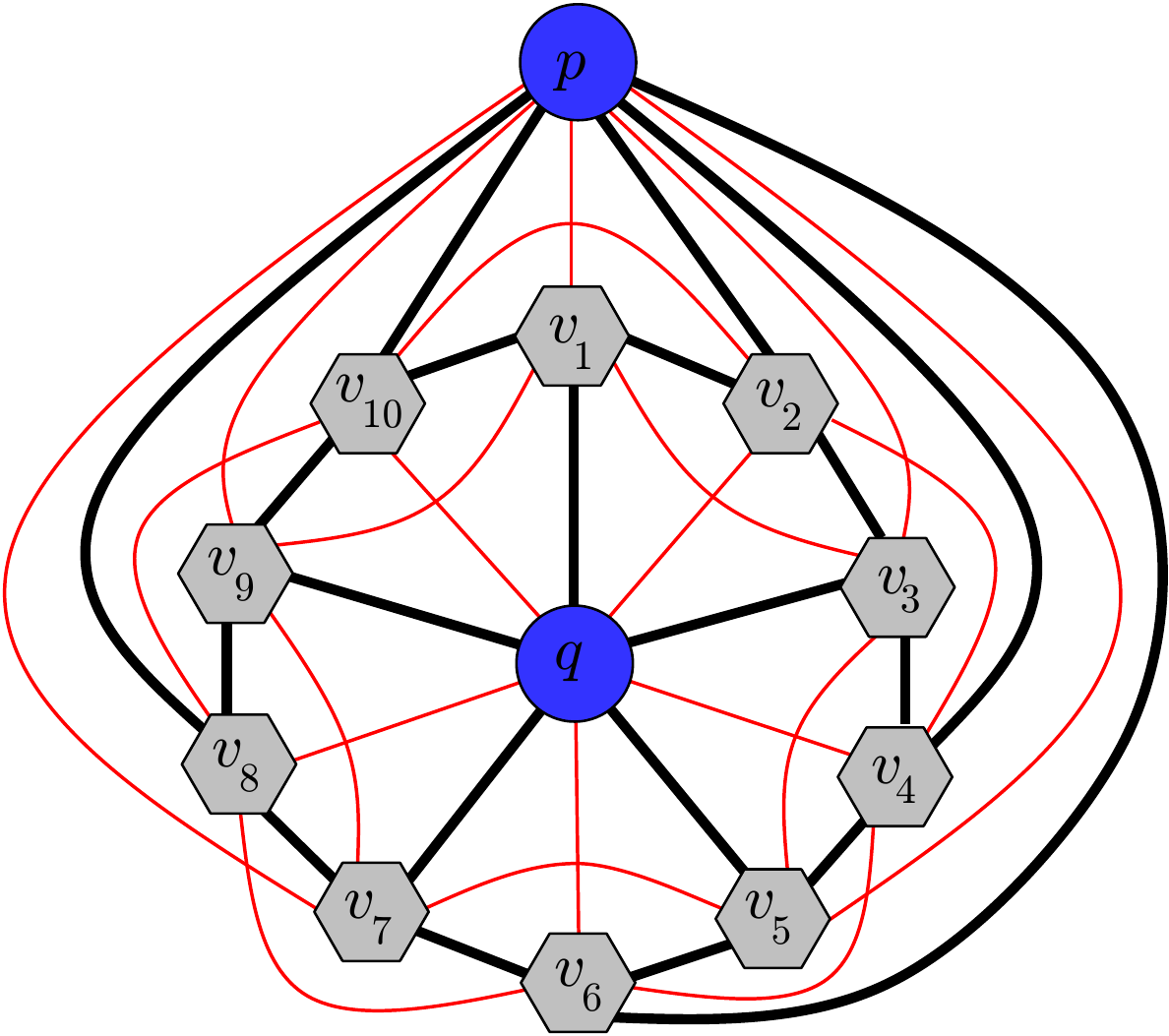}
     }
   }
   \caption{Extended wheel graphs $XW_6$ and $XW_{10}$. The vertices
   of degree six are drawn as hexagons. Planar edges are black and thick and crossed edges are red and thin.
     If the poles $p$ and $q$ change places this swaps the type of
     the incident edges between planar and crossed. The picture is unchanged, whereas there is a new embedding.}
   \label{XWgraphs}
\end{figure}

Schumacher \cite{s-s1pg-86} investigated the structure of subgraphs
of 5-connected optimal 1-planar graphs that are induced by the
vertices of degree six. He showed that these subgraphs are forests
of paths or 3-stars, except for extended wheel graphs, and provided
the following characterization: a 5-connected optimal 1-planar graph
is an extended wheel graph if and only if the subgraph induced by
vertices of degree six is a cycle. This result is no longer valid if
the precondition on 5-connected graphs is dropped. There are optimal
1-planar graph with separating 4-cycles and a cycle induced by
vertices of degree six, such as the graphs in Figs.~\ref{fig:nested}
and \ref{CR}.

\begin{figure}
   \begin{center}
   \rotatebox{270}{%
     \includegraphics[scale=0.2]{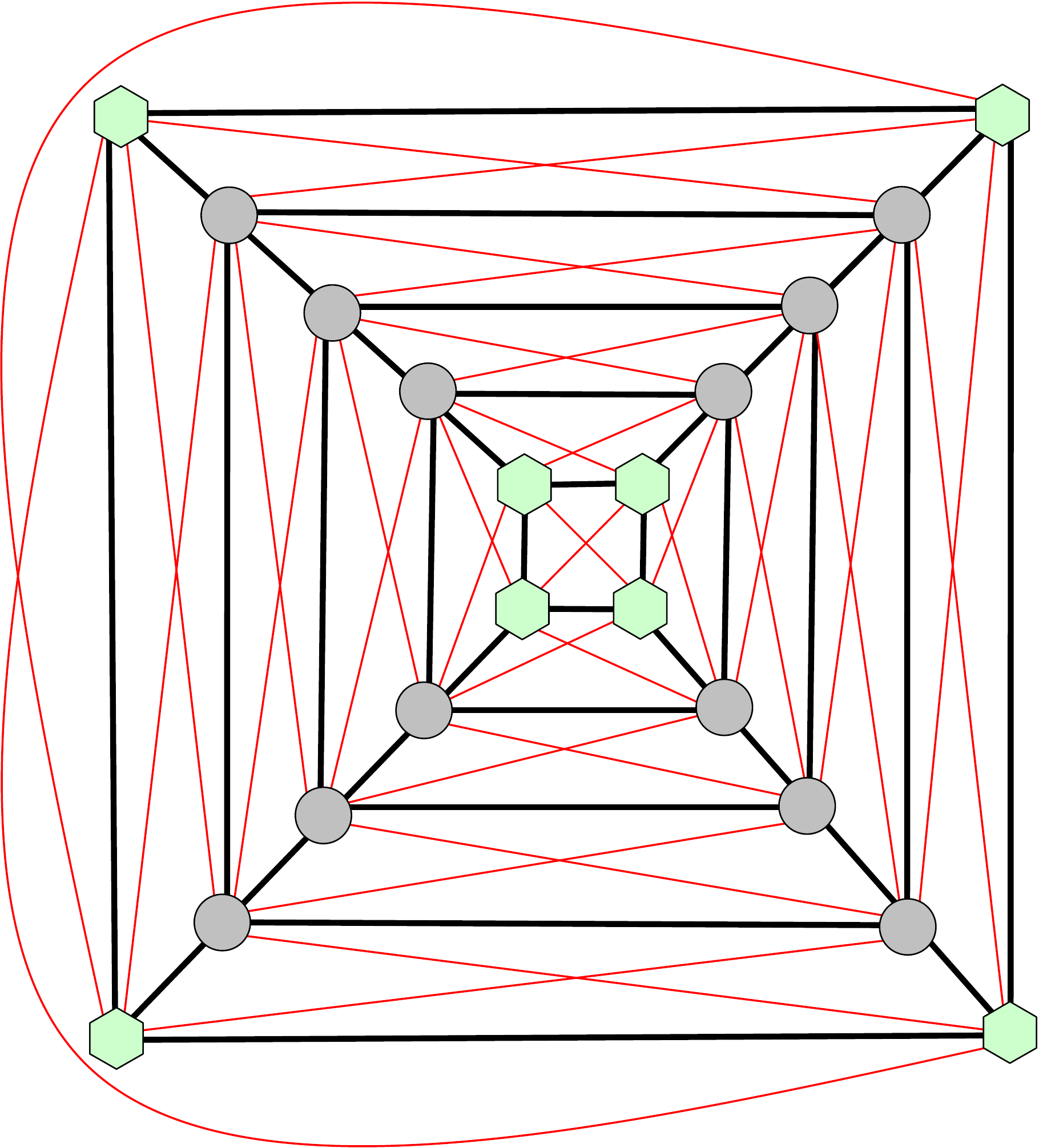}
     }
     \caption{nested crossed cubes  \label{fig:nested}}
   \end{center}
\end{figure}

Schumacher \cite{s-s1pg-86} introduced a graph reduction with a
single rule, whose  augmented version is shown in Fig.~\ref{SR} and
is called $SR$. He proved that the extended wheel graphs are
irreducible under $SR$ and that every 5-connected optimal 1-planar
graph can be reduced  to an extended wheel graph and even to $XW_8$.
However, $SR$ must be used with care. A use must preserve the given
class. This is stated  in \cite{s-s1pg-86} and  in
\cite{bggmtw-gsqs-05} for the special case of 3-connected planar
quadrangulations. It is not said, how to meet this condition. It may
need a global test for optimal 1-planarity or a 3-connected planar
quadrangulation. A false use of $SR$ to an optimal 1-planar graph
may leave the class and, in particular, may destroy the
3-connectivity of the underlying planar subgraph and thereby violate
the condition imposed by Brinkmann et al.~\cite{bggmtw-gsqs-05} on
planar quadrangulations. Moreover,
 there are uses of $SR$ that
preserve optimal 1-planarity but violate the 5-connectivity
precondition and introduce a separating 4-cycle, as Example
\ref{ex:example1} shows. Then Schumacher's reduction system gets
stuck. A subgraph with a
 4-cycle of vertices of degree six as in
Figs.~\ref{fig:nested} and \ref{CR} is inaccessible to $SR$. Hence,
Schumacher's result must be read as follows: For every 5-connected
optimal 1-planar graph $G_1$ there exist 5-connected optimal
1-planar graphs $G_1, \ldots, G_t$ for some $t \geq 1$ such that
 $G_{i+1}$ is obtained from
$G_i$ by an $SR$-reduction for $i=1,\ldots, t-1$  and $G_t$ is an
extended wheel graph.  It is left open how  $G_2, \ldots, G_t$ are
computed.

If the precondition on 5-connected graphs is dropped, then a second
reduction  is necessary, which  we call  $CR$. $CR$ is shown in
Fig.~\ref{CR} and is the inverse of the $Q_4$-cycle addition of
Suzuki \cite{s-rm1pg-10} and of the extension of the $P_3$-expansion
of Brinkmann et al.~\cite{bggmtw-gsqs-05} to 1-planar graphs. The
graph transformation rules of Schumacher and Suzuki are defined on
embeddings of 1-planar graphs and need the distinction between
planar and crossed edges. We reverse direction and consider
reductions.
%
%
The reductions are \emph{constraint} and can be used  if the given
class of graphs is preserved. Then the use is \emph{feasible}.
Fortunately, the feasibility can be checked locally and independent
of an embedding, as we shall show.  We use the terms ``good'' and
``bad'' so that a feasible reduction is applied to a good candidate.
Reductions have \emph{side effects} such that the application of a
reduction to a good candidate may change the status of other
vertices from candidate to non-candidate and of candidates from good
to bad, and vice versa. This is illustrated by shapes and colors for
the vertices in Fig.~\ref{fig:reduce1-G16}.
In consequence, the reduction systems with the sets of rules $\{SR,
CR\}$ and $\{SR\}$, respectively, are constraint, context-sensitive,
 and non-confluent. These terms have been studied
in the theories of Formal Languages
 \cite{hmu-iatlc-03} and   Rewriting Systems \cite{bn-tr-98, bo-srs-93}.

In this paper, we generalize the results of Schumacher
\cite{s-s1pg-86} to arbitrary optimal 1-planar graphs and first show
how $SR$ and $CR$ can be applied feasibly. The feasibility check and
an application take only constant time. We thereby translate the
general requirement of Schumacher \cite{s-s1pg-86} and Brinkmann et
al.~\cite{bggmtw-gsqs-05} on the preservation of the given class
into an effective procedure. Then we establish that every  reducible
optimal 1-planar graph  can be reduced
 to every extended wheel $XW_{2k}$ in a range from $s$
to $t$. Here, the upper bound $t$ depends on the given graph $G$
whereas the lower bound is $s =3$ or $s =4$, where $s=3$ if $G$ is
not 5-connected. Some 5-connected optimal 1-planar graphs can also
be reduced to $XW_6$. 5-connected optimal 1-planar graphs can be
reduced using only $SR$ and there are graphs which can be reduced
using only by $CR$. The reduction system is non-deterministic and
generally admits several reductions of a reducible optimal 1-planar
graph to an irreducible extended wheel graph. Each such reduction
can be computed linear time.

 The paper is organized as follows: In the next Section we recall
some basic properties of optimal 1-planar graphs. In Section
\ref{sect:rules} we introduce  the reductions   and show how to use
them on graphs  and derive a simple quadratic-time recognition
algorithm.  Combinatorial properties of  the reduction system are
explored in Section \ref{charact}. We conclude with some open
problems on 1-planar graphs in Section \ref{sect:conclusion}.

\section{Preliminaries} \label{sect:prelim}

We consider simple  undirected graphs $G = (V,E)$ with sets of
vertices $V$ and edges $E$. The degree of a vertex is the number of
incident edges or neighbors, and the \emph{local degree} is the
number of incident edges or neighbors when restricted to a
particular induced subgraph.

A \emph{drawing} of a graph is a mapping of $G$ into the plane such
that the vertices are mapped to distinct points and the edges to
simple Jordan curves between the endpoints. It is \emph{planar} if
edges do not cross and \emph{1-planar} if each edge is crossed at
most once. A drawn graph defines an    \emph{embedding}
$\mathcal{E}(G)$ which contains all edge crossings and faces. A
drawing and an embedding are a witness for planarity and
1-planarity, respectively. For an algorithmic treatment, a planar
embedding is given by a \emph{rotation system}, which describes the
cyclic ordering of the edges incident to each vertex, or by the sets
of vertices, edges, and faces. A 1-planar embedding is given by an
embedding of the planarization of $G$, which is obtained by taking
the crossing points of edges as virtual vertices
\cite{ehklss-tm1pg-13b}.

A 1-planar embedding partitions the edges into \emph{planar} and
\emph{crossing} edges. We shall \emph{color} the planar edges black
and the crossing ones red. Other color schemes were used in
  \cite{ehklss-tm1pg-13b, el-racg1p-13, help-ft1pg-12,d-ds1pgd-13}.
  The \emph{black} or \emph{planar skeleton}
$P(\mathcal{E}(G))$ consists of the black (planar) edges and
inherits its embedding from the given 1-planar embedding. All
crossing edges are removed from $P(\mathcal{E}(G))$. A vertex $u$ is
called a \emph{black (red) neighbor} of $v$ if the edge $\{u,v\}$ is
black (red) in a 1-planar embedding. A \emph{kite} is a 1-planar
embedding of $K_4$ with a planar quadrilateral $Q$, and a pair of
crossing edges inside $Q$ and no other vertices inside $Q$. For
example, there are 3 kites in the embedding of the left graph in
Fig.~\ref{SR}. The other embedding of $K_4$ is as a tetrahedron,
whose edges, however, may be crossed \cite{b-ro1plt-16, Kyncl-09}.

Every 5-connected optimal 1-planar graph has a unique 1-planar
embedding with the exception of the extended wheel graphs, which
have two embeddings for graphs of size at least ten
\cite{s-s1pg-86}.  The different embeddings result from exchanging
the poles.
  Suzuki \cite{s-rm1pg-10} improved Schumacher's result and
dropped the 5-connectivity precondition, which is a restriction,
since optimal 1-planar graphs are 4-connected and not necessarily
5-connected. Note that a 1-planar graph is 6-connected if it is
5-connected \cite{s-s1pg-86}. All vertices of an optimal 1-planar
graph have an even degree of at least six, and there are at least
eight vertices of degree six, since in total there are $4n-8$ edges.
For convenience, we shall identify a 1-planar graph and its 1-planar
embedding if the embedding is unique or clear from the context.

\section{Reductions on Graphs} \label{sect:rules}

\subsection{Previous Reductions}
 Brinkmann et al. \cite{bggmtw-gsqs-05} introduced two
graph transformations  for the generation and characterization of
3-connected planar quadrangulations. We wish to reduce graphs and
consider the inverse relations.

\begin{definition}
The $P_1$-\emph{reduction} of a 3-connected planar quadrangulation
consists of a contraction of a face $f = (u,x,v,z)$ at $x$ and $z$,
where $x$ has degree $3$ and $u,v,z$ have degree at least $3$.  The
$P_3$-\emph{reduction} removes the vertices of the inner cycle of a
planar cube, where the inner cycle is empty and vertices on the
outer cycle   have degree at least $4$.

The reductions must be applied such that they preserve the class of
 3-connected planar quadrangulations.
\end{definition}

A $P_1$-reduction is the  restriction to the black (planar) edges in
Fig.~\ref{fig:splitting}  and Fig.~\ref{SR} shows an augmented
version. The $P_3$-reduction is displayed by the black edges in
Fig.~\ref{CR}.

By the one-to-one correspondence between 3-connected planar
quadrangulations and optimal 1-planar graphs \cite{bsw-1og-84}, the
$P_1$-~and $P_3$-reductions are extended straightforwardly to
embedded 1-planar graphs, called vertex  and face contraction  by
Suzuki \cite{s-rm1pg-10}. Their inverse are called $Q_v$-splitting
and $Q_4$-cycle addition, respectively, and are used from right to
left. The illustration in Fig.~\ref{fig:splitting} is taken from
\cite{s-rm1pg-10}.

\begin{figure}
   \begin{center}
   \rotatebox{270}{%
     \includegraphics[scale=0.35]{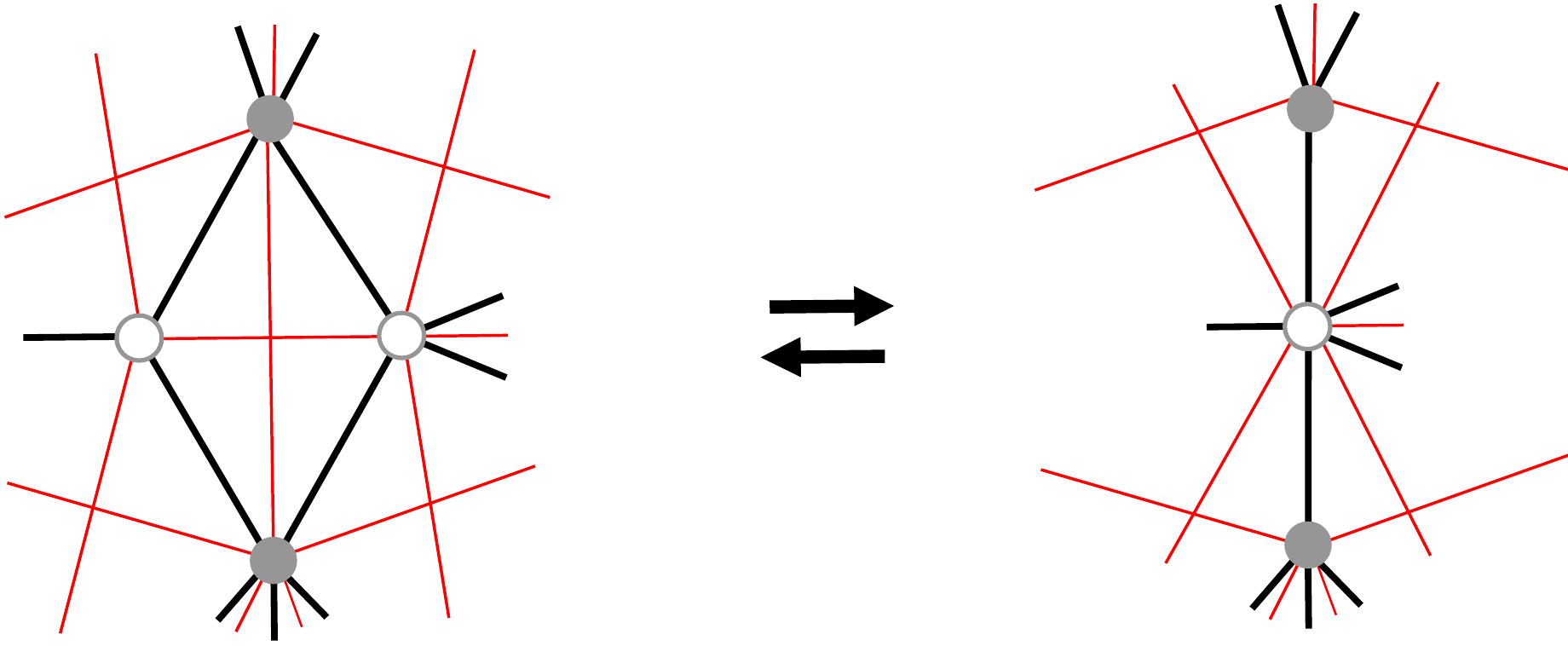}
     }
     \caption{Face contraction or  vertex splitting on 1-planar embeddings
     with planar (black and thick) and crossing (red and thin) edges
     \label{fig:splitting}}
   \end{center}
\end{figure}

By the uniqueness of embeddings of 3-connected planar
quadrangulations and of optimal 1-planar graphs (except for extended
wheel graphs) one can define the graph reductions on their
embeddings. Thereby it is assumed that the embedding is given which
is not clear for 1-planar graphs. Schumacher \cite{s-s1pg-86}
defined his ``$\hookrightarrow$''-relation on embeddings. Using the
embedding in Fig.~\ref{SR} he said that "$x$ can be merged with
$x_4$ if there is a quadrilateral of black edges and vertices $x,
x_3, x_4, x_5$ so that all paths of length four from $x$ to $x_4$
along black edges pass through $x_3$ or $x_5$". Note that each
quadrangle in an extended wheel graph has a path of length four
between opposite vertices $(v_{i-1}, v_{i+1})$ of a planar
quadrangle through one of the poles, such that the condition on
paths is violated. In consequence, the
``$\hookrightarrow$''-relation is not applicable. Suzuki
\cite{s-rm1pg-10} observed that Schumacher's \cite{s-s1pg-86}
``$\hookrightarrow$''-relation is the inverse of the
$Q_v$-splitting, since it defines the $P_1$-reduction of Brinkmann
et al.~\cite{bggmtw-gsqs-05}    on the planar  skeleton. It is said
that a reduction must preserve the given class, whereas it is not
specified  how this is achieved.

We summarize the previous results.

\begin{proposition}  \label{prop:properties}
\begin{enumerate}
  \item \cite{s-s1pg-86} \, For every 5-connected optimal 1-planar graph $G$ (with $G \neq XW_6)$
  there exists a
  reduction of $G$ to an
  wheel graph $XW_{2k}$ for some $k \geq 4$   by using $SR$ (or ``$\hookrightarrow$'') on 5-connected optimal 1-planar
  graphs.
  Every reducible 5-connected optimal 1-planar graph can be
  reduced to $XW_8$.
The extended wheel graphs are irreducible (or minimum) elements
  under $SR$.
  \item   \cite{bggmtw-gsqs-05} \,
  The class $\mathcal{Q}_3$ of all 3-connected planar quadrangulations
  is generated from the pseudo-double wheels by the
  $P_1(\mathcal{Q}_3)$ and $P_3(\mathcal{Q}_3)$-expansions. A
  pseudo-double wheel is the restriction of an extended wheel graph
  to the planar edges.
  \item \cite{s-rm1pg-10} \,
  Every optimal 1-planar graph can be obtained from an extended wheel graph
  by a sequence of $Q_v$-splittings and $Q_4$-cycle additions. The extended wheel
  graphs are irreducible under the inverse of $Q_v$-splitting and $Q_4$-cycle
  addition, i.e., under $SR$ and $CR$.
\end{enumerate}

\end{proposition}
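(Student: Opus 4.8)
The three statements collected here are all from prior work (Schumacher, Brinkmann et al., Suzuki), so rather than reprove each in isolation I would organise a single argument around their logical dependencies. The foundational statement is part 2, the Brinkmann et al.\ generation theorem for $3$-connected planar quadrangulations. Part 3 (Suzuki) then follows by transporting part 2 across the bijection between $3$-connected planar quadrangulations and optimal $1$-planar graphs recorded in the Preliminaries, and part 1 (Schumacher) is the specialisation of part 3 to the $5$-connected case, where the face-contraction rule $CR$ is never needed and only $SR$ survives. So the real content to establish is part 2, after which parts 3 and 1 are translations.

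For part 2 I would argue by induction on the number of vertices. The base case is the pseudo-double wheels, which are exactly the planar skeletons of the extended wheel graphs. For the inductive step I would show that every $3$-connected planar quadrangulation that is not a pseudo-double wheel admits a $P_1$- or a $P_3$-reduction whose result is again a $3$-connected planar quadrangulation: one locates either a degree-$3$ vertex $x$ on a face $f=(u,x,v,z)$ whose contraction does not create a separating pair, or an empty planar cube whose inner cycle can be deleted without destroying $3$-connectivity. A counting or discharging argument on the (even) face and vertex degrees should guarantee that at least one such configuration is present once the graph is strictly larger than a pseudo-double wheel.

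Parts 3 and 1 are then obtained by translation. Under the correspondence of the Preliminaries each face of the quadrangulation carries a kite, the $P_1$-reduction becomes the vertex contraction (inverse $Q_v$-splitting, i.e.\ $SR$) and the $P_3$-reduction becomes the face contraction (inverse $Q_4$-cycle addition, i.e.\ $CR$), so a generation sequence for the skeleton lifts verbatim to a reduction sequence in $SR, CR$ on the optimal $1$-planar graph, giving part 3. Restricting to $5$-connected graphs removes the separating $4$-cycles that force the $P_3$/$CR$ step, leaving only $SR$ and the chain terminating at $XW_8$, which yields part 1. The irreducibility of the extended wheel graphs is immediate from the observation already recorded in the Preliminaries: in $XW_{2k}$ every black quadrangle has a pair of opposite circle-vertices $v_{i-1},v_{i+1}$ joined by a second black path of length four routed through the other pole, so Schumacher's length-four path condition fails and no $SR$ applies; an analogous check on empty cubes rules out $CR$.

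The main obstacle is not finding \emph{some} reducible configuration but finding one whose reduction stays inside the prescribed class. Both Schumacher and Brinkmann et al.\ impose this preservation requirement as a side condition without making it effective, and a careless contraction can destroy $3$-connectivity of the skeleton or create a separating $4$-cycle, thereby leaving the $5$-connected world even while remaining optimal $1$-planar. Turning this global requirement into a local, constant-time testable ``good candidate'' criterion is precisely the gap that the remainder of the paper closes, and it is where the real work lies.
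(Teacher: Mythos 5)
The paper offers no proof of this proposition: it is explicitly a summary of results cited from Schumacher, Brinkmann et al., and Suzuki, so there is no internal argument to compare yours against. Judged on its own terms, your reconstruction of part 2 (induction on the vertex count with a discharging-style existence argument for a class-preserving $P_1$- or $P_3$-reduction) and the lift of part 2 to part 3 via the kite bijection are the right shape. But your derivation of part 1 from part 3 has a genuine gap, and it is precisely the gap this paper is at pains to expose. You claim that ``restricting to $5$-connected graphs removes the separating $4$-cycles that force the $P_3$/$CR$ step, leaving only $SR$.'' That does not work: $5$-connectivity is \emph{not} preserved by $SR$-reductions, as the paper's Example~\ref{ex:example1} and Theorem~\ref{thm:SRproperties}(2) show --- a feasible $SR$-step on a $5$-connected graph can create a separating $4$-cycle, after which $SR$ alone gets stuck (Corollary~\ref{cor:SRproperties}). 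So part 1 is not a specialisation of part 3; Schumacher's theorem asserts the existence of a reduction sequence that \emph{stays inside} the $5$-connected class, and that requires a separate existence argument (a good candidate whose contraction preserves $5$-connectivity, in the spirit of Lemma 4 of Brinkmann et al.), not a restriction of the unconstrained sequence. Historically the dependency also runs the other way: Schumacher (1986) precedes Brinkmann et al.\ (2005) and Suzuki (2010).

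A second, smaller omission: part 1 claims reducibility to $XW_8$ \emph{specifically}, not merely to some $XW_{2k}$. Your sketch terminates the induction at whatever pseudo-double wheel the chain happens to reach. Getting down to $XW_8$ needs the additional step-down argument on pre-extended wheel graphs (two further $SR$-reductions from $SXW_{2k}$ to $SXW_{2k-2}$, as in Lemma~\ref{lem:pre}), which your proposal does not supply. Your closing paragraph correctly identifies the feasibility/preservation issue as the real difficulty, but the body of your argument quietly assumes it away in exactly the place where it bites.
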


Note that Schumacher restricted  the $\hookrightarrow$-relation to
5-connected optimal 1-planar graphs graphs. It is presupposed that
$G'$ is 5-connected if $G$ is 5-connected and  $G \hookrightarrow
G'$ holds. This presupposition is necessary, as Example
\ref{ex:example1} shows. There,  $G \hookrightarrow G'$ for optimal
1-planar graphs $G$ and $G'$, where $G$ is 5-connected and $G'$  has
a separating 4-cycle.

In full generality, graph reduction systems have been studied in the
theory of graph grammars \cite{c-handbookTCS-90, ce-gsmsol-12,
cmrehl-97, r-hgg-97}. A graph grammar or a graph replacement system
consists of a finite set of graph transformations or   rules. Each
rule  is a pair of graphs $(L, R)$ and $L$ is said to be replaced by
$R$. A graph $L$ occurs in $G$ and $L$ is said to \emph{match} a
subgraph $H$ of $G$ if there is a graph homomorphism between $L$ and
$H$, which is one-to-one on the vertices and edges but not
necessarily onto for the edges. Sometimes this is even relaxed. If
$x$ and $y$ are two vertices of $L$ with $x \neq y$ and $x'$ and
$y'$ are their matched counterparts in $H$ then $x' \neq y'$, and
there is an edge $\{x',y'\}$ if there is an edge $\{x,y\}$. An
application of $(L,R)$ to a graph $G$ replaces an occurrence of $L$
by an occurrence of $R$ while the remainder $G-L$ is preserved. It
results in a graph $G'$ which contains a subgraph $H'$ matching $R$.
There are edges between vertices of $G-L$ and $R$ according to some
conditions that are given with $(L,R)$. In our case, the graphs $L$
and $R$ have a common outer frame and the edges between the frame
and $G-L$ are kept.

Graph grammars generally operate on labeled graphs where the
vertices and edges are labeled by symbols from an alphabet. This
resembles context-free grammars on strings. The labels are used to
distinguish vertices and edges of a graph and to regulate the
application of a graph transformation. A graph grammar is used to
generate a graph language, which is a set of labeled graphs. Typical
(unlabeled) graph languages are the sets of binary trees,
series-parallel graphs, or complete (bipartite) graphs.

\subsection{Our Reductions}

For our study we augment the relations of Schumacher and Suzuki to
small graphs from which a part is removed and a part is kept as
context. Thereby, the feasibility of their use on graphs can be
expressed on the augmented graph. We make use of the uniqueness of
1-planar embeddings of optimal 1-planar graphs that are not extended
wheel graphs and call such graphs \emph{reducible}.

We reverse the expansions of Brinkmann et al. and Suzuki and call
the augmented version \emph{SR-reduction} (Schumacher reduction) and
\emph{CR-reduction} (crossed cube reduction), or just $SR$ and $CR$,
and the graphs of the left $CS$ (crossed star) and $CC$ (crossed
cube), respectively. The $SR$-reduction augments the vertex
splitting of Suzuki and includes the subgraph induced by the center
$x$. The reductions are shown in Figs.~\ref{SR} and \ref{CR}
including a 1-planar embedding and an edge coloring. The tiny
strokes at the outer vertices indicate further edges  which are
necessary. These vertices may have even more edges to outer
vertices. The reductions should be applied to an optimal 1-planar
graph.
 The reverse transformations (from right to left) are $SR^{-1}$
and $CR^{-1}$, respectively.

\begin{definition}
A $CR$-reduction consists of a crossed cube as left-hand side $L$
and replaces it by a kite as right-hand side $R$, as shown in
Fig.~\ref{CR}. Simply speaking the inner cycle is removed, and the
outer 4-cycle is the context. Similarly, an $SR$-reduction consists
of a graph $CS$ with three kites that meet at a vertex $x$ of degree
six as $L$ and it removes $x$ and replaces $CS$ by two adjacent
kites, as shown in Fig.~\ref{SR}. The neighbors of $x$ are the
context.

A graph $G$ reduces to a graph $G'$ if there is an induced subgraph
$H$ of $G$ that matches $L$ and an induced subgraph $H'$ of $G'$
that matches $R$ so that $G' = G-L+R$. The context of $L$ and $R$ is
kept.
 Let $G \rightarrow G'$ if $G$
reduces to $G'$ by    $SR$- or   $CR$, and let ``$\rightarrow^*$
denote the reflexive and transitive closure.
\end{definition}

Schumacher \cite{s-s1pg-86} and  Brinkmann et al.
\cite{bggmtw-gsqs-05}   require that the use of a reduction
preserves the given class. Thereby an application of a reduction is
\emph{constrained}. An infeasible application destroys the
3-connectivity of the underlying planar skeleton of 1-planar
embeddings or introduces multiple edges, which ultimately leads to a
violation of 3-connectivity.

As our first result, we show that the feasibility  of  an $SR$- or
$CR$-reduction can be expressed by local properties of the matched
left hand side graphs and can be checked in constant time.

\begin{definition}
  A vertex $x$  of an optimal 1-planar graph $G$  of degree six is called a
  \emph{candidate}.
A candidate $x$ is \emph{``good''} if there is a feasible
application of a reduction at $x$.

  In case of $SR$, $x$ is
  the center of a subgraph $H(x)$ of $G$ that matches $CS$, and there is
  a  red neighbor $v$, called  a \emph{target},  so that $SR$ can be applied by merging $x$
  with $v$, denoted $SR(x \mapsto v)$.
The application is \emph{feasible} and $SR(x \mapsto v)$ is
  \emph{good} if $G$ does not contain any of the
  edges $\{x_2, x_4\}, \{x_6,x_4\}$ and $\{x_1,
x_4\}$, where the vertices of $H(x)$ and $CR$ are identified and
$v=x_4$. Then $x$ is drawn as a green hexagon and the reduction is
indicated by an arrow.

An edge $\{x_2, x_4\}$ or $ \{x_6,x_4\}$ in $H(x)$ is called a
\emph{blocking red edge} and $ \{x_1,x_4\}$   is called a
\emph{blocking black edge}. Note that a candidate $x$ has three red
neighbors for a feasible $SR$ reduction, namely $x_2, x_4$ and
$x_6$.

  In case of   $CR$   there are four candidates $x_1, x_2, x_3, x_4$
that are matched by the vertices of the inner cycle of $CC$, and
these vertices and their neighbors match $CC$. We denote the use by
$CR(x_1, x_2, x_3, x_4)$. An application is \emph{feasible} if the
edges $\{v_1, v_3\}$ and $\{v_2,  v_4\}$ are missing in the matched
subgraph of $CC$. Then the vertices matched  by $x_1, x_2, x_3, x_4$
are drawn as blue hexagons. If the subgraph matched by $CC$ also
contains an edge $\{v_1, v_3\}$ or $\{v_2,  v_4\}$, then these edges
are \emph{blocking red edges} and $CR(x_1, x_2, x_3, x_4)$ is
infeasible.

Otherwise, a candidate is   \emph{``bad''}   and is drawn orange.
Vertices of degree at least eight are non-candidates  and are drawn
as dark circles.
 \end{definition}

\begin{figure}
   \begin{center}
     \includegraphics[scale=0.40]{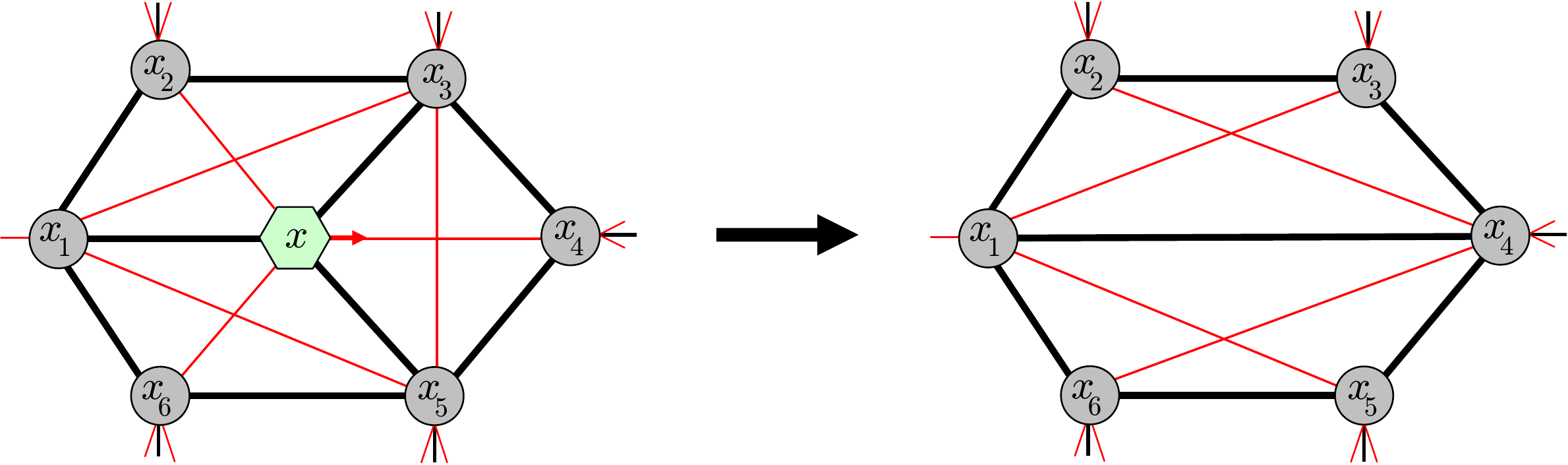}
     \caption{A $SR$-reduction for optimal 1-planar graphs.
     The candidate is drawn as a hexagon and other vertices as circles.
     Planar edges are drawn black and thick and crossing edges red and thin. A good candidate is colored
      green. There are three possible $SR$-reductions at $x$ towards
     its red neighbors,
     $SR(x  \mapsto x_2), SR(x  \mapsto x_6)$ and $SR(x  \mapsto x_4)$. Here, the latter is applied
     which is indicated by the arrow at $x$.
     The
       tiny strokes at the outside indicate further necessary edges. The left
       graph is $CS$ together with its 1-planar embedding.
       \label{SR}}
   \end{center}
\end{figure}

\begin{figure}
   \begin{center}
     \includegraphics[scale=0.33]{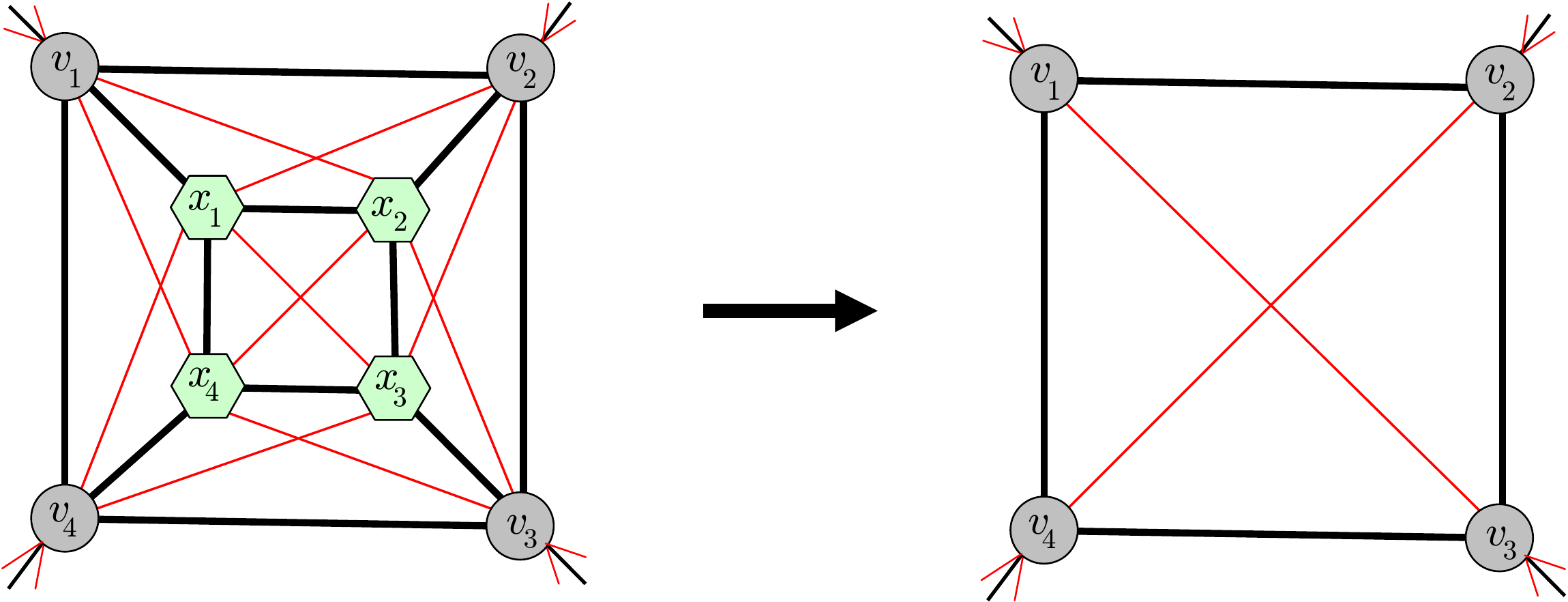}
     \caption{The reduction $CR(x_1, x_2, x_3, x_4)$ for optimal 1-planar
       graphs with candidates $x_1, x_2, x_3, x_4$. The reduction is good and the candidates are colored light
       green
       if there is no   edge $(v_1, v_3)$ or $(v_2, v_4)$ in the outer face. The
       graph on the left is $CC$ together with its embedding.\label{CR}}
   \end{center}
\end{figure}

An infeasible  application of a reduction would introduce a multiple
edge, which is illegal, since the graphs are simple. If there is a
blocking black edge $\{x_1, x_4\}$ and $SR(x \mapsto x_4)$ were
applied, then vertices $x_1$ and $x_4$ are a separation pair for the
planar skeleton after the reduction, which violates the
3-connectivity of the planar skeleton   as imposed by Brinkmann et
al.~\cite{bggmtw-gsqs-05}. Similarly, a blocking red edge $\{x_2,
x_4\}$  and a use of $SR(x \mapsto x_4)$ implies that vertices $x_2$
and $x_4$  are a separation pair for the planar skeleton. The red
blocking edge $\{x_2, x_4\}$ is crossed by some red edge $\{z_1,
z_2\}$ such that there is a planar quadrangle with the vertices
$x_2, x_3, x_4, z_1$ and $x_3$ is connected with $z_1$ inside the
quadrangle. If $SR(x \mapsto x_4)$ were applied, then all planar
paths from $x_3$ or $z_1$ to $x_5$ must pass $x_2$ or $x_4$.
Forthcoming, we assume that the reduction is feasible if $G
\rightarrow G'$ holds.

Surprisingly, an  infeasibility is directly  related to blocking
edges and  can be recognized by  an inspection of the subgraphs
matched by $CS$ and $CC$, respectively.

\begin{lemma} \label{lem:feasibility}
If $G$ is an optimal 1-planar graph and $G \rightarrow G'$ is
 feasible, then $G'$ is an optimal 1-planar graph.
\end{lemma}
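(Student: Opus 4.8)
The plan is to prove the statement straight from the definition: a graph is optimal 1-planar exactly when it is simple, admits a 1-planar drawing, and has $4n-8$ edges. So for $G'$ I would establish three things separately: it is simple, it carries a 1-planar embedding, and its edge count is precisely $4n'-8$. Once these hold, $G'$ is optimal 1-planar by definition, and the 3-connectivity of its planar skeleton comes for free from the characterization, dovetailing with the separation-pair discussion that precedes the lemma. I treat $SR$ and $CR$ separately, since they remove different amounts.

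For the 1-planar embedding I would use local surgery on $\mathcal{E}(G)$. In the $SR$ case the matched copy of $CS$ lies inside the planar hexagon $x_1 x_2 \ldots x_6$, and the center $x$ has no edge to the rest of $G$; I excise the interior (the three kites together with $x$) and glue in the embedding of $R$, whose two kites share the new planar edge $\{x_1,x_4\}$ and whose outer boundary is the same hexagon. Since the six boundary edges and every edge leaving $x_1,\ldots,x_6$ are untouched, and both interiors are 1-planar with each edge crossed at most once, the drawing of $G'$ is 1-planar. The $CR$ case is identical with the quadrilateral $v_1 v_2 v_3 v_4$ as boundary: the inner cube is cut out and a single kite is inserted in its place.

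For the edge count I track what each rule destroys and creates. Merging $x$ into $x_4$ turns $\{x,x_4\}$ into a loop, makes $\{x,x_3\}$ and $\{x,x_5\}$ parallel to existing planar edges, and—the point that is easy to miss—leaves the red diagonal $\{x_3,x_5\}$ of the middle kite without a crossing partner, so it too is deleted; meanwhile $\{x,x_1\},\{x,x_2\},\{x,x_6\}$ reappear as $\{x_4,x_1\},\{x_4,x_2\},\{x_4,x_6\}$. The net effect is $-4$ edges and $-1$ vertex, sending $4n-8$ to $4(n-1)-8$. For $CR$ the inner $4$-cycle, its four rungs, and the five interior kites vanish while two diagonals are added, giving $-16$ edges and $-4$ vertices, again preserving $4n'-8$.

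Simplicity is where feasibility enters, and this is the delicate step I expect to be the main obstacle. The only edges $R$ contributes that are not already in the matched $CS$ are $\{x_4,x_1\}$, $\{x_4,x_2\}$ and $\{x_4,x_6\}$; each of these coincides with an edge of $G-L$ if and only if $G$ contains the blocking black edge $\{x_1,x_4\}$ or one of the blocking red edges $\{x_2,x_4\},\{x_6,x_4\}$, which feasibility forbids. Likewise the two diagonals of a feasible $CR$ are new precisely because $\{v_1,v_3\}$ and $\{v_2,v_4\}$ are absent. The work is to argue that these are the \emph{only} possible coincidences, i.e. that no new edge can clash with an edge running from a boundary vertex into the rest of $G$; this holds because every new edge joins two boundary vertices of the gadget, so any clash would be a chord among $x_1,\ldots,x_6$ (resp. $v_1,\ldots,v_4$) of exactly the listed type. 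Hence feasibility yields a simple, 1-planar graph with $4n'-8$ edges, which is optimal 1-planar.
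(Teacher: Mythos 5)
Your proof is correct, but it takes a genuinely different route from the paper's. You verify the definition of optimal 1-planarity directly: simplicity, a 1-planar embedding obtained by local surgery inside the hexagon (resp.\ the quadrilateral), and the exact count $4n'-8$; and you correctly isolate the crux, namely that the feasibility conditions (absence of $\{x_1,x_4\},\{x_2,x_4\},\{x_6,x_4\}$ for $SR$, of $\{v_1,v_3\},\{v_2,v_4\}$ for $CR$) are precisely the statement that none of the newly created edges collides with an edge already present, and that every new edge is a chord of the gadget boundary so no other collision is possible. Your edge accounting is also right, including the easily missed deletion of $\{x_3,x_5\}$ ($-4$ edges, $-1$ vertex for $SR$; $-16$ edges, $-4$ vertices for $CR$). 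The paper instead spends essentially all of its effort proving that the planar skeleton $\mathcal{P}(\mathcal{E}(G'))$ stays $3$-connected, by an explicit Menger-style construction: three vertex-disjoint planar paths between $x_3$ and $x_5$ (using that the blocking edges are absent to make the third path $x_3,z_1,\ldots,z_{2r+1},x_5$ disjoint from the other two), followed by a rerouting argument for arbitrary pairs $u,v$ whose connecting paths pass through $x$, and an analogous four-path argument for $CR$. You obtain $3$-connectivity of the skeleton ``for free'' from the Bodendiek--Schumacher--Wagner characterization once optimality is established; that is legitimate, since by the paper's definition a simple $1$-planar graph with $4n-8$ edges is optimal, but it invokes that characterization as a black box in the reverse direction from how the paper uses it. What the paper's longer argument buys is an explicit demonstration of \emph{why} the blocking edges are exactly the obstructions --- a blocking edge would turn $\{x_1,x_4\}$ or $\{x_2,x_4\}$ into a separation pair of the skeleton --- which ties the lemma to the separation-pair discussion preceding it and maintains, self-containedly, the invariant (3-connected quadrangulation plus kites) on which the later lemmas rely. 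What your approach buys is brevity and a cleaner logical reduction to the definition.
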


\begin{proof}
We use the fact that optimal 1-planar graphs have a unique embedding
(except for extended wheel graphs) \cite{s-s1pg-86, s-rm1pg-10} and
identify a graph and its 1-planar embedding as well as the graphs of
$CS$ and $CC$ and their matched subgraphs in $G$ and $G'$. Clearly,
the embedding
  of $G'$ is 1-planar if the embedding of $G$ is 1-planar. Also,
$G'$ has $4n'-8$ edges, where $n'$ is the size of $G'$.

It remains to show that the planar skeleton
$\mathcal{P}(\mathcal{E}(G'))$ remains 3-connected.

Suppose that  $SR(x \mapsto x_4)$ is feasibly applied to a candidate
$x$ of $G$ with neighbors $x_1,\ldots, x_6$ in circular order.
First,   there are three vertex disjoint paths from $x_3$ to $x_5$
in $G'$ using only planar (black) edges, $p_1 = (x_3, x_4, x_5)$,
$p_2 = (x_3, x_2, x_1, x_6, x_5)$ and a third path $p_3 = (x_3, z_1,
z_2, \ldots, z_{2r}, z_{2r+1}, x_5)$ for some $r \geq 1$, where $(x,
x_3, z_1, \ldots, z_{2r+1}, x_5)$ are the neighbors of $x_4$ in $G$
in circular order. Since the edges $\{x_1,x_4\}, \{x_2, x_4\}$ and
$\{x_6, x_4\}$ are missing, the vertices $z_1, \ldots, z_{k+1}$ are
distinct from $x_1, x_2$ and $x_6$, and therefore path $p_3$ is
vertex disjoint to $p_1$ and $p_2$. In retrospect, there are four
vertex disjoint paths between $x_3$ and $x_5$ in $G$ using only
planar (black) edges.

Next, consider two vertices $u$ and $v$ with $\{u, v\} \neq \{x_3,
x_5\}$ in $G$ and $G'$. By assumption, there are three vertex
disjoint paths between $u$ and $v$ in $G$ through planar edges.
Suppose that one of them, say $p_2$, passes through vertex $x$. Then
$p_2$ passes through $x_3$ or $x_5$ or both. Otherwise, the same
paths can be taken in $G'$.
Suppose that $p_2$ passes through $x_1, x$ and $x_3$. The other
cases are similar. If $x_2$ is not passed by $p_1$ and $p_3$, then
reroute $p_2$ through $x_2$. Otherwise, reroute $p_2$ through $x_4$,
both in $G$ and $G'$. If $x_4$ is on path $p_3$ (or $p_1$) it uses
the edge $\{x_4, x_5\}$ and is rerouted as follows. Path $p_3$ and
the third path between $x_3$ and $x_5$ meet at some vertex $z_i$.
Then the segment $z_i, \ldots, x_4, x_5$ of $p_3$ is replaced by a
detour through $z_i, \ldots, z_{2r+1}, x_5$. Hence, there are three
vertex disjoint
 paths between $u$ and $v$ in $G'$ using only planar edges,
 and $\mathcal{P}(\mathcal{E}(G'))$
is 3-connected.

Similarly, if $CR(x_1,x_2, x_3, x_4)$ is feasible and $v_1, \ldots,
v_4$ are the vertices of the outer cycle, then there are four vertex
disjoint path through planar edges between   $v_i$ and $v_j$ in $G$,
namely two along the outer cycle, one through vertices of the inner
cycle and a forth outside the subgraph matched by $CC$.  If $v_i$
and $v_j$ are adjacent on the outer cycle, then there is an outer
quadrangle $(v_i, v_j, u, v)$ since there is no blocking red edge.
Otherwise, if $v_i$ and $v_j$ are antipodal, then there is a path
through planar edges $v_i, z_1, \ldots, z_{2r+1}, v_j$ where $v_i,
z_1, \ldots, z_{2r+1}, v_j$ are the neighbors of $v_k$ in circular
order excluding the vertices of the inner cycle and $v_k$ is between
$v_i$ and $v_j$ on the outer cycle, since there is no edge $(v_k,
v_{k+2}$. In consequence, there (at least) three vertex disjoint
paths through planar edges between $v_i$ and $v_j$ in  $G'$.
%
\end{proof}

 We have shown \cite{b-ro1plt-16} that a feasible reduction uniquely determines the
embedding of the matched graph of $CS$ and $CC$ and the feasibility
is obtained from the degree vector.

\begin{definition}
  Let $x$ be a candidate of a graph $G$ and let $H(x)$ be
   the subgraph induced by $x$ and its six neighbors.
   Let $\overrightarrow{H(x)} = (d_1, \ldots, d_7)$ be
  the lexicographically ordered 7-tuple of local degrees restricted to
  $H(x)$, and call $\tau(x) = d_1$   the
  \emph{type} of $x$.
\end{definition}

For example, candidate $x$ of $CS$ in Fig.~\ref{SR} has
$\overrightarrow{H(x)} = (3,3,3,5,5,5,6)$ if there are no additional
edges $\{x_i, x_j\}$ for $i \neq j$ in the outer face.
 A candidate $x$ of the inner cycle of $CC$ and of the cycle of an
extended wheel graph $XW_{2k}$ for $k \geq 4$ has
$\overrightarrow{H(x)} = (4,4,5,5,5,5,6)$.

\begin{lemma}   \label{apply-SR-CR}  \cite{b-ro1plt-16}
  A candidate $x$ of an optimal 1-planar graph is good for $SR$ if and only
  if $\tau(x) = 3$ and $SR(x \mapsto v)$ is good for every vertex $v$ of
  local degree three.

  A candidate $x_1$ of an optimal 1-planar graph is good for $CR$ if and only
  if $\tau(x_1) = 4$ and $x_1$ has three neighbors $x_2, x_3, x_4$
  which are candidates and
 $\overrightarrow{H(x_i)} = (4,4,5,5,5,5,6)$ for $i=1,2,3,4$.
\end{lemma}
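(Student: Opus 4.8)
The plan is to convert the embedding-based definition of feasibility into a test on local degrees, using two facts already available: an optimal $1$-planar graph other than an extended wheel has a unique $1$-planar embedding, and (from the companion paper) a feasible reduction determines the embedding of the matched $CS$ or $CC$ subgraph, so feasibility can be read off the degree vector. Concretely, for each configuration I would first compute the exact local-degree tuple of a candidate, and then show that the blocking edges obstructing feasibility are precisely the edges whose presence raises a neighbour's local degree above the minimum.

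For the $SR$ claim I would start from the $CS$ picture of Fig.~\ref{SR}: with $x$ the centre of three kites and neighbours $x_1,\dots,x_6$ in circular order, each red neighbour $x_2,x_4,x_6$ is joined inside $H(x)$ only to $x$ and to its two planar kite partners, giving local degree $3$, while each black neighbour $x_1,x_3,x_5$ sees five edges and $x$ sees six; hence $\overrightarrow{H(x)}=(3,3,3,5,5,5,6)$ and $\tau(x)=3$. The decisive observation is that, for a target $v=x_4$, the three candidate blocking edges $\{x_2,x_4\},\{x_6,x_4\},\{x_1,x_4\}$ are the only edges of $G$ that can join $v$ to a vertex of $H(x)$ beyond its three guaranteed ones, so the local degree of $v$ equals $3$ plus the number of blocking edges at $v$. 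Thus $SR(x\mapsto v)$ is good exactly when $v$ has local degree $3$, which identifies the good targets with the minimum-degree neighbours and makes the clause ``good for every $v$ of local degree three'' automatic. For the converse direction I would appeal to embedding uniqueness to argue that a degree-six candidate possessing a neighbour of local degree three is forced to have $H(x)$ match $CS$, so $\tau(x)=3$ already yields a good target and hence the goodness of $x$.

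For the $CR$ claim the same bookkeeping on $CC$ (Fig.~\ref{CR}) shows that the four inner vertices form a kite (a $K_4$ with the two crossing diagonals), that each has tuple $(4,4,5,5,5,5,6)$ so $\tau(x_1)=4$, and that its three kite partners are candidates adjacent to it; feasibility is the absence of the outer diagonals $\{v_1,v_3\},\{v_2,v_4\}$, whose presence I would again read off as an increase in the relevant local degrees. The delicate point is to separate $CC$ from the cycle of an extended wheel, since both yield the tuple $(4,4,5,5,5,5,6)$. I would settle this by the inner $K_4$: in $CC$ the three candidate neighbours of $x_1$ are mutually adjacent, whereas for $k\ge 4$ a cycle vertex $v_i$ of $XW_{2k}$ has as its candidate neighbours only the four cycle vertices $v_{i\pm 1},v_{i\pm 2}$ (the poles having degree $2k\ge 8$), and a short distance count on the cycle shows that no three of these are pairwise adjacent, so no such kite exists and the extended-wheel configuration is excluded. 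Consistently, the missing adjacencies correspond to absent distance-$3$ chords, while the distance-$2$ chords that are present play the role of blocking diagonals, matching the irreducibility of extended wheels.

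The main obstacle is the structural step underlying both converses: showing that the local-degree tuple alone forces $H(x)$, carrying the unique $1$-planar embedding, to match $CS$ or the inner kite of $CC$. This is exactly where I would rely on the classification of degree-six neighbourhoods in optimal $1$-planar graphs and on the cited fact that feasibility is recoverable from the degree vector; once that is in hand, everything else is the edge-by-edge accounting of which neighbour edges are present and how each contributes one unit to a local degree.
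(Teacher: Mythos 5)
The paper does not actually prove Lemma~\ref{apply-SR-CR}: the statement is imported wholesale from the companion paper \cite{b-ro1plt-16}, and the only supporting material in this paper is the remark that a feasible reduction determines the embedding of the matched $CS$ or $CC$ and that feasibility can be read off the degree vector, plus the example values $\overrightarrow{H(x)}=(3,3,3,5,5,5,6)$ for $CS$ and $(4,4,5,5,5,5,6)$ for the inner cycle of $CC$ and for cycle vertices of $XW_{2k}$, $k\ge 4$. So there is no in-paper proof to compare against; judged on its own, your reconstruction is sound and consistent with those stated values. Your central accounting is correct: every candidate is the centre of three kites, so $H(x)$ always contains $CS$; black neighbours therefore have local degree at least $5$; and a red neighbour such as $x_4$ has local degree $3$ plus the number of the edges $\{x_1,x_4\},\{x_2,x_4\},\{x_6,x_4\}$ present in $G$, which are by definition exactly the blocking edges of $SR(x\mapsto x_4)$ --- hence ``local degree $3$'' and ``good target'' coincide and both directions of the $SR$ claim follow. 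Two points deserve emphasis. First, you correctly isolate the one genuine difficulty in the $CR$ claim, namely that the tuple $(4,4,5,5,5,5,6)$ does not separate an inner vertex of $CC$ from a cycle vertex of $XW_{2k}$, and your fix (require the three candidate neighbours to be pairwise adjacent, which holds for $x_2,x_3,x_4$ in $CC$ but fails on the cycle of $XW_{2k}$ since the candidate neighbours of $v_i$ induce only the path $v_{i-2},v_{i-1},v_{i+1},v_{i+2}$) is correct; note, however, that this is strictly more than the lemma literally states and is really a repair of the statement, because a cycle vertex of $XW_{2k}$ with $k\ge 4$ does have three candidate neighbours carrying the stated tuple. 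Second, the load-bearing step you defer --- that a $K_4$ of candidates must be embedded as a kite and then forces the full $CC$ configuration, so that the local degree data really recovers the embedding --- is precisely the content of the cited companion result; your proof is therefore non-self-contained at exactly the point where the paper itself is, and to close it one should add the short derivation that each vertex of such a $K_4$ has planar degree $3$, hence exactly one further planar neighbour $v_i$, after which the quadrangulation property assembles the outer $4$-cycle of $CC$.
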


The existence of a good candidate is granted unless all candidates
are bad, as in an extended wheel graph, or if the graph is not
optimal 1-planar.

\begin{lemma} \label{lem:existence} If $G$ is an optimal 1-planar graph, which is
  not an extended wheel graph,
  and $C$ is a separating $4$-cycle which partitions $G-C$ into
  $G_{in}$ and $G_{out}$,
  then there is a    good candidate  in $G_{in}$ (and   in $G_{out}$).
\end{lemma}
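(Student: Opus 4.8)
The plan is to argue by induction on $|V(G_{in})|$ and to reduce everything to the case where the inner component is inclusion-minimal. Write $C=(c_1,c_2,c_3,c_4)$ and let $\widehat{G}=G_{in}+C$ be the closed inner component, augmented by a crossing pair $\{c_1,c_3\},\{c_2,c_4\}$ placed in its new outer face; by the correspondence between optimal 1-planar graphs and $3$-connected planar quadrangulations \cite{bsw-1og-84}, $\widehat{G}$ is again optimal 1-planar. If some separating $4$-cycle $C'$ of $G$ has its interior $G'_{in}\subsetneq G_{in}$, then the induction hypothesis applied to $C'$ already yields a good candidate in $G'_{in}\subseteq G_{in}$. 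Hence it suffices to treat the base case in which no separating $4$-cycle lies strictly inside $C$; then $\widehat{G}$ has no separating $4$-cycle and is therefore $5$-connected.

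The engine of the proof is a localization principle. Since $C$ separates $G$, every neighbor of a vertex $x\in G_{in}$ lies in $G_{in}\cup C$, so $H(x)$ sits inside $\widehat{G}$; and since the only edges of $\widehat{G}$ missing from $G$ are the two added diagonals, while no $c_i$ lies in $G_{in}$, we get $\deg_G(x)=\deg_{\widehat{G}}(x)$, and $H_G(x)$ arises from $H_{\widehat{G}}(x)$ by deleting those of the two diagonals whose endpoints are both neighbors of $x$. Deleting these edges can only lower local degrees, hence can only push $\tau(x)$ toward the good values $3$ and $4$, and it can never create a blocking edge. A degree count shows each $c_i$ has at least six neighbors in each of the two closed components, and discounting the two shared cycle edges and the absent diagonal yields $\deg_G(c_i)\ge 8$; so the four vertices of $C$ are non-candidates in $G$, and any good candidate we locate in $G_{in}\cup C$ automatically lies in $G_{in}$.

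Now I would split on whether the $5$-connected graph $\widehat{G}$ is an extended wheel. If $\widehat{G}\neq XW_{2k}$, Proposition~\ref{prop:properties}(1) supplies a good $SR$-candidate of $\widehat{G}$, which I would relocate into the interior and re-verify in $G$ via the localization principle. If $\widehat{G}=XW_{2k}$, all of its candidates are bad, so goodness in $G$ must be manufactured by deleting a diagonal, and here I would exhibit the candidate explicitly. For $k\ge 4$, the facial quadrilateral $C$ of $XW_{2k}$ has one diagonal joining a pole to the opposite rim vertex; each of the two rim vertices $y$ in $G_{in}$ adjacent to both endpoints of this diagonal has $\overrightarrow{H(y)}=(4,4,5,5,5,5,6)$ in $XW_{2k}$, and since that diagonal is absent in $G$ the type of $y$ drops to $3$, so by Lemma~\ref{apply-SR-CR} $y$ is a good $SR$-candidate of $G$. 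For $k=3$, where $\widehat{G}=XW_6$ is the crossed cube, the four interior vertices form the opposite cube face with $\overrightarrow{H(\cdot)}=(5,5,5,5,5,5,6)$; with both diagonals of $C$ absent in $G$ each drops to $(4,4,5,5,5,5,6)$, and by Lemma~\ref{apply-SR-CR} they constitute a good $CR$-candidate. In either case Lemma~\ref{lem:feasibility} keeps the reduced graph optimal 1-planar.

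The main obstacle is the non-extended-wheel base case: one must guarantee that a good $SR$-candidate can be placed strictly inside $C$ and that it survives deletion of the diagonals. Because the four vertices of $C$ are non-candidates in $G$, Schumacher's candidate fails only if it sits on $C$; I would exclude this using his structural description of the degree-six subgraph of a reducible $5$-connected graph as a disjoint union of paths and $3$-stars (a single cycle only for the extended wheels), which furnishes good candidates away from the at most four vertices of $C$. Survival is then controlled by the monotonicity in the localization principle: for any target $v$, the blocking edges of $SR(x\mapsto v)$ in $G$ form a subset of those in $\widehat{G}$, so a good candidate of $\widehat{G}$ that meets no diagonal stays good, while one meeting a diagonal can only move toward a good type with no new block. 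The residual work — verifying the three blocking-edge conditions for the explicit $SR$- and $CR$-candidates in the extended-wheel cases — is routine once the type is fixed by Lemma~\ref{apply-SR-CR}.
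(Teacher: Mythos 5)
Your overall strategy---pass to the closed inner component $\widehat{G}=G_{in}+C$, reduce by induction to the case where $\widehat{G}$ is $5$-connected, and transfer a good candidate of $\widehat{G}$ back to $G$ via monotonicity of types and blocking edges under deletion of the diagonals---is a genuinely different route from the paper's, which simply invokes Lemma~4 of Brinkmann et al.~\cite{bggmtw-gsqs-05}: that lemma already asserts a good candidate \emph{strictly inside} the innermost (or outermost) separating $4$-cycle of the $3$-connected planar quadrangulation, and the one-to-one correspondence with optimal 1-planar graphs transfers it. The problem is that the one step you label ``the main obstacle'' is exactly the content of that cited lemma, and your proposed resolution does not close it. In the generic case $\widehat{G}\neq XW_{2k}$, Schumacher's theorem (Proposition~\ref{prop:properties}(1)) only guarantees \emph{some} good $SR$-candidate of $\widehat{G}$; nothing you say prevents every good candidate of $\widehat{G}$ from lying on $C$. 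The $c_i$ may well have degree six in $\widehat{G}$ (even though they have degree at least eight in $G$), so they are legitimate candidates of $\widehat{G}$ that are useless for $G$, and $SR(c_i\mapsto c_{i+2})$ along the added diagonal can perfectly well be feasible in $\widehat{G}$. The fact that the degree-six vertices of a reducible $5$-connected graph induce a forest of paths and $3$-stars says nothing about which of those vertices are \emph{good} candidates or where they sit relative to the four vertices of $C$, so it does not ``furnish good candidates away from $C$''. Without an argument here the base case is unproved.

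Two smaller points. Your localization principle tacitly assumes the diagonals $\{c_1,c_3\},\{c_2,c_4\}$ are not edges of $G$; a priori such an edge could exist in $G$ drawn in the region of $G_{out}$, in which case the explicit type computations in the extended-wheel case (the type dropping from $4$ to $3$, or from $5$ to $4$ for the $CR$-candidate) fail. This can be excluded, but it is an argument, not an observation. Likewise, the reduction to the base case needs ``no separating $4$-cycle of $G$ strictly inside $C$'' to imply ``$\widehat{G}$ is $5$-connected'', i.e.\ that a separating $4$-cycle of $\widehat{G}$ yields one of $G$ interior to $C$; this is plausible but asserted rather than shown. If you want a self-contained proof, the honest repair is to work in the planar skeleton and reprove (or quote) Brinkmann et al.'s Lemma~4; as written, your argument re-derives everything around its crux while leaving the crux itself open.
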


\begin{proof}
  According to Brinkmann et al. \cite{bggmtw-gsqs-05} there is a good
  candidate $x$ for their $P_1$-~and $P_3$-reductions (or expansions) on
  3-connected planar quadrangulations unless the graph is a double-wheel graph.
  In Lemma 4 they prove that a good candidate lies in  the innermost
  (or outermost) separating $4$-cycle. By the one-to-one correspondence
  between  3-connected planar quadrangulations and optimal 1-planar graphs,
  this transfers to optimal 1-planar graphs.
\end{proof}

From Lemmas \ref{lem:feasibility}-\ref{lem:existence} we obtain a
simple quadratic-time recognition algorithm for optimal 1-planar
graphs, which searches the graph for a good candidate, checks the
feasibility of a reduction,   applies it, and thereby removes one or
four vertices. Finally, it checks whether the obtained graph is an
extended wheel graph. The search for a good candidate has been
improved by some book-keeping technique such that there is an
asymptotically optimal algorithm for the recognition of optimal
1-planar graphs.

\begin{proposition} \label{thm:quadratic} \cite{b-ro1plt-16}
There is a linear-time recognition algorithm for optimal 1-planar
graphs.
\end{proposition}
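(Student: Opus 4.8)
The plan is to turn the quadratic-time procedure already implicit in Lemmas \ref{lem:feasibility}--\ref{lem:existence} into a linear-time one by replacing its repeated global scans with incremental book-keeping. The procedure itself is simple: repeatedly find a good candidate, apply the corresponding feasible $SR$- or $CR$-reduction, remove one or four vertices, and halt when no candidate is good, whereupon a linear-time test decides whether the remaining graph is an extended wheel graph. Its correctness rests entirely on the results at hand. By Lemma \ref{lem:feasibility} every feasible reduction stays inside the optimal 1-planar graphs, and by Proposition \ref{prop:properties}(3) the inverse expansions generate exactly the optimal 1-planar graphs from the extended wheel graphs; hence the input is optimal 1-planar if and only if it reduces to some $XW_{2k}$, which the algorithm certifies by reconstructing a 1-planar embedding along the reverse expansions. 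That the process never gets stuck on an optimal 1-planar input follows from Lemma \ref{lem:existence} together with the 5-connected case of Proposition \ref{prop:properties}(1): a good candidate exists unless the current graph is already an extended wheel graph. The only source of super-linear cost is that each of the $\Theta(n)$ iterations rescans the whole graph for a good candidate.

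To eliminate the rescans I would maintain a worklist (a queue or a stack) of the currently good candidates and update it only locally. The enabling fact is Lemma \ref{apply-SR-CR}: goodness of a vertex $x$ is decided by the local degree vector $\overrightarrow{H(x)}$ of $x$ and its six neighbours alone, so it is a purely local, embedding-independent property. A single preprocessing pass checks $|E| = 4n-8$, computes all degrees and all local degree vectors of the degree-six vertices, sets up the rotation system needed for constant-time local navigation, and seeds the worklist with every good candidate; since $\sum_v \deg(v) = 2(4n-8) = O(n)$ this costs $O(n)$. The main loop then pops a candidate, re-verifies that it is still good (lazy deletion, because the side effects of earlier reductions may have invalidated it), applies the reduction in constant time, and re-examines only the bounded neighbourhood of the reduction site, pushing any newly good candidate onto the worklist.

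The heart of the analysis, and the step I expect to be the main obstacle, is to show that each reduction incurs only $O(1)$ work. A reduction changes only a bounded number of edges around its site, and by Lemma \ref{apply-SR-CR} the status of a vertex $u$ can change only when some edge inside $H(u)$ is inserted or deleted; hence only vertices in a bounded neighbourhood of the site are affected. The delicate point is that degrees are unbounded in optimal 1-planar graphs---the poles have degree $2k$---so one must argue that the affected candidates, whose degree is exactly six, are only constantly many and that each relevant neighbourhood $H(u)$ can be inspected in constant time. I would settle this by keeping the rotation system up to date and reading off the $O(1)$ adjacencies required for $\overrightarrow{H(u)}$ directly from it, so that each status re-check is constant time and each reduction pushes only $O(1)$ new entries; stale entries are discarded in $O(1)$ when popped.

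Combining these observations gives the bound. There are $O(n)$ reductions because each deletes at least one vertex; with lazy deletion the total number of pushes and pops on the worklist is $O(n)$, and every reduction together with its local updates costs $O(1)$ amortized; the preprocessing and the final extended-wheel-graph test each cost $O(n)$. Therefore the whole recognition runs in $O(n)$ time, which is asymptotically optimal, proving Proposition \ref{thm:quadratic}.
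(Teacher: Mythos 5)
Your overall plan is exactly the one this paper sketches: Lemmas~\ref{lem:feasibility}--\ref{lem:existence} give the quadratic reduce-and-test procedure, and the linear bound is obtained by ``some book-keeping technique'' for the candidate search. Note, however, that the paper does not actually prove Proposition~\ref{thm:quadratic} here --- it is imported wholesale from \cite{b-ro1plt-16} --- so what you are really being asked to supply is precisely the book-keeping argument, and that is where your proposal has a genuine gap rather than a proof.

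The unestablished linchpin is the claim that each reduction costs $O(1)$ amortized, and you resolve it by appeal to a rotation system. This is circular for a \emph{recognition} algorithm: the input is an abstract graph with no embedding, and the paper's entire point is that feasibility must be decided embedding-independently via the degree vectors $\overrightarrow{H(u)}$ of Lemma~\ref{apply-SR-CR}; you cannot ``set up the rotation system'' in preprocessing before you know the graph is optimal 1-planar. Even setting that aside, a rotation system does not give $O(1)$ adjacency queries: deciding whether a blocking edge such as $\{x_1,x_4\}$ is present requires searching the adjacency (or rotation) list of $x_1$ or $x_4$, and both may have degree $\Theta(n)$ (they may be poles), so each evaluation of $\overrightarrow{H(u)}$ is not obviously constant time. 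Finally, your assertion that each reduction changes the status of only $O(1)$ candidates needs a structural argument you do not give: a vertex $u$ can change status exactly when an edge with both endpoints in $N(u)\cup\{u\}$ is inserted or deleted, so you must bound the number of degree-six common neighbours of the endpoints of each modified edge, and to \emph{find} the newly good candidates you must enumerate those common neighbours without scanning a high-degree adjacency list. These are the actual technical contents of the cited linear-time result, and until they are supplied your argument only re-derives the quadratic bound that the paper already states as the easy consequence of Lemmas~\ref{lem:feasibility}--\ref{lem:existence}. (Your treatment of correctness on non-optimal inputs --- certifying each step as the inverse of a feasible expansion so that Proposition~\ref{prop:properties}(3) applies --- is fine.)
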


\begin{example} \label{ex:example1}
For an explanation of the reductions consider the input graph
$G_{16}$ as shown in Fig.~\ref{fig:G16} with a 1-planar embedding.
The graph does not have a separating 4-cycle and thus is 5-connected
and even 6-connected \cite{s-s1pg-86}. Initially, vertices $d, e, h,
i, j, n, o$ are good candidates for an $SR$-reduction, and the good
reductions are indicated by a red arrow. In this example, a
candidate is good for an $SR$-reduction if it has two black
neighbors of degree at least eight. Vertices $c$ and $m$ are bad
candidates, since at least two black neighbors have degree six and
there are red blocking edges for $SR(x \mapsto v)$ and each red
neighbor.

If we first apply $SR(h \mapsto i)$, then there is a separating
4-cycle $(a,b,i,g)$ and $c,d,e,f$ are good candidates for a
$CR$-reduction, as shown in Fig.~\ref{fig:G16hi}. Vertex $l$ becomes
a new good candidate and vertex $i$ has degree 8 and is no longer a
candidate. Figs.~\ref{fig:G16hi-nl} and ~\ref{fig:G16hi-nl-kj} show
the reductions  $SR(n \mapsto l)$ and $SR(k \mapsto j)$. Then $g$
has only degree six and $\{a,i\}$  is a red blocking for $CR$.
Therefore, vertices   $c,d,e,f$ change from good (green) to bad
(orange).  The reduction $SR(j \mapsto g)$ in
Fig.~\ref{fig:G16hi-nl-kj-jg} is like an  undo for $c,d,e,f$ , which
are removed by $CR(c,d,e,f)$ and result in the minimum extended
wheel graph $XW_6$.

An alternative reduction uses   $SR(d \mapsto i), SR(n \mapsto l),
SR(h \mapsto k), SR(g \mapsto j), SR(k \mapsto m)$, and finally
$SR(l \mapsto f)$ (or $SR(j \mapsto o)$. Then all intermediate
graphs  are 5-connected and $XW_8$ is the final result.
\end{example}

\begin{figure}
  \centering
  \subfigure[A 1-planar embedding of $G_{16}$. Good candidates are drawn as
      a hexagon  and a good  $SR$-reduction $SR(x \mapsto v)$
      is indicated by an arrow. Bad candidates are colored orange.
      Non-candidates of degree at least $8$ are drawn as
      circles.]{
     \parbox[b]{\textwidth}{%
       \centering
      \includegraphics[scale=0.33]{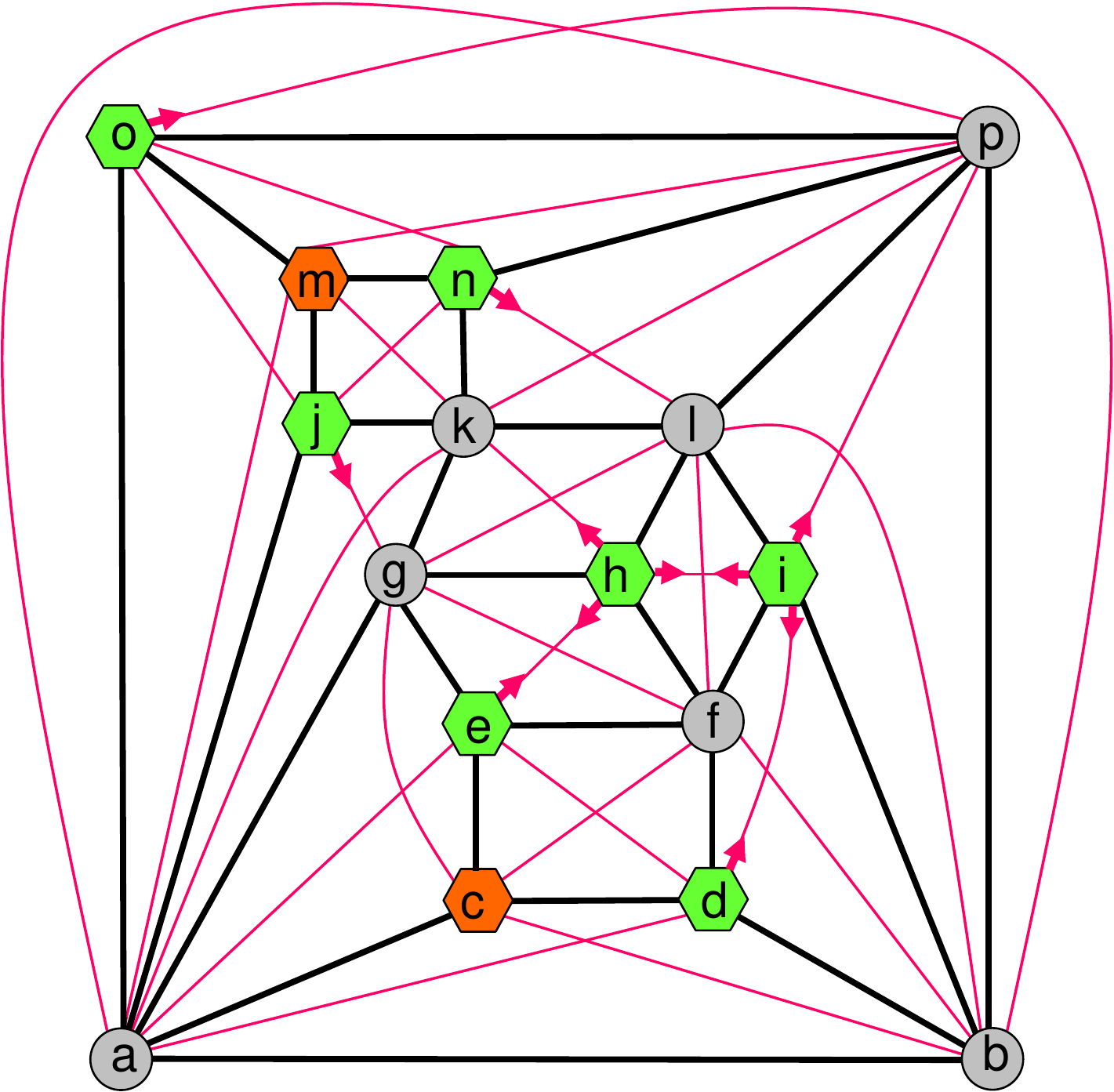}
    }
    \label{fig:G16}
  }
\end{figure}

\begin{figure}
  \centering
  \subfigure[Graph $G_{16}$ after $SR(h \mapsto i)$]{
    \parbox[b]{5.5cm}{%
      \centering
      \includegraphics[scale=0.28]{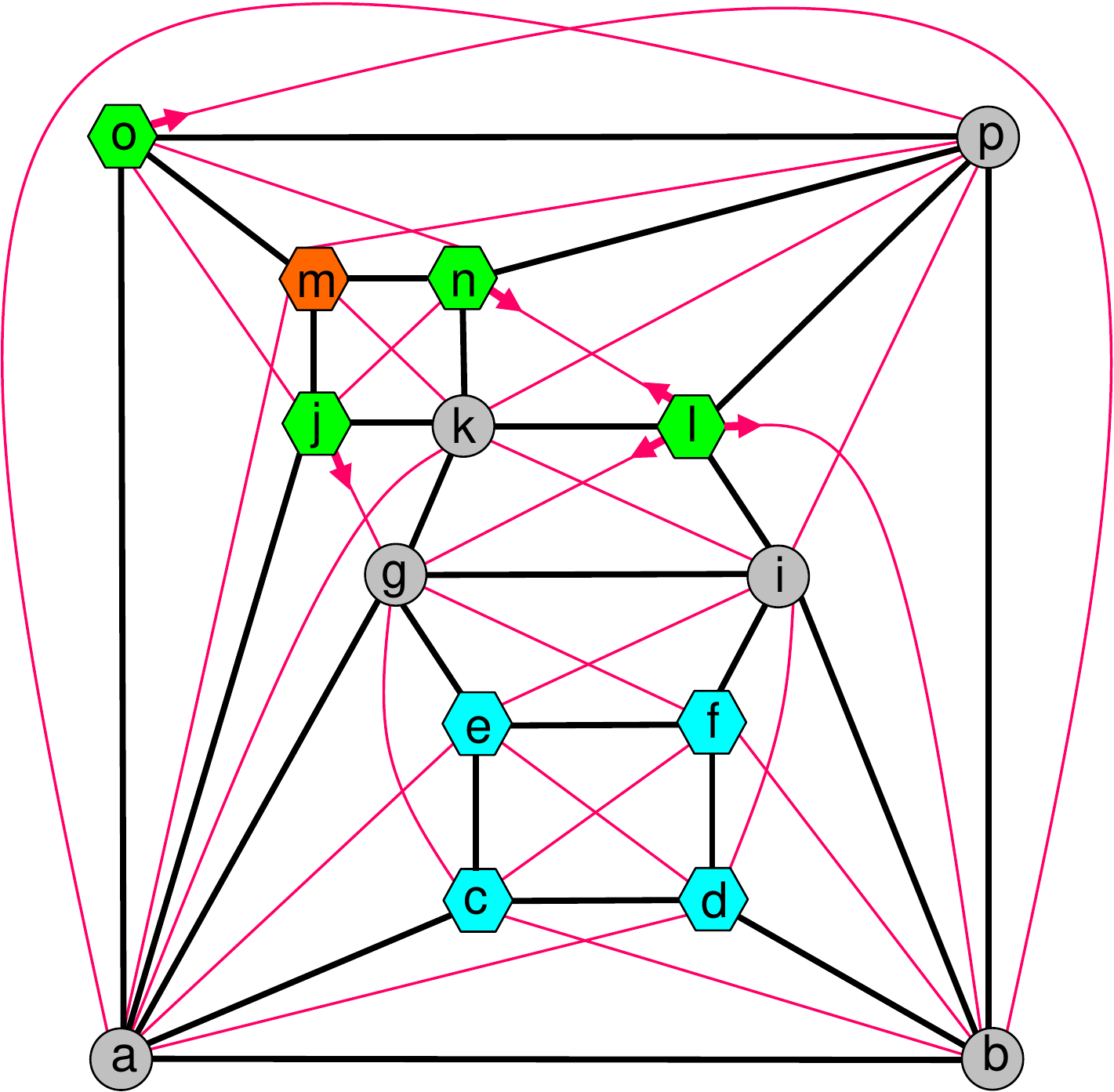}
    }
    \label{fig:G16hi}
  }
  \hfil
  \subfigure[and  after $SR(n \mapsto l)$]{
    \parbox[b]{5.0cm}{%
      \centering
      \includegraphics[scale=0.28]{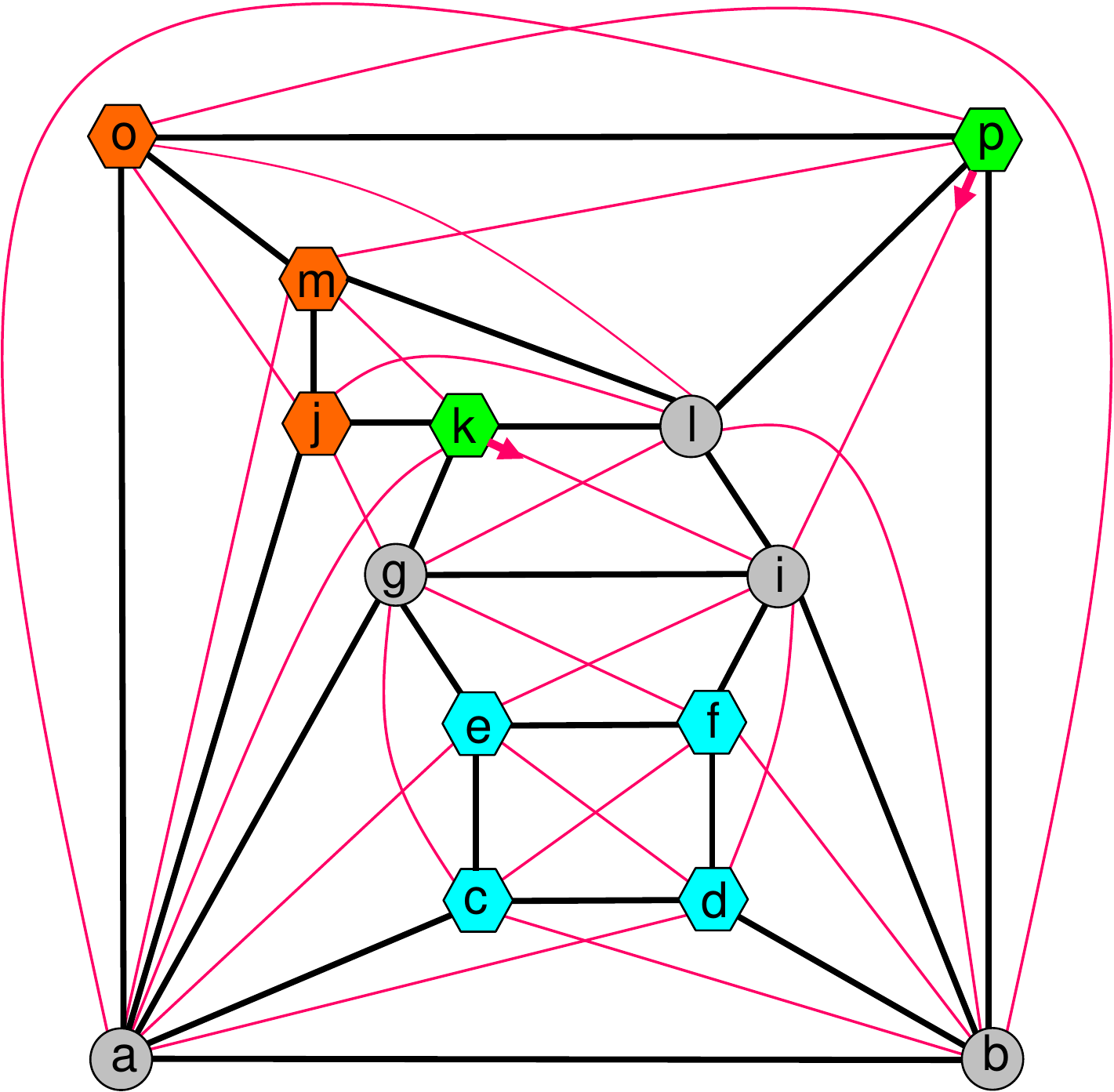}
    }
    \label{fig:G16hi-nl}
  }
  \subfigure[and  after $SR(k \mapsto i)$. ]{
    \parbox[b]{5.5cm}{%
      \centering
      \includegraphics[scale=0.28]{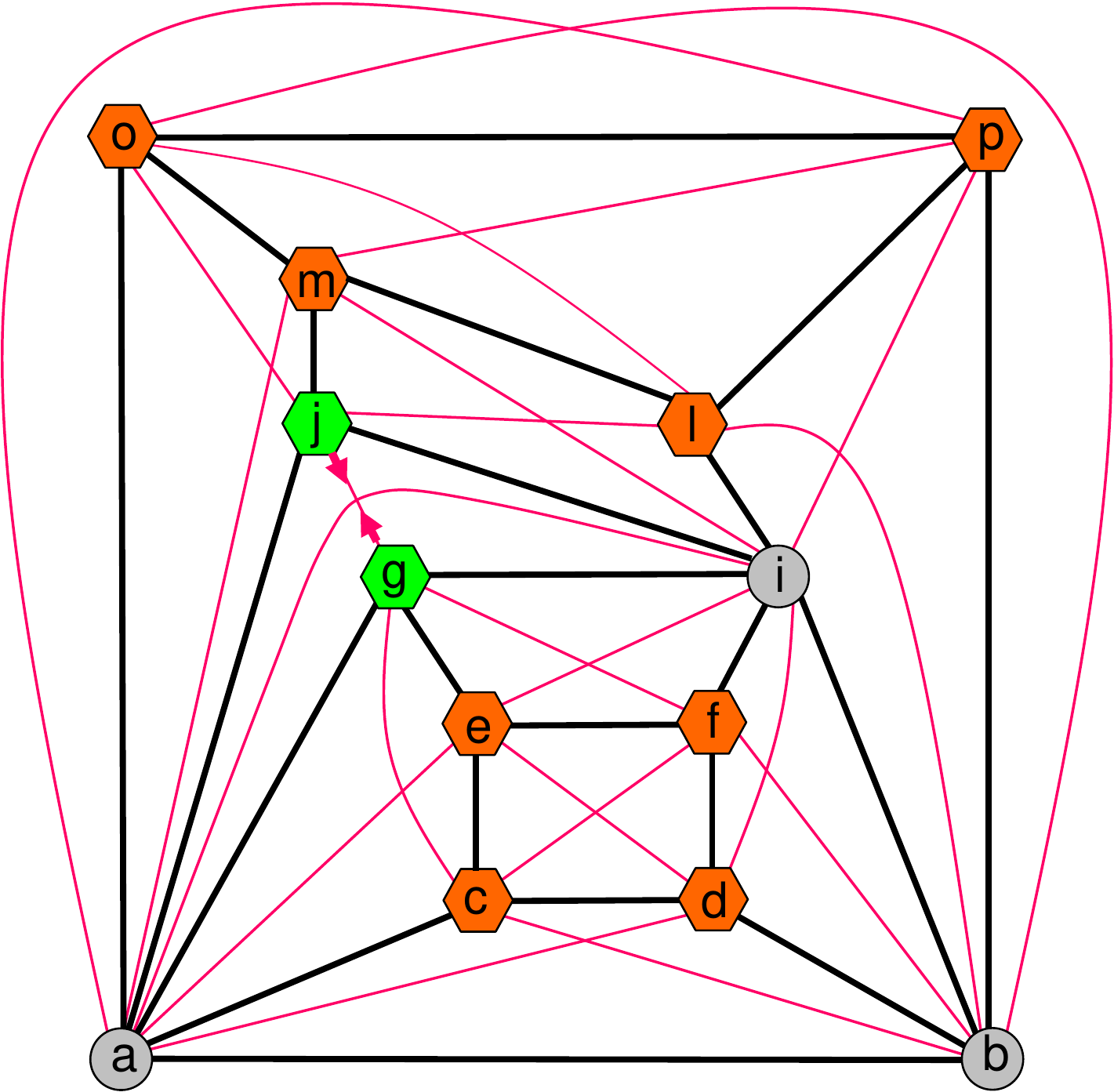}
    }
    \label{fig:G16hi-nl-kj}
  }
\subfigure[and after $SR(j \mapsto g)$. ]{
    \parbox[b]{5.5cm}{%
      \centering
      \includegraphics[scale=0.28]{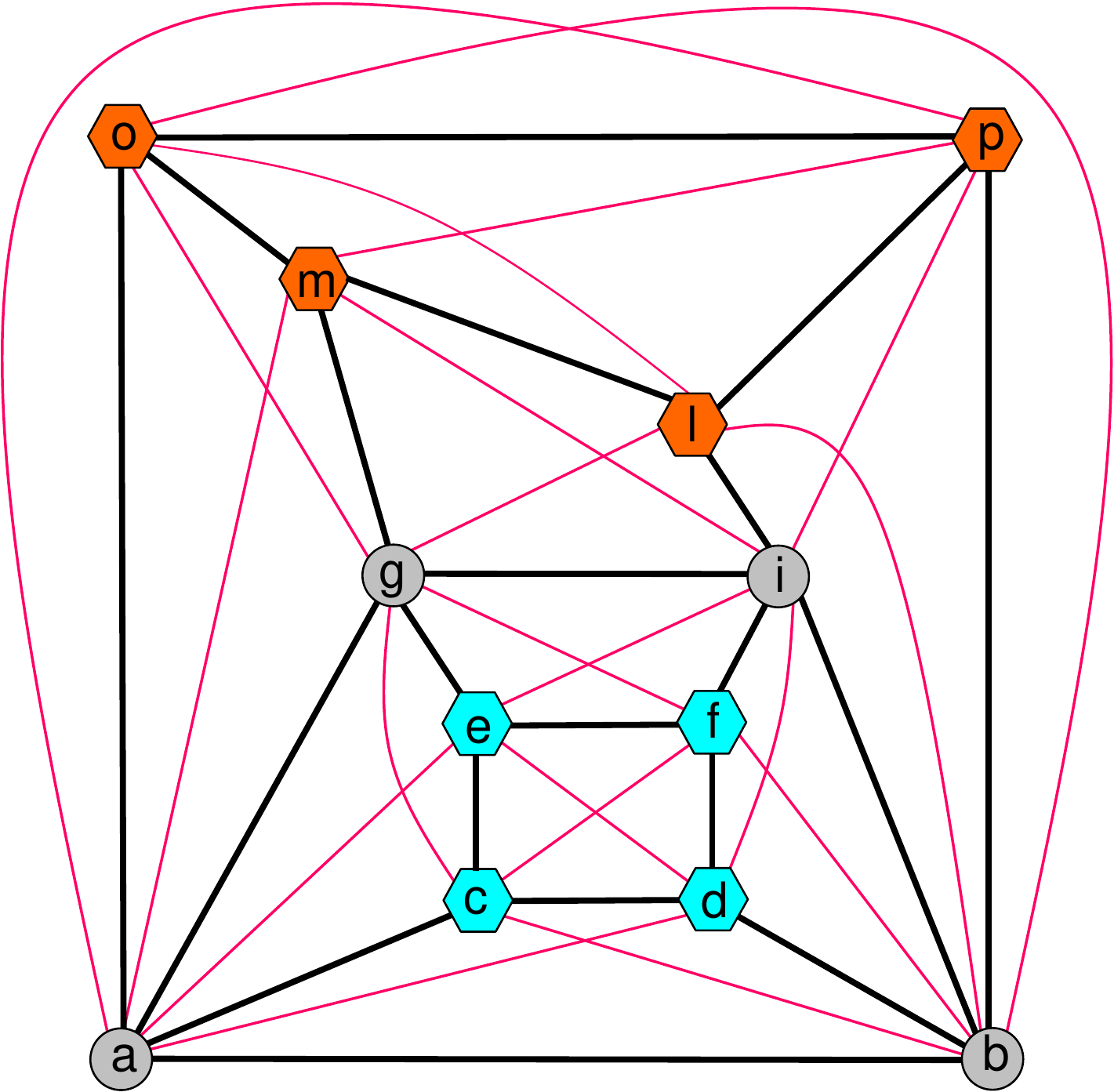}
    }
    \label{fig:G16hi-nl-kj-jg}
  }
  \subfigure[and $XW_6$ after $CR(c,d,e,f)$. ]{
    \parbox[b]{5.5cm}{%
      \centering
      \includegraphics[scale=0.28]{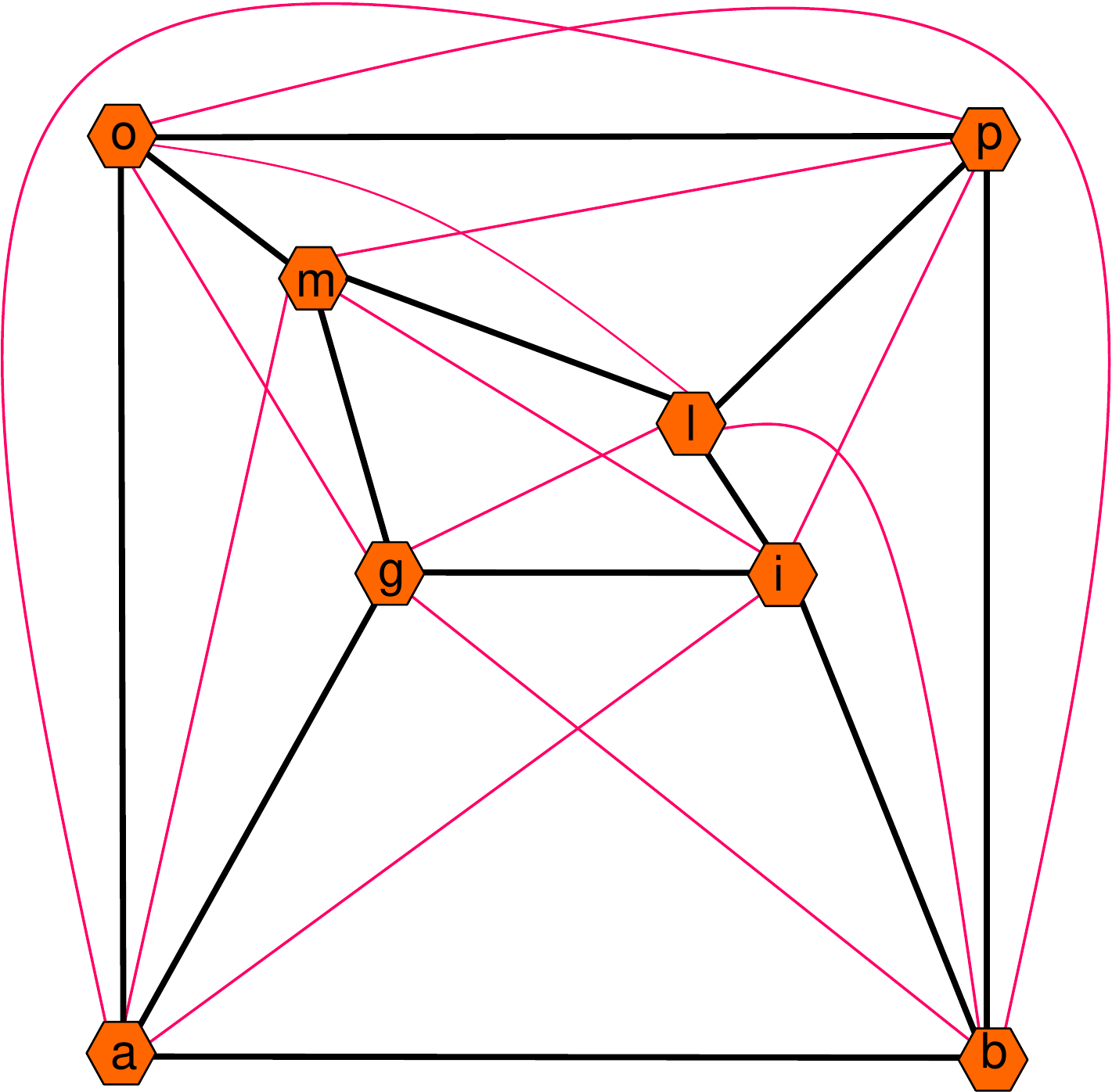}
    }
    \label{fig:G16hi-nl-XW6}
  }
  \caption{A reduction of an input graph to $XW_6$.}
  \label{fig:reduce1-G16}
\end{figure}

Example \ref{ex:example1} shows that the reductions have side
effects.
 A reduction  removes some candidate(s) and may introduce   new ones.  If
an optimal 1-planar graph has several good candidates, then the use
of a reduction may turn some good candidates to bad ones, and vice
versa. In  Example \ref{ex:example1}, this effect is due to the
alternation of the degree of some vertices. If $SR(x \mapsto x_4)$
is applied  as in Fig.~\ref{SR}, then the degree of $x_4$ increases
by two and may change $x_4$ from a candidate to a non-candidate. Bad
candidates in the neighborhood of $x_4$ may turn from bad to good
including four candidates for an application of $CR$. The degree of
$x_3$ and $x_5$ decreases by two and they become a candidate if
their degree changes from eight to six. Simultaneously, good
candidates in their neighborhood from may turn to bad. Similarly, a
$CR$-reduction decreases the degrees of the vertices on the outer
cycle by two which  become a candidate if their degree was eight
before the $CR$-reduction.

Example \ref{ex:example1} also shows that  $SR$-reductions may
destroy 5-connectivity which sheds new light on the results of
Schumacher \cite{s-s1pg-86} as stated in
Thm.~\ref{thm:SRproperties}.

\section{Characterization} \label{charact}

Next, we study combinatorial properties of the graph reduction
systems with the sets of rules  $\{SR, CR\}$ and $\{SR\}$. Unless
otherwise stated, both rules can be used.

A \emph{pre-extended wheel graph} $G$ is an optimal 1-planar graph
such that a single reduction   results in an extended wheel graph.
The respective set is denoted by $preXW$. In particular, a graph is
a $SXW_{2k}$ if $G \rightarrow XW_{2k}$ by an $SR$-reduction and a
$CXW_{2k}$ if $G \rightarrow XW_{2k}$ by a $CR$-reduction. In other
words, $preXW = \cup_{k \geq 4} \{SR^{-1}(XW_{2k}) \, \cup \,
CR^{-1}(XW_{2k}) \} \cup \{CXW_6\}$. Here, $SR^{-1}$ and $CR^{-1}$
are the inverse of $SR$ and $CR$, respectively. Pre-extended wheel
graphs $SXW_{2k}$  were used by Schumacher \cite{s-s1pg-86} for his
reductions of 5-connected optimal 1-planar graphs to $XW_8$. Graph
$CXW_{6}$ is shown in Fig.~\ref{fig:G16hi-nl-kj-jg}.

Note that the minimum extended wheel graph $XW_6$ cannot be obtained
from a pre-extended wheel graph by an $SR$-reduction, since all
vertices of $XW_6$ have degree six and an $SR$-reduction introduces
a vertex of degree at least eight. In addition,  such a graph would
have nine vertices, but there is no optimal 1-planar graph with nine
vertices \cite{bsw-1og-84}. Hence, there is a $CR$-reduction if $ G
\rightarrow XW_6$.

Pre-extended wheel graphs are characterized as follows:

\begin{lemma} \label{lem:charpreXQ}
  A pre-extended wheel graph $CXW_{2k}$ is obtained from $XW_{2k}$ by the
  extraction of a pair of crossing edges in a quadrilateral face with one
  pole and three vertices on the cycle of $XW_{2k}$ and the insertion of
  $CC$. The vertices of the outer cycle of $CC$ and the quadrilateral are
  identified.

  A graph $SXW_{2k}$ is obtained from $XW_{2k}$ by replacing a planar edge $(p,
  v_j)$ and the crossing edges $(p, v_{j-1})$ and $(p, v_{j+1})$ between a
  pole $p$ and three consecutive vertices $v_{j-1}, v_j, v_{j+1}$ on the
  cycle of $XW_{2k}$ by $CS$ and identifying the sequence of vertices
  $(v_j, v_{j+1}, v_{j+2}, p, v_{j-2}, v_{j-1})$ with the vertices on the
  cycle of $CS$.
\end{lemma}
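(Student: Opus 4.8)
The plan is to unfold the definition of a pre-extended wheel graph and read the structure off the unique $1$-planar embedding of $XW_{2k}$. By definition $CXW_{2k} \rightarrow XW_{2k}$ by a single $CR$-reduction and $SXW_{2k} \rightarrow XW_{2k}$ by a single $SR$-reduction, so $CXW_{2k}$ and $SXW_{2k}$ are obtained from $XW_{2k}$ by the inverse expansions $CR^{-1}$ and $SR^{-1}$, which respectively replace a kite by a crossed cube $CC$ and replace two adjacent kites by $CS$. Since the embedding of $XW_{2k}$ is unique for $k \geq 4$ (Proposition~\ref{prop:properties}), every kite of $XW_{2k}$ is a face of its planar skeleton, and the possible placements of the expansion are completely determined by the face structure, which I would record first.

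I would use the canonical embedding of $XW_{2k}$: its planar skeleton consists of the cycle edges $(v_i,v_{i+1})$ together with the edges from each pole to the cycle vertices of one fixed parity, while the crossing edges are the distance-two chords $(v_i,v_{i+2})$ and the pole edges of the opposite parity. Consequently every face is a quadrilateral $(\mathrm{pole},v_i,v_{i+1},v_{i+2})$ carrying the crossing pair $(\mathrm{pole},v_{i+1})$ and $(v_i,v_{i+2})$, hence is a kite; Euler's formula yields exactly $2k$ such faces, $k$ around each pole. Thus \emph{every} kite of $XW_{2k}$ has precisely one pole and three consecutive cycle vertices, and the degree list is that each cycle vertex has degree six while each pole has degree $2k \geq 8$. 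For $CXW_{2k}$ the first assertion is then immediate: $CR^{-1}$ inserts $CC$ inside a kite, a kite is a face, and every face has the stated form, so the outer $4$-cycle of the inserted $CC$ is identified with a quadrilateral $(\mathrm{pole},v_i,v_{i+1},v_{i+2})$; the degenerate case $k=3$ (the graph $CXW_6$) follows identically after fixing an arbitrary pair of non-adjacent poles.

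For $SXW_{2k}$ the key step is to pin down the vertex that $SR^{-1}$ splits off. An $SR$-reduction $SR(x\mapsto v)$ raises the degree of its target $v$ by two, so $SR^{-1}$ splits a vertex $v$ with $\deg_{XW}(v)=\deg_{SXW}(v)+2\geq 8$; by the degree list this forces $v$ to be a pole $p$. The two kites merged by $SR^{-1}$ therefore share a planar pole edge, which I identify with the shared edge $(x_1,x_4)$ of $CS$; writing it as $(p,v_j)$ forces $v_j$ to be a cycle vertex joined to $p$ by a planar edge. The two kites are then the consecutive pole-faces $(p,v_{j-2},v_{j-1},v_j)$ and $(p,v_j,v_{j+1},v_{j+2})$, whose crossing diagonals through $p$ are $(p,v_{j-1})$ and $(p,v_{j+1})$, precisely the three pole edges named in the statement. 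Their combined boundary is the $6$-cycle $(v_j,v_{j+1},v_{j+2},p,v_{j-2},v_{j-1})$, which I match term by term with the cycle $(x_1,\dots,x_6)$ of $CS$.

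The main obstacle is this final bookkeeping: one must check that splitting $p$ along $(p,v_j)$ reproduces the planar/crossed incidence pattern of $CS$ exactly, rather than some other configuration. Concretely, the split moves the three pole edges $(p,v_j),(p,v_{j-1}),(p,v_{j+1})$ onto the new center $x$, turning them into $(x,x_1),(x,x_6),(x,x_2)$; since $(p,v_j)$ is planar and $(p,v_{j\pm 1})$ are crossed, $x$ acquires planar edges to $x_1,x_3,x_5$ and crossed edges to $x_2,x_4,x_6$, as required by $CS$. One further verifies that the contracted kite contributes the new crossing chord $(x_3,x_5)=(v_{j+2},v_{j-2})$ together with the missing diagonal $(x,x_4)$, while the two chords $(x_1,x_3)$ and $(x_5,x_1)$ are inherited unchanged. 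Because the embedding of $XW_{2k}$ is unique, this comparison of two explicit edge lists is forced, and the parity of $v_j$ is exactly what guarantees the required alternation of planar and crossed neighbors around $x$.
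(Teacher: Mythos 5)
Your proposal is correct and follows essentially the same route as the paper: both arguments exploit the unique embedding of $XW_{2k}$ to pin down where the inverse expansion can sit --- the quadrilateral receiving $CC$ must be a face of the planar skeleton, hence one pole plus three consecutive cycle vertices, and the vertex split by $SR^{-1}$ must be a pole because raising a cycle vertex's degree to six would force it to have had the illegal degree four beforehand. Your extra bookkeeping on the planar/crossed incidence pattern only makes explicit what the paper dismisses with ``clearly, the reduction is feasible.''
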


\begin{proof}
  Suppose that $G \rightarrow XW_{2k}$ by a $CR$-reduction of an optimal
  1-planar graph $G$. Then there are four good candidates $x_1, x_2, x_3,
  x_4$ which together with their neighbors form $CC$. The outer cycle of
  $CC$ is a separating 4-cycle, which can only be built from three
  consecutive vertices of the cycle and a pole of the extended wheel graph.
  Thus, none of the good candidates can be a vertex from the cycle of
  $XW_{2p}$ for some $p \geq k$. Hence, there is a planar quadrilateral into
  which $x_1, x_2, x_3, x_4$ are inserted to form $CC$. Clearly, the
  $CR$-reduction $G \rightarrow XW_{2k}$ is feasible.

  If $SR$ is applied to yield $XW_{2k}$, then  $SR(x \mapsto p)$
  must be applied at some candidate $x$ and a pole $p$,  since only
  a  $SR$-reduction increases the
  degree of a vertex and this were illegal for any vertex on the cycle of
  $XW_{2k}$. Then there is a planar hexagon $(p ,v_{j-2}, v_{j-1}, v_j,
  v_{j+1}, v_{j+2})$ in $G$ into which a new vertex $x$ is inserted to form
  $SC$. Clearly, the $SR$-reduction $G \rightarrow XW_{2k}$ is feasible.
 \end{proof}

A pre-extended wheel graph allows for different reductions: in a
single step to an irreducible extended wheel graph and with two
$SR$-reductions to the next smaller pre-extended wheel graph. We
adopt the terms pole and vertices on a cycle from extended wheel
graphs and the construction given in Lemma \ref{lem:charpreXQ}.

\begin{lemma}\label{lem:pre}
  For every pre-extended wheel graph $G$, if $G = SXW_{2k}$ with $k \geq 5$,
  then $G \rightarrow XW_{2k}$ and $G \rightarrow^* SXW_{2k-2}$. If $G =
  CXW_{2k}$ with $k \geq 4$, then $G \rightarrow XW_{2k}$ and $G
  \rightarrow^* CXW_{2k-2}$.
\end{lemma}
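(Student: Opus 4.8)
The first claim in each case, $G \rightarrow XW_{2k}$, I would read off directly from Lemma~\ref{lem:charpreXQ} (indeed it is the defining property of $preXW$). That lemma exhibits $G = SXW_{2k}$ as $XW_{2k}$ in which the three pole edges at consecutive cycle vertices $v_{j-1}, v_j, v_{j+1}$ have been replaced by the gadget $CS$ with new centre $x$; the single reduction $SR(x \mapsto p)$ merging $x$ back into the pole undoes exactly this replacement. Since $G$ is built from $XW_{2k}$ this reduction carries no blocking edge, so by Lemma~\ref{lem:feasibility} it yields the optimal 1-planar graph $XW_{2k}$. For $G = CXW_{2k}$ the same holds with $CR(x_1,x_2,x_3,x_4)$ on the four inner vertices of the inserted $CC$. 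Thus only the second claim, the descent to the next smaller pre-extended wheel graph, needs work.

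For $G = SXW_{2k}$ I would descend by two $SR$-reductions chosen by a counting heuristic: $SXW_{2k}$ has $2k+3$ vertices and $SXW_{2k-2}$ has $2k+1$, and since a wheel cycle has even length the two deleted vertices must be cycle vertices, the gadget vertex $x$ surviving. The natural choices are the two cycle vertices $v_{j+1}, v_{j-1}$ flanking the gadget. Reading off the neighbourhoods from Lemma~\ref{lem:charpreXQ} and computing local degree vectors, each of $v_{j+1}, v_{j-1}$ turns out to have type $3$ with the \emph{unique} good target the crossing second-neighbour $v_{j+3}$ (resp.\ $v_{j-3}$), so by Lemma~\ref{apply-SR-CR} they are good candidates. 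I would apply $SR(v_{j+1}\mapsto v_{j+3})$ and then $SR(v_{j-1}\mapsto v_{j-3})$, checking for each step that none of the three named blocking edges is present, so that by Lemma~\ref{lem:feasibility} the intermediate and final graphs are optimal 1-planar. The intended effect is that each reduction splices one vertex out of the length-$2k$ cycle, so that after relabelling the residual length-$(2k-2)$ cycle the gadget $CS$ reappears (up to reflection) attached to three consecutive vertices, giving exactly $SXW_{2k-2}$. The hypothesis $k \geq 5$ is what guarantees that the residual wheel $XW_{2(k-1)}$ has $k-1 \geq 4$ and is a genuine extended wheel, not the exceptional $XW_6$ for which no $SXW_6$ exists.

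For $G = CXW_{2k}$ I would argue identically. Here the four inner vertices of the $CC$ have type $4$ and are $CR$- but not $SR$-candidates, so the two $SR$-reductions must again be taken at cycle vertices flanking the gadget's pole face; peeling two of them shrinks the underlying wheel to $XW_{2k-2}$ while the $CC$ gadget is carried along, producing $CXW_{2k-2}$, and $k \geq 4$ guarantees that $XW_{2(k-1)}$ is defined.

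I expect the genuine difficulty to be the final identification rather than feasibility. Feasibility reduces to inspecting a fixed short list of blocking edges, and, because the two reductions act on sufficiently separated neighbourhoods, the side effect of the first step (the rewriting of the edges at $v_{j-1}$ noted after Example~\ref{ex:example1}) leaves $v_{j-1}$ of type $3$, so the second reduction survives; this I would confirm by direct computation. The harder step is proving that the contracted graph is \emph{exactly} $XW_{2k-2}$ with a single $CS$- (resp.\ $CC$-) gadget and not merely some optimal 1-planar graph of the right size: a contraction can alter the rotation system, so I must track the 1-planar embedding through both steps and exhibit an explicit relabelling of the shortened cycle under which the poles, the degree-six cycle, and the gadget neighbourhoods match the description in Lemma~\ref{lem:charpreXQ}. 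Finally I would treat the extremal instances $k=5$ (for $S$) and $k=4$ (for $C$) separately, to exclude accidental coincidences such as the two peeled vertices sharing a target or the gadget wrapping onto itself in the smallest residual wheel.
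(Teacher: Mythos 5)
Your first paragraph (the single-step reduction to $XW_{2k}$) matches the paper, and your identification of $v_{j+1}$ as a good candidate with unique target $v_{j+3}$ is correct. The gap is in the descent: the symmetric pair $SR(v_{j+1}\mapsto v_{j+3})$, $SR(v_{j-1}\mapsto v_{j-3})$ does \emph{not} produce $SXW_{2k-2}$, and the verification you defer to ``direct computation'' would in fact fail. The reason is the side effect of $SR$ that deletes the edge $\{x_3,x_5\}$ between the two black neighbours flanking the target: your two reductions delete $\{q,v_{j+2}\}$ and $\{q,v_{j-2}\}$, i.e.\ both deletions hit the pole $q$ \emph{not} adjacent to the gadget. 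Consequently $p$ stays adjacent to $v_{j\pm2}$ and $v_{j\pm3}$ on the shortened cycle, whereas in $SXW_{2k-2}$ (Lemma~\ref{lem:charpreXQ}) the pole adjacent to $x$ must be non-adjacent to the middle three of $x$'s five consecutive cycle neighbours. Concretely, in your resulting graph $\overrightarrow{H(x)}=(5,5,5,5,5,5,6)$, so $\tau(x)=5$ and $SR(x\mapsto p)$ is blocked by the red edges $\{p,v_{j\pm3}\}$: the graph is an optimal 1-planar graph of the right order, but it is not a pre-extended wheel graph with centre $x$, and the high-degree pole sits on the wrong side of the gadget. Your counting heuristic (``the two deleted vertices must flank the gadget symmetrically'') is the wrong guess.

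The paper instead peels two \emph{consecutive} cycle vertices on one side: first $SR(v_{j+1}\mapsto v_{j+3})$ (your first step), which as a side effect drops the degree of $v_{j+2}$ from eight to six and turns this former non-candidate into a good candidate; then $SR(v_{j+2}\mapsto v_{j+4})$. This second step deletes the edge $\{p,v_{j+3}\}$, which is exactly the missing pole edge needed for the $SXW_{2k-2}$ pattern, and the added edges $\{v_{j+4},x\}$, $\{v_{j+4},v_{j-2}\}$, $\{v_{j+4},v_j\}$ recreate the $CS$ gadget over the five consecutive vertices $v_{j-2},v_{j-1},v_j,v_{j+3},v_{j+4}$ of the shortened cycle. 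The same one-sided peeling ($SR(v_3\mapsto v_5)$ then $SR(v_4\mapsto v_2)$ in the paper's labelling, or its mirror image) is used for $CXW_{2k}$, where again the second candidate only becomes good after the first reduction. So the ingredient your proposal is missing is precisely the exploitation of these side effects; without it the choice of the second reduction, and hence the whole descent, is incorrect.
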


\begin{proof}
  Clearly, there is an immediate reduction to an extended wheel graph.
 Alternatively, consider  $G = SXW_{2k}$ with a good candidate $x$ with neighbors $(v_1, v_2,
  v_3, v_4, v_5, p)$, where   $p$ is a pole and $v_1, \ldots, v_5$
  are consecutive vertices on the cycle, as described in Lemma \ref{lem:charpreXQ}.
  Then also $v_2$ and $v_4$ are good candidates for a $SR$-reduction, since
  their planar neighbors $v_1, q$ and $v_5, q$, respectively, have degree
  eight, and there is no blocking edge. Apply $SR(v_4 \mapsto v_6)$ with
  $(v_3, x, v_5, v_6, q, v_2)$ as cycle around $v_4$. This reduction
  increases the degree of $v_6$ to eight and introduces $v_5$ as a good
  candidate with neighbors $(x, v_1, p, v_7, v_6, v_3)$ if $k \geq 5$ and
  $p$ still has degree at least eight. Then $SR(v_5 \mapsto v_7)$ is
  feasible, which results in $SXW_{2k-2}$. However, if $k=4$, then $v_1$ and
  $v_6$ are the new poles of $XW_8$.

  Similarly, consider $CXW_{2k}$ where the inner cycle of $CC$  is inserted into
  a quadrangle $(p, v_4, v_5, v_6)$ with a pole $p$ and three
consecutive
  vertices $v_4, v_5, v_6$ on the cycle of $CXW_{2k}$, see Fig.
  \ref{DW-reductions}. In other words, $CXW_{2k}$ is obtained from
  $XW_{2k}$ by $CR^{-1}$  using $(p, v_4, v_5, v_6)$ as outer
  cycle of $CC$.
  Then $v_4, v_5, v_6$ no longer are candidates,
  whereas the newly inserted vertices $x_1, x_2, x_3, x_4$ are. Each of them
  is good for $CR$ and is blocked for $SR$.

  However, $v_3$ is a good candidate for $SR(v_3 \mapsto v_5)$ and,
  thereafter, $v_4$ is a good candidate for $SR(v_4 \mapsto v_2)$.
  The use of these reductions removes
  $v_3$ and $v_4$. Thereafter, using $CR(x_1, x_2, x_3, x_4)$ results in $XW_{2k-2}$.
 \end{proof}

\begin{figure}
  \centering
  \subfigure[A $CXW_{2k}$ with good candidates $x_1, x_2, x_3, x_4$ for
    $CR$ and $v_3, v_7$ for $SR$. Good candidates are drawn as a square.]{
    \parbox[b]{5.5cm}{%
      \centering
      \includegraphics[scale=0.21]{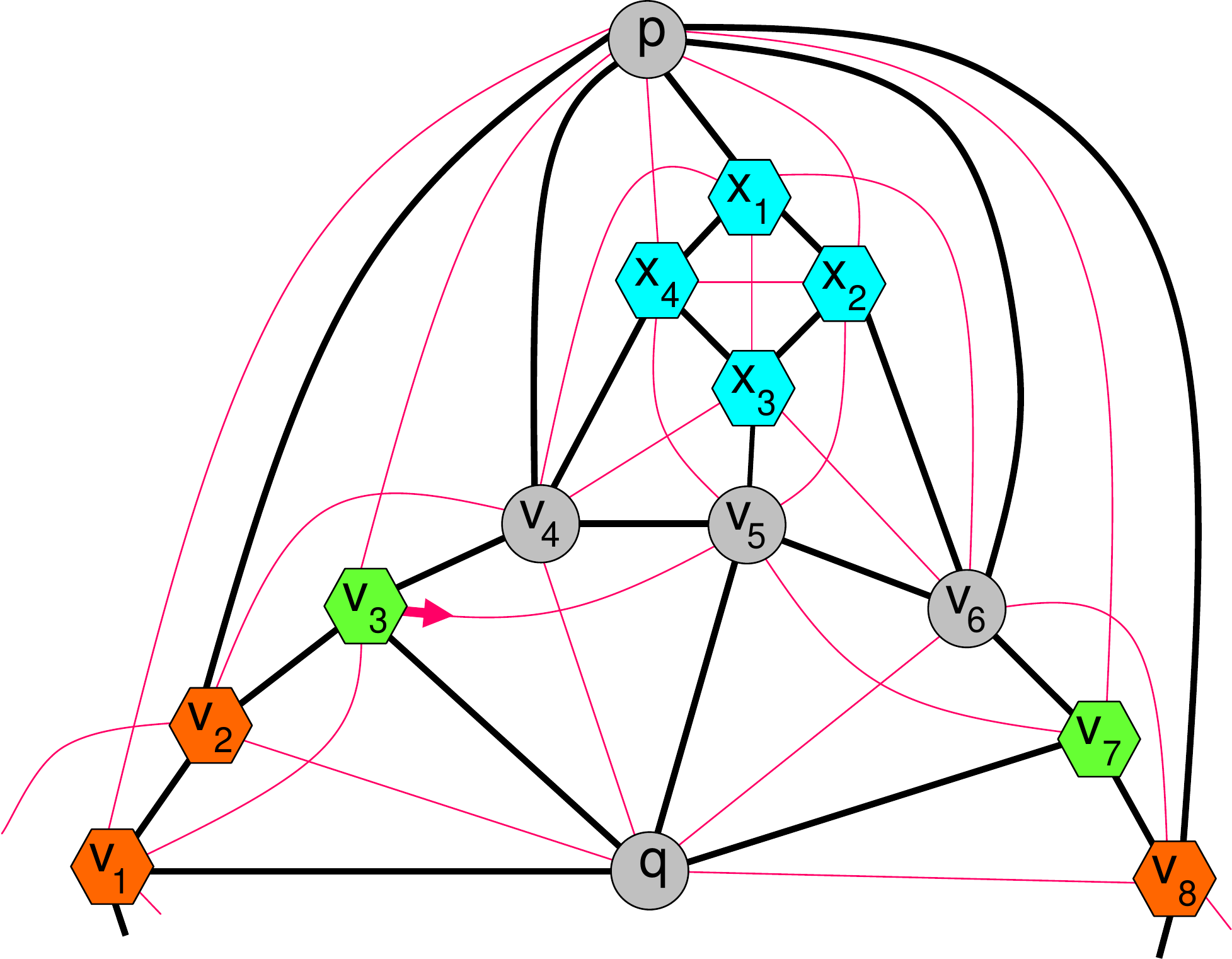}
    }
    \label{DW-reduction-a}
  }
  \hfil
  \subfigure[The graph after $SR(v_3
      \mapsto v_5)$. If  $k > 4$  and $b$ has more neighbors, then $v_7$ is
    good.]{
    \parbox[b]{5.5cm}{%
      \centering
      \includegraphics[scale=0.21]{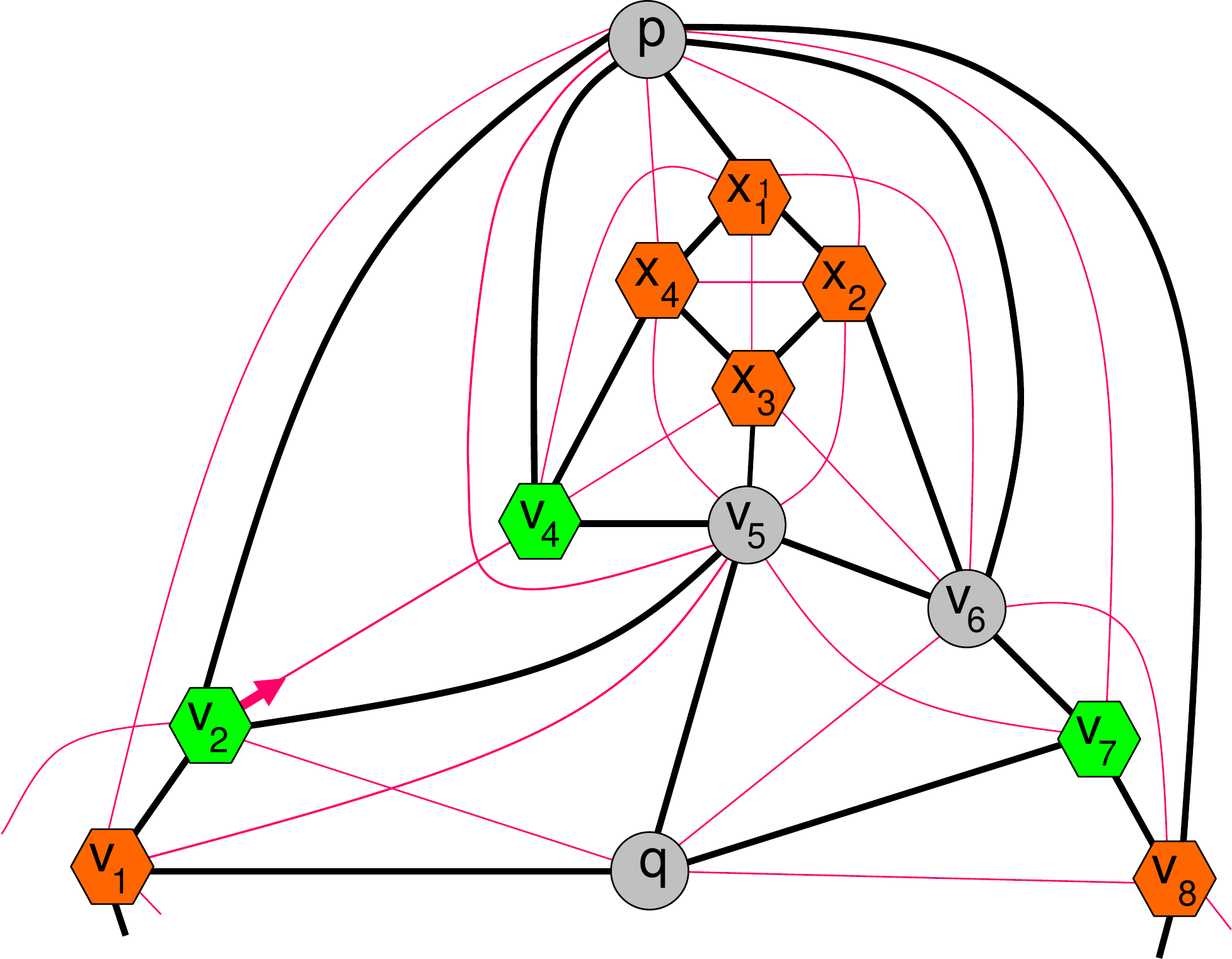}
    }
    \label{DW-reduction-b}
  } \\
  \subfigure[$CXW_{2k-2}$ after $SR(v_2 \mapsto v_4)$. ]{
    \parbox[b]{5.5cm}{%
      \centering
        \includegraphics[scale=0.21]{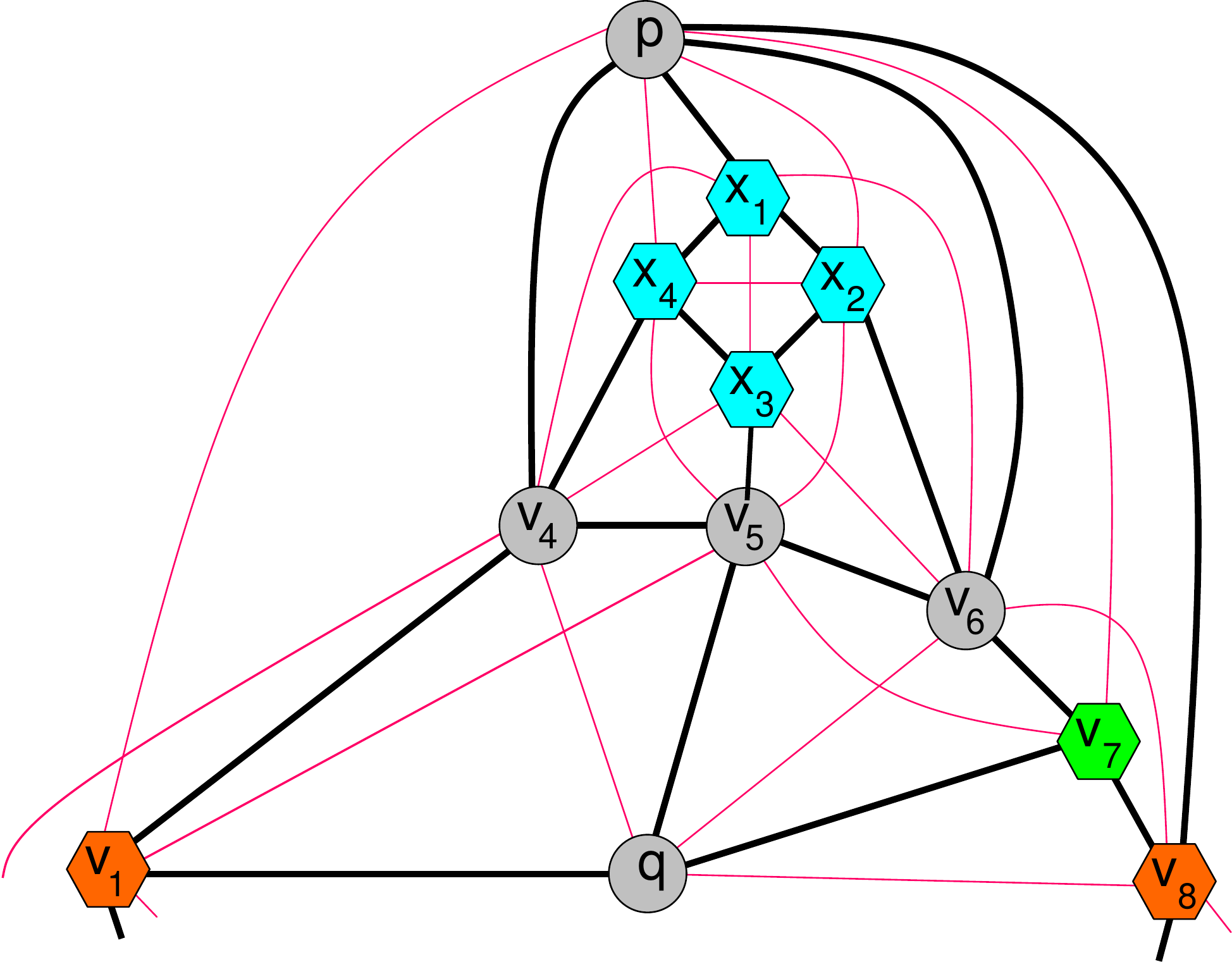}
    }
    \label{DW-reduction-c}
  }
  \hfil
  \subfigure[$XW_{2k-2}$ after the CR-reduction.]{
    \parbox[b]{5.5cm}{%
      \centering
      \includegraphics[scale=0.21]{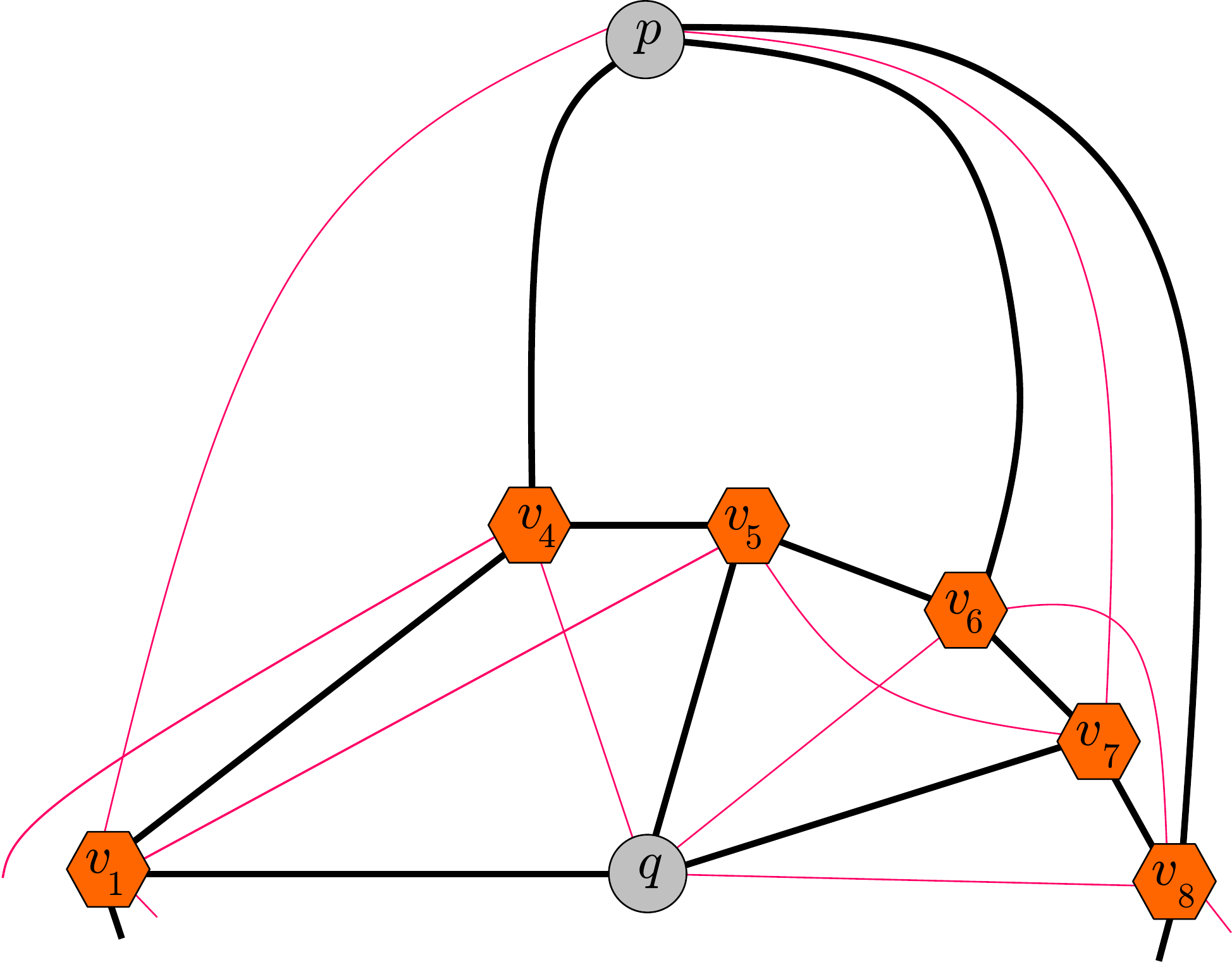}
    }
    \label{DW-reduction-d}
  }
  \caption{Reducing $CXW_{2k}$ to $XW_{2k-2}$ for $k \geq 4$ with poles $p$
    and $q$. The dangling edges at $v_1, v_2$ and $v_8$ are incident to the
    next vertices on the cycle.}
  \label{DW-reductions}
\end{figure}

The reductions $SR(v_3 \mapsto v_5)$ and $SR(v_4 \mapsto v_2)$
followed by a $CR$-reduction  are shown in Fig.~\ref{DW-reductions}.
By symmetry, $SR(v_7 \mapsto v_5)$ and $SR(v_6 \mapsto v_8)$ could
be applied. If $k \geq 5$, then both reductions can be applied
(sequentially or in parallel), which demonstrates the
non-determinism  of the reduction system.\\

Confluence is an important property of rewriting systems, and it is
independent of the objects, i.e., whether they are strings, terms,
polynomials \cite{bn-tr-98, bo-srs-93, bw-gb-93}, or graphs
\cite{e-cfgg-97}.   A rewriting system is \emph{confluent} if $x
\rightarrow^* u$ and  $x \rightarrow^* v$ implies that there is a
common descendant $z$ with $u \rightarrow^* z$ and $y \rightarrow^*
z$. In consequence, if two rules  can be applied at different places
of $x$  starting two
 reductions, then the reductions join at a common descendant. In
particular, if $x$ reduces to an irreducible element $y$, then $y$
is unique.

Example \ref{ex:example1} and Lemma \ref{lem:pre} show  that
reductions to different extended wheel graphs are possible, and
extended wheel graphs are irreducible.

\begin{corollary}
  The reduction system   with  $SR, CR$ ($SR$) is non-confluent on
  (5-connected) optimal 1-planar graphs.
\end{corollary}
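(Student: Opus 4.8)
The plan is to prove non-confluence by exhibiting explicit reductions that diverge to distinct irreducible elements, invoking the definition of confluence directly: a system is confluent only if any element reducing to two irreducible descendants has those descendants coincide. Since extended wheel graphs are irreducible under both rule sets (Proposition~\ref{prop:properties}), it suffices to find a single optimal 1-planar graph admitting reductions to two \emph{different} extended wheel graphs $XW_{2k}$ and $XW_{2k'}$ with $k \neq k'$; because distinct extended wheel graphs are non-isomorphic, this immediately violates confluence.

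For the full system $\{SR, CR\}$, I would simply point to Example~\ref{ex:example1}, which already does all the work: the graph $G_{16}$ reduces to $XW_6$ along one sequence (ending with $CR(c,d,e,f)$) and to $XW_8$ along the alternative sequence of $SR$-reductions. These two irreducible terminal graphs are distinct, so $\{SR, CR\}$ is non-confluent on optimal 1-planar graphs. For the restricted system $\{SR\}$ on 5-connected optimal 1-planar graphs, the $XW_6$ branch of Example~\ref{ex:example1} is unavailable (it used $CR$ and passed through a non-5-connected intermediate graph), so I would instead appeal to Lemma~\ref{lem:pre}: a pre-extended wheel graph $SXW_{2k}$ with $k \geq 5$ reduces in one step to $XW_{2k}$ and also reduces (via $G \rightarrow^* SXW_{2k-2}$ and then one more step) to $XW_{2k-2}$. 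Choosing any 5-connected $SXW_{2k}$ with $k\geq 5$ therefore yields two reductions terminating at the non-isomorphic irreducible graphs $XW_{2k}$ and $XW_{2k-2}$, establishing non-confluence of $\{SR\}$ as well.

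The essential point in both cases is that the terminal objects are irreducible by Proposition~\ref{prop:properties}, so there is no possibility of joining the two branches at a common descendant; the divergence is genuinely to two distinct normal forms rather than to intermediate states that might later reconverge. The main obstacle, such as it is, is verifying that the two declared reduction sequences are both legitimate (every step feasible in the sense of Lemma~\ref{lem:feasibility} and applied to a good candidate), and that the two endpoints really are non-isomorphic extended wheel graphs of different sizes. Both of these are already settled upstream: feasibility along the sequences is guaranteed by the good-candidate analysis in Example~\ref{ex:example1} and Lemma~\ref{lem:pre}, and $XW_{2k} \not\cong XW_{2k'}$ for $k \neq k'$ follows from their differing vertex counts $2k+2$. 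Thus the corollary reduces to assembling these prior facts, and no new technical machinery is required.
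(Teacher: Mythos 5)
Your proposal is correct and follows exactly the paper's intended argument: the paper derives this corollary from the one-sentence observation that Example~\ref{ex:example1} and Lemma~\ref{lem:pre} exhibit reductions of a single graph to distinct irreducible extended wheel graphs, which is precisely the divergence-to-distinct-normal-forms argument you spell out for both rule sets. Your additional care in noting that the $\{SR\}$-only case needs Lemma~\ref{lem:pre} (since the $XW_6$ branch of the example uses $CR$) matches the paper's parenthetical restriction to 5-connected graphs.
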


In Example \ref{ex:example1}, we have shown that the given graph
$G_{16}$ can be reduced to   $XW_8$ and $XW_6$. Next, we show that
every reducible optimal 1-planar graph $G$ can be reduced to every
extended wheel graph in a range from $XW_{2s}$ to  $XW_{2t}$ where
$s=3$ if $G$ is not $5$-connected and $s=3$ or $s=4$ if $G$ is
$5$-connected.

\begin{theorem} \label{upperbound}
For every reducible optimal
  1-planar graph $G$ there is some upper bound $t$ with
  $t \leq (2n+p+q-4)/8$, where $n$ is the
  size of $G$ and $p$ and $q$ are the two largest degrees of the vertices of
  $G$ and $s \in \{3,4\}$, so that for all $i$ and $s \leq i \leq t$ there
  is a reduction $G \rightarrow^* XW_{2i}$.
\end{theorem}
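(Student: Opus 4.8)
The plan is to prove the theorem in two parts: first establish that the claimed upper bound $t \leq (2n+p+q-4)/8$ is a genuine ceiling on the size of any reachable extended wheel graph, and second prove the ``interval'' property that every intermediate size $XW_{2i}$ for $s \leq i \leq t$ is actually reachable. For the upper bound, I would argue by an edge-counting/invariant argument. Since $G$ is optimal 1-planar, it has $4n-8$ edges; an $XW_{2k}$ has $2k+2$ vertices and $8k$ edges. A reduction $G \rightarrow^* XW_{2k}$ removes one vertex (for $SR$) or four vertices (for $CR$) per step, and each step preserves optimal 1-planarity by Lemma~\ref{lem:feasibility}. The key observation is that an $SR$-reduction $SR(x \mapsto v)$ raises the degree of the target $v$ by two, so the very first reduction toward the \emph{largest} wheel cannot afford to destroy the two highest-degree vertices; the two largest degrees $p$ and $q$ bound how many high-degree vertices survive, and a pole of $XW_{2k}$ has degree $2k$. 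Writing $k = t$ for the maximal wheel, the final poles must have accumulated their degree from the degree budget of $G$, and rearranging the vertex/edge count $2k+2 \leq n$ together with the pole-degree constraint $2k \leq$ (available degree) yields exactly $t \leq (2n+p+q-4)/8$. I would verify this bound is tight against the two-pole structure rather than merely the vertex count.

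Next I would handle the lower endpoint $s$. By Lemma~\ref{lem:existence} every non-wheel optimal 1-planar graph has a good candidate, so $G$ reduces to \emph{some} extended wheel graph; call the largest reachable one $XW_{2t}$. The substantive content is that reductions can be steered downward one wheel at a time. Here Lemma~\ref{lem:pre} is the workhorse: it shows that a pre-extended wheel graph $SXW_{2k}$ (resp. $CXW_{2k}$) admits both a direct reduction to $XW_{2k}$ and a two-step $SR$-reduction to the next smaller pre-extended wheel graph $SXW_{2k-2}$ (resp. $CXW_{2k-2}$). So the plan is to first reduce $G$ down to a pre-extended wheel graph of the maximal type $2t$, and then walk down the chain $SXW_{2t} \rightarrow^* SXW_{2t-2} \rightarrow^* \cdots$ (or the $CXW$ analogue), at each level branching off via the single-step reduction to land on $XW_{2i}$ for the desired $i$. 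This produces a reduction $G \rightarrow^* XW_{2i}$ for every $i$ in the range $s \leq i \leq t$. The value $s=4$ arises because $XW_6$ has all vertices of degree six and cannot be the image of an $SR$-reduction (that would create a degree-eight vertex) nor of any reduction unless $G$ is non-5-connected, giving the dichotomy $s=3$ when a separating $4$-cycle is present and $s \in \{3,4\}$ otherwise, consistent with the remark preceding Lemma~\ref{lem:charpreXQ}.

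The main obstacle I anticipate is the descending chain itself: it is not a priori obvious that after reducing $G$ to a \emph{maximal} pre-extended wheel graph one can always continue descending through \emph{every} intermediate wheel size without skipping a value or getting stuck on an all-bad configuration (as happens in an extended wheel graph itself). The crux is that Lemma~\ref{lem:pre} guarantees the chain $SXW_{2k} \rightarrow^* SXW_{2k-2}$ stays within pre-extended wheel graphs down to $k=5$ (and the $CXW$ chain down to $k=4$), so each rung of the ladder both reaches $XW_{2k}$ directly and hands off a valid smaller pre-wheel graph; I would need to confirm that the good candidates $v_2, v_4$ (resp. $v_3, v_7$) whose existence Lemma~\ref{lem:pre} asserts genuinely persist at each step, i.e. that the side effects documented after Example~\ref{ex:example1} never simultaneously kill all the good candidates needed to continue. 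Establishing that the descent is unobstructed and reaches exactly the endpoint $s$ is where the real work lies; the upper bound is essentially a bookkeeping calculation once the pole-degree invariant is isolated.
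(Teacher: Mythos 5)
Your proposal follows essentially the same route as the paper's proof: the upper bound comes from the same bookkeeping that each $SR$-reduction raises a target's degree by two while removing one vertex, so building two poles of degree $2k$ from the two largest degrees $p$ and $q$ costs $(2k-p)/2+(2k-q)/2$ vertices and forces $k \leq (2n+p+q-4)/8$; and the interval property comes from first reaching a pre-extended wheel graph and then descending one size at a time via Lemma~\ref{lem:pre}, branching off to $XW_{2i}$ at each rung. The candidate-persistence worry you flag is exactly what Lemma~\ref{lem:pre} is invoked to settle, so no further work is needed beyond what you describe.
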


\begin{proof}
  There is a reduction $G \rightarrow^* P \rightarrow XW_{2u}$ for some $u$,
  where $P$ is a pre-extended wheel graph. If $u > 4$, then $P \rightarrow
  P'$ for a pre-extended wheel graph $P'$ with $|P'| = |P|-2$ by Lemma
 \ref{lem:pre}. By induction, there is a reduction to a pre-extended wheel
  graph of any smaller size, which reduces to
  $XW_8$ and $XW_6$, respectively, if the pre-extended wheel graph of size
  $12$ is $SXW_8$ and $CXW_6$, respectively. Therefore, we have $s \in
  \{3,4\}$.

  The upper bound on $u$ is due to the fact that the poles of an extended
  wheel graph $XW_{2k}$ have degree $2k$ and the degree of a vertex can only be
  increased by two by    $SR$, which, however, removes one vertex.
  Hence, it takes at least
  $(2k-p)/2 + (2k-q)/2$ reductions to increase the degree of the two vertices
  with the highest degree to two poles of degree $2k$.
  Then, at most  $n - (2k-p)/2 + (2k-q)/2$ vertices remain, which is $2k+2$ for the resulting $XW_{2k}$.
  Hence,
  it takes at least $j = (2n-p-q-4)/4$ $SR$-reductions to transform the two
  highest degree vertices of $G$ into poles of degree $2k$, which results in an $XW_{2k}$
  with $k \leq (2n+p+q-4)/8$.
 \end{proof}

\begin{corollary}\label{cor:XW6-XW8}
  Every reducible optimal 1-planar graph can be reduced to $XW_6$ or
  $XW_8$.
\end{corollary}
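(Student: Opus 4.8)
The plan is to derive this as an immediate specialization of Theorem \ref{upperbound}. That theorem asserts, for every reducible optimal 1-planar graph $G$, the existence of a lower bound $s \in \{3,4\}$ and an upper bound $t$ such that $G \rightarrow^* XW_{2i}$ holds for \emph{every} integer $i$ with $s \leq i \leq t$. First I would instantiate $i = s$: since $2s \in \{6,8\}$, the reduction $G \rightarrow^* XW_{2s}$ yields exactly $XW_6$ (when $s=3$) or $XW_8$ (when $s=4$), which is the claim.

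The only point that needs checking is that the index range $[s,t]$ is nonempty, i.e.\ that $t \geq s$, so that $i = s$ is an admissible choice. Here I would appeal to the structure already exposed in the proof of Theorem \ref{upperbound}: because $G$ is reducible it is not an extended wheel graph, so the reduction $G \rightarrow^* P \rightarrow XW_{2u}$ produces at least one pre-extended wheel graph $P$, and the descending chain furnished by Lemma \ref{lem:pre} carries $P$ down to the smallest pre-extended wheel graph of its kind ($SXW_8$ or $CXW_6$). This guarantees that $XW_{2s}$ is in fact reached, so the range is nonempty and $i=s$ is valid.

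The dichotomy $XW_6$ versus $XW_8$ reflects the value of $s$: the case $s=3$ arises precisely when the reduction sequence can terminate at $CXW_6$, which reduces to $XW_6$ by a $CR$-reduction, whereas $s=4$ covers the case where the sequence bottoms out at $SXW_8$ and reduces to $XW_8$ by an $SR$-reduction (an $SR$-reduction cannot produce $XW_6$, as noted before Lemma \ref{lem:charpreXQ}). I expect no real obstacle, since both the range statement and the underlying case analysis are already established in Theorem \ref{upperbound}; the corollary is simply the lower endpoint $i=s$ read off explicitly.
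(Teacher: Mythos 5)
Your proposal is correct and matches the paper's intent exactly: the corollary is stated without its own proof precisely because it is the lower endpoint $i=s$ of the range in Theorem~\ref{upperbound}, with $s=3$ giving $XW_6$ and $s=4$ giving $XW_8$. Your extra care in verifying that the range $[s,t]$ is nonempty (via the existence of a pre-extended wheel graph and the descending chain from Lemma~\ref{lem:pre}) is a sensible check that the paper leaves implicit, but it does not change the argument.
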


\begin{corollary}\label{cor:XW6-XW8-2}
  Every pre-extended wheel graph $SXW_{2k}$ can be reduced to every
  extended wheel graph  $XW_{2i}$ with $i=4,\ldots, 2k$  using only $SR$-reductions.
 Every pre-extended wheel graph $CXW_{2k}$ can be reduced to every
  extended wheel graph  $XW_{2i}$ with $i=3,\ldots, 2k$.
\end{corollary}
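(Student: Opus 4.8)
The plan is to prove both statements by induction on $k$ (equivalently, on the size of the pre-extended wheel graph), using Lemma \ref{lem:pre} as the sole engine. The point is that Lemma \ref{lem:pre} delivers two outputs at once for each pre-extended wheel graph of the appropriate type: a one-step reduction to the extended wheel graph $XW_{2k}$ of the same size (this is the top of the range, index $i=k$), together with a reduction $\rightarrow^*$ to the next smaller pre-extended wheel graph of the \emph{same} type, namely $SXW_{2k} \rightarrow^* SXW_{2k-2}$ respectively $CXW_{2k} \rightarrow^* CXW_{2k-2}$. Combining the direct step with the induction hypothesis applied to the smaller graph then sweeps out every intermediate index down to the lower endpoint.

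For the $S$-family I would induct on $k \geq 4$. The base case $k=4$ is $SXW_8 \rightarrow XW_8$ by the single defining $SR$-reduction, giving $i=4$; this is genuinely the lower endpoint, because (as noted just before Lemma \ref{lem:charpreXQ}) an $SR$-reduction always creates a vertex of degree at least eight, so $XW_6$ has no $SR$-predecessor, and any such predecessor would in any case have nine vertices, which is impossible for an optimal 1-planar graph. For $k \geq 5$, Lemma \ref{lem:pre} gives $SXW_{2k} \rightarrow XW_{2k}$ and $SXW_{2k} \rightarrow^* SXW_{2k-2}$, and inspecting its proof shows that every reduction used is an $SR$-reduction. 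The induction hypothesis on $SXW_{2k-2}$ supplies reductions to $XW_{2i}$ for $i=4,\ldots,k-1$, so together with the direct step to $XW_{2k}$ we cover all $i$ up to $k$, using $SR$ alone.

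The $C$-family is handled identically, but the induction now descends to the base case $CXW_6 \rightarrow XW_6$, which accounts for the extra index $i=3$ available to this family; this reduction necessarily uses $CR$, matching the fact that $XW_6$ has no $SR$-predecessor. For $k \geq 4$, Lemma \ref{lem:pre} gives $CXW_{2k} \rightarrow XW_{2k}$ and $CXW_{2k} \rightarrow^* CXW_{2k-2}$ (the descent here being two $SR$-reductions followed by a $CR$-reduction, so the $C$-statement is not restricted to $SR$), and the induction hypothesis on $CXW_{2k-2}$ provides $XW_{2i}$ for $i=3,\ldots,k-1$. Combining covers all $i$ from $3$ up to $k$.

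The substance is entirely contained in Lemma \ref{lem:pre}; the only steps needing care are bookkeeping. First, one must confirm that each descending chain stays inside its own family — $S$ stays $S$, $C$ stays $C$ — so that the induction hypothesis applies unchanged; this is exactly what Lemma \ref{lem:pre} asserts. The one genuinely non-formal point, which I expect to be the main (minor) obstacle, is correctly pinning the lower endpoint: the degree obstruction for $XW_6$ under $SR$ forces the $S$-family to bottom out at $XW_8$ while permitting the $C$-family to descend one step further to $XW_6$. No separate optimality or 1-planarity verification is needed along the way, since Lemma \ref{lem:pre} produces only feasible reductions and, by Lemma \ref{lem:feasibility}, feasibility preserves the class.
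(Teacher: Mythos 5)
Your proposal is correct and matches the paper's intended argument: the corollary is exactly the iterated application of Lemma~\ref{lem:pre}, descending within each family ($SXW_{2k}\rightarrow^* SXW_{2k-2}$, $CXW_{2k}\rightarrow^* CXW_{2k-2}$) while peeling off the one-step exit $\rightarrow XW_{2i}$ at every level, with the lower endpoints $XW_8$ versus $XW_6$ fixed by the degree obstruction to an $SR$-predecessor of $XW_6$ (this is the same descent the paper runs inside the proof of Theorem~\ref{upperbound}). One cosmetic remark: in Lemma~\ref{lem:pre} the descent $CXW_{2k}\rightarrow^* CXW_{2k-2}$ consists of the two $SR$-reductions alone (see Fig.~\ref{DW-reduction-c}); the $CR$-reduction is only the final exit $CXW_{2i}\rightarrow XW_{2i}$, which does not affect your argument since the $C$-statement is not restricted to $SR$.
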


Schumacher \cite{s-s1pg-86} proved that for every $5$-connected
optimal 1-planar graph $G$ there exists a reduction to an extended
wheel graph, even to $XW_8$, using only $SR$-reductions. Corollary
\ref{cor:XW6-XW8} extend this result to all optimal 1-planar graphs.
There exists a good candidate for   $SR$  in the interior of $G$ if
$G$ has no separating $4$-cycle and $G$ is reducible, as proved in
Lemma $4$ of
  \cite{bggmtw-gsqs-05}. In consequence, if  an optimal 1-planar
graph $G$ has separating $4$-cycles and partitions into $G_{in}$ and
$G_{out}$, such that $C$ is an innermost (or outermost) separating
$4$-cycle, then $G_{in}+C$ is $5$-connected and there exists a good
candidate for an $SR$-reduction in $G_{in}$. The candidate is not on
the $4$-cycle $C$. Recall that a completion of $G_{in} + C$ and
$G_{out}+C$ adds the diagonals of the separating $4$-cycle and we
obtain $G_{in}+C = XW_6$ if $G_{in}$ has only four vertices.

On the other hand, the four candidates on the inner cycle of $CC$
mutually block each other for a $SR$-reduction. There are no means
to raise the blockade by an $SR$-reduction. Only $CR$ can do. As an
extension thereof, if $C$ is an innermost (outermost) separating
$4$-cycle of an optimal 1-planar graph $G$, then the vertices of $C$
are ``frozen'' for   $SR$-reductions and remain if $G_{in}+C$ is
reduced to an extended wheel graph, say $XW_8$. However, further
reductions are possible after a recombination of the results. This
leads to the following facts:

\begin{theorem} \label{thm:charact-4-connected}
If a reducible optimal 1-planar graph  has a separating $4$-cycle,
then there is a reduction  to $XW_6$.
\end{theorem}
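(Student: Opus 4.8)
The plan is to prove the statement by induction on the number $n$ of vertices, establishing the slightly stronger intermediate goal that every reducible optimal 1-planar graph with a separating $4$-cycle can be reduced to the pre-extended wheel graph $CXW_6$, from which a single $CR$-reduction yields $XW_6$ (recall that $XW_6$ can only be produced by a $CR$-reduction). Since each closed component of a separating $4$-cycle is optimal 1-planar, and hence has at least eight vertices while sharing the four vertices of $C$, the smallest graph carrying a separating $4$-cycle has $8+8-4=12$ vertices and is exactly $CXW_6$, namely two crossed cubes glued along $C$ (the nested crossed cubes of Fig.~\ref{fig:nested}). This is the base case, where $CR$ at the inner $CC$ gives $XW_6$.

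For the inductive step I would keep a separating $4$-cycle alive while strictly decreasing $n$, so that the induction hypothesis applies to the smaller graph and then lifts the reduction back to $G$. The decisive observation is that the four vertices of any separating $4$-cycle $C$ have degree at least eight in $G$ and are therefore non-candidates, so every good candidate supplied by Lemma~\ref{lem:existence} lies strictly inside $G_{in}$ or strictly inside $G_{out}$. Consequently any feasible $SR$-reduction removes a single interior vertex and any feasible $CR$-reduction removes the four interior vertices of a $CC$, in both cases leaving the four vertices and four edges of every \emph{other} separating $4$-cycle untouched.

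I would then split into two cases. If $G$ has at least two (necessarily nested) separating $4$-cycles, I simply apply any good reduction obtained from Lemma~\ref{lem:existence}; it is feasible by Lemma~\ref{lem:feasibility}, strictly decreases $n$, and---even if a $CR$ empties the interior of one cycle---at least one separating $4$-cycle survives, so the smaller graph is again reducible with a separating $4$-cycle and the induction hypothesis applies. If $G$ has exactly one separating $4$-cycle $C$, then both closed components $G_{in}+C$ and $G_{out}+C$ are $5$-connected. If both equal $XW_6$ we are in the base case $G=CXW_6$; otherwise at least one component, say $G_{in}+C$, is reducible, and by Lemma~\ref{lem:existence} together with the Brinkmann et al.\ argument cited there it contains a good candidate strictly inside $G_{in}$ which---since $G_{in}+C$ has no internal separating $4$-cycle---is good for $SR$ and not for $CR$. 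Applying this $SR$-reduction removes one interior vertex, preserves $C$ (possibly creating further separating $4$-cycles as in Example~\ref{ex:example1}), and yields a smaller reducible graph with a separating $4$-cycle, to which the induction hypothesis applies.

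The hard part will be this single-cycle case: there I must guarantee a size-decreasing reduction that does \emph{not} destroy $C$, since a premature $CR$ across $C$ could leave a $5$-connected graph reducible only to $XW_8$ and not to $XW_6$. This is exactly where the $5$-connectedness of the non-minimal closed component is essential---it forces the available good candidate to be of $SR$-type (never a $CR$ that would collapse a whole side), so that $C$ remains genuinely ``frozen'' until both sides have shrunk to crossed cubes and the graph has become $CXW_6$. I would also need to confirm the counting behind the base case (in particular that no reducible optimal 1-planar graph on fewer than $12$ vertices has a separating $4$-cycle, using that there is no optimal 1-planar graph on nine vertices) and that gluing two copies of $XW_6$ along a facial kite indeed reproduces $CXW_6$ as described in Lemma~\ref{lem:charpreXQ}.
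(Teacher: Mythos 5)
Your overall strategy (induction on $n$, keeping a separating $4$-cycle alive until the graph collapses to $CXW_6$) is a legitimate and genuinely different route from the paper, which instead works from the innermost to the outermost separating $4$-cycle, collapsing each closed component to $XW_8$ and then to a $4$-cycle before a final $CR$. Your base case and your multi-cycle case are essentially sound: the vertices of a separating $4$-cycle are indeed non-candidates, an $SR$ cannot empty the interior of any separating $4$-cycle (each interior has at least four vertices), and a $CR$ can only de-activate the one separating $4$-cycle that is the outer cycle of the contracted $CC$, so a second cycle survives.

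The gap is in your single-cycle case, in the step ``if both closed components equal $XW_6$ we are in the base case; \emph{otherwise at least one component is reducible}.'' That implication is false: a closed component can be an extended wheel graph $XW_{2k}$ with $k\geq 4$, which is irreducible but is not $XW_6$. This is not an exotic situation --- it is exactly the pre-extended wheel graph $CXW_{2k}$ of Lemma~\ref{lem:charpreXQ} (one component $XW_6$, the other $XW_{2k}$), and more generally a gluing of two large extended wheel graphs along a kite. For these graphs your argument produces no reduction step at all, and they are precisely the graphs the paper must treat separately via Lemma~\ref{lem:pre} ($CXW_{2k}\rightarrow^* CXW_{2k-2}$). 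The repair is available but you did not make it: Lemma~\ref{lem:existence} applies to $G$ itself (which is reducible because it has a separating $4$-cycle), so it yields a good candidate in $G_{in}$ regardless of whether $G_{in}+C$ is reducible; and if $|G_{in}|>4$ that candidate must be $SR$-good, since a $CR$-good candidate inside $G_{in}$ would force the outer cycle of its $CC$ to be a separating $4$-cycle of $G$, hence equal to $C$, hence $|G_{in}|=4$. Concretely, in $CXW_{2k}$ the good $SR$-candidates are the cycle vertices adjacent to $C$ (the vertices $v_3,v_7$ of Fig.~\ref{DW-reductions}), not anything your ``reducible component'' clause would find. Replace the reducibility dichotomy by a dichotomy on $|G_{in}|$ and $|G_{out}|$ (both equal to $4$, giving $n=12$ and $G=CXW_6$, versus one strictly larger) and your induction goes through.
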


\begin{proof}
Let $C = (v_1, v_2, v_3, v_4)$ be an innermost separating $4$-cycle
such that $G-C$ is partitioned into $G_{in}$ and $G_{out}$. Then
there is a good candidate $x$ for $SR$ in $G_{in}+C$  and $x$ is not
on $C$, as proved in \cite{bggmtw-gsqs-05}, Lemma 4. Hence, there is
an $SR$-reduction of $G_{in}+C$ to $XW_8$ preserving $C$. Exactly
one pole of $XW_8$ is in $C$. If all vertices of $C$ were
candidates, then  $XW_8$ cannot be realized. Both poles cannot be in
$C$, since the poles are not connected by an edge whereas the
subgraph induced by the vertices of $C$ is $K_4$, since a pair of
crossing edges is added in the outer face.

Consider graph $H$ obtained from $G$ by replacing $G_{in}$ by
$XW_8$. Then $G \rightarrow^* H$ where the $SR$-reduction steps of
$G_{in}+C \rightarrow^* XW_8$ are applied. Now, we can remove
$G_{in}$ from $H$ by two $SR$-reductions and a final $CR$-reduction.
The vertices on $C$ have degree at least eight, since they have two
black neighbors on $C$ and at least one black neighbor in $G_{out}$
and $XW_8$ obtained from $G_{in}$, and there exists a candidate for
$SR$ in the interior of $C$ according to \cite{bggmtw-gsqs-05}.

So we proceed towards the outermost separating 4-cycle $C_{out}$ and
 reduce the inner and the outer components $H_{in}$ and $H_{out}$
to $XW_8$ preserving $C_{out}$. The poles of $XW_8$ from the inner
and outer components on $C_{out}$ may or may not coincide. Reduce
the inner subgraph of $C_{out}$ by two $SR$-reductions to a single
4-cycle which together with $C_{out}$ forms $CC$, and similarly for
the outer subgraph.  A final $CR$-reduction to the inner 4-cycle
yields $XW_6$.
 \end{proof}

Next, consider the ``nested extended quadrangles'' in
Fig.~\ref{fig:nested}, which are optimal 1-planar graphs and can be
reduced to $XW_6$ by $CR$-reductions. The vertices on the innermost
and   outermost cycles have degree six and thus are candidates,
whereas the other vertices have degree eight. However, the
candidates have type $4$ and an $SR$-reduction is infeasible whereas
$CR$ can be applied. This property remains if a $CR$-reduction is
used and, thereby, there is a unique reduction  to  $XW_6$. In
consequence, we obtain:

\begin{theorem} \label{reduce5connected}
There is an infinitely many optimal 1-planar graphs $G$ such that a
reduction of $G$ to an extended wheel graph $XW_{2k}$ implies $k=3$
and only $CR$-reductions can be applied.
\end{theorem}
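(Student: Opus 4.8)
The plan is to make the informal description preceding the statement fully precise and then run an induction on the depth of nesting. First I would exhibit the family explicitly: for $m \geq 0$ let $G_m$ be the graph obtained from $XW_6$ by $m$ successive applications of $CR^{-1}$, each one inserting the inner cycle of a crossed cube $CC$ into an extremal (innermost or outermost) kite, as in Fig.~\ref{fig:nested}. Since $CR^{-1}$ adds exactly four vertices and preserves optimal 1-planarity (it is the inverse of a feasible $CR$-reduction, cf.\ Lemma~\ref{lem:feasibility}), each $G_m$ is an optimal 1-planar graph on $4m+8$ vertices, so the $G_m$ form an infinite family. The first real task is to record the degree structure of $G_m$: the only vertices of degree six are those on the innermost and outermost $4$-cycles, all other vertices have degree eight, and every degree-six vertex $x$ satisfies $\overrightarrow{H(x)} = (4,4,5,5,5,5,6)$, i.e.\ it is a candidate of \emph{type} $\tau(x) = 4$, exactly as a vertex of the inner cycle of $CC$.

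Next I would rule out $SR$ and license $CR$ by a direct appeal to Lemma~\ref{apply-SR-CR}. That lemma says a candidate is good for $SR$ only if $\tau(x) = 3$; since every candidate of $G_m$ has $\tau(x) = 4$, no $SR$-reduction is feasible at any vertex of $G_m$. Conversely, each extremal $4$-cycle together with the neighbouring layer matches $CC$ with both diagonals $\{v_1,v_3\}$ and $\{v_2,v_4\}$ absent, and its four vertices have the required degree vector, so by the second part of Lemma~\ref{apply-SR-CR} they are good for $CR$. Hence in $G_m$ the only feasible reductions are the two $CR$-reductions contracting an extremal crossed cube.

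The central step is an invariance argument showing $G_m \rightarrow G_{m-1}$ under either of these $CR$-reductions, for $m \geq 1$. Contracting the innermost (or outermost) $CC$ back to a kite deletes its four degree-six vertices and lowers by two the degree of each of the four vertices of the enclosing $4$-cycle; since these had degree eight, they now have degree six and, by the self-similarity of the construction, again form the inner cycle of a crossed cube with degree vector $(4,4,5,5,5,5,6)$. Thus the invariant ``every candidate has type $4$'' is preserved, and the resulting graph is $G_{m-1}$. By induction, every reduction sequence starting at $G_m$ consists solely of $CR$-reductions, strictly decreases the nesting depth, and can only terminate at $G_0 = XW_6$; since $G_j$ has a separating $4$-cycle and is therefore not an extended wheel graph for $j \geq 1$, no earlier termination is possible. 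Consequently any reduction $G_m \rightarrow^* XW_{2k}$ reaches $XW_6$, forcing $k = 3$ and using only $CR$.

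The main obstacle is precisely this invariance step: one must verify that neither $CR$-reduction ever produces a vertex of local degree three, that is, that no type-$3$ candidate is ever created and hence $SR$ can never become available at a later stage. This amounts to a finite local computation of the degree vectors of the new extremal layer and its immediate neighbourhood after the contraction; the bookkeeping is routine once the degree structure of $G_m$ is fixed, but it carries the whole content of the theorem, since everything else follows formally from Lemma~\ref{apply-SR-CR}.
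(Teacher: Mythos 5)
Your proposal is correct and follows essentially the same route as the paper: it uses the nested crossed cubes of Fig.~\ref{fig:nested}, observes via Lemma~\ref{apply-SR-CR} that all candidates have type $4$ so $SR$ is never feasible while the extremal layers admit $CR$, and argues that this structure is invariant under $CR$-reductions, forcing termination at $XW_6$. The only difference is one of rigor: you make explicit (as an induction on nesting depth with a degree-vector invariant) what the paper compresses into the single sentence ``This property remains if a $CR$-reduction is used.''
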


The nested extended quadrangles are inaccessible to $SR$-reductions,
since the candidates pairwise block each other  by red blocking
edges. In Example \ref{ex:example1} we have shown that
$SR$-reductions may introduce a separating 4-cycle and destroy
5-connectivity. For Schumacher's reduction system we obtain:

\begin{theorem} \label{thm:SRproperties}
Suppose that only $SR$-reductions are used.
\begin{enumerate}
\item For every 5-connected optimal 1-planar graph $G$ there exists a
reduction to an extended wheel graph, and even to $XW_8$.
\item  There are 5-connected optimal 1-planar
graphs $G$ that are reduced to  optimal 1-planar graphs with
separating 4-cycles.
 \item If $G$ has a separating 4-cycle and $G
\rightarrow G'$ by  an $SR$-reduction, then $G'$ has a separating
4-cycle.
\end{enumerate}
\end{theorem}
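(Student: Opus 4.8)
The plan is to treat the three parts separately, dispatching (1) and (2) quickly and spending the effort on (3). For part (1) I would invoke Schumacher's theorem (Proposition~\ref{prop:properties}(1)): for every $5$-connected optimal 1-planar graph $G\neq XW_6$ there is a reduction to an extended wheel graph, even to $XW_8$, running through $5$-connected optimal 1-planar graphs and using only $SR$. The only supplement is that each intermediate graph actually has a \emph{good} $SR$-candidate: a $5$-connected graph has no separating $4$-cycle, so Lemma~4 of~\cite{bggmtw-gsqs-05} (equivalently the interior case of Lemma~\ref{lem:existence}) provides one, and Schumacher's statement is precisely the assertion that among the good reductions one can always pick a step keeping the result $5$-connected. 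Part (2) needs only a witness, and Example~\ref{ex:example1} already supplies one: $G_{16}$ is $5$-connected, yet $SR(h\mapsto i)$ yields a graph containing the separating $4$-cycle $(a,b,i,g)$.

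For part (3) I would first record the structural fact that makes separating $4$-cycles robust under $SR$. In an optimal 1-planar graph the four edges of a separating $4$-cycle $C=(v_1,v_2,v_3,v_4)$ are planar (black), and planar edges are never crossed, since each crossing edge crosses only its kite partner inside one quadrilateral face. Hence $C$ separates the \emph{whole} graph $G$, not just its planar skeleton: any inside--outside edge, planar or crossing, would have to cross an edge of $C$, which is impossible. In particular, for any vertex $x\notin C$ the neighbourhood $H(x)$ lies entirely on one closed side of $C$. I would also note that for any separating $4$-cycle the completed inner component $G_{in}+C$ is again optimal 1-planar, hence has at least eight vertices, so $|G_{in}|\ge 4$; deleting the single vertex removed by one $SR$-step can therefore never empty either side.

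Now let $G\rightarrow G'$ be a feasible $SR(x\mapsto v)$ and split into cases by the position of $x$. If $x$ is strictly inside (or strictly outside) $C$, then $H(x)$ is confined to one closed side, the four planar edges of $C$ belong to the untouched context, and the only new edges run from $v$ to former neighbours of $x$, all on the same side; hence $C$ is still a planar $4$-cycle separating the two non-empty sides, and $C$ witnesses the claim for $G'$. The interesting case is $x\in C$, say $x=v_1$. Here I would first argue that the target $v$ lies off $C$: the cycle-neighbours $v_2,v_4$ are joined to $v_1$ by planar edges and so cannot be the red target, while the opposite vertex $v_3$ is non-adjacent to $v_1$ in $G$ (a chord $\{v_1,v_3\}$ would create a triangle in the bipartite skeleton, or, as a crossing edge, would force a kite inside $C$ with no interior vertices, contradicting that $C$ separates). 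Merging $v_1$ into its red neighbour $v$ then merely relocates the corner of the cycle, turning $C$ into $C'=(v,v_2,v_3,v_4)$, whose edges $\{v,v_2\},\{v_2,v_3\},\{v_3,v_4\},\{v_4,v\}$ are again planar. I would finish by checking that $C'$ genuinely separates $G'$: by the Jordan-curve property of the locally modified $1$-planar embedding, $C'$ encloses the non-empty set $G_{in}\setminus\{v\}$ (if $v$ was inside) or $G_{in}$ (if $v$ was outside), and the new edges at the merged vertex go from $v$ to former neighbours of $v_1$, which stay on their respective sides of $C'$ and so do not short-circuit the cut.

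The main obstacle is exactly this last case, $x\in C$: one must be certain that after the merge the two sides are not joined past the shifted cycle and that the enclosed side stays non-empty. The first point rests on $SR$ being a purely local modification of the embedding, so the image of $C$ remains a simple closed curve whose interior is unchanged apart from the relocated corner; the second uses the bound $|G_{in}|\ge 4$. Everything else is routine bookkeeping with the local degrees and the kite structure, and optimal $1$-planarity of $G'$ is already guaranteed by Lemma~\ref{lem:feasibility}.
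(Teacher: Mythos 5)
Your proposal is correct and follows the same skeleton as the paper: part (1) is delegated to Schumacher's theorem, part (2) to Example~\ref{ex:example1} (where $SR(h\mapsto i)$ creates the separating $4$-cycle $(a,b,i,g)$), and part (3) is a case analysis on whether the reduced candidate lies on the separating $4$-cycle $C$. Your treatment of part (3) is more careful than the paper's and differs in one substantive choice. For a candidate $x=v_1$ on $C$ with circular neighbourhood $(v_2,v,v_4,w_1,w_2,w_3)$, the paper declares the surviving cycle to be the one obtained by replacing the reduced corner with its third black neighbour $w_2$; but $\{w_2,v_2\}$ and $\{w_2,v_4\}$ are red kite diagonals, and since $w_2$ lies on the opposite side of $C$ from the merge target $v$, the merged vertex keeps red edges to $w_1$ and $w_3$ that bypass $\{w_2,v_2,v_3,v_4\}$. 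Your replacement cycle $(v,v_2,v_3,v_4)$ is the one that actually works: $v$ inherits planar edges to $v_2$ and $v_4$ from the merge, and the enclosed side is $G_{in}$ or $G_{in}\setminus\{v\}$, which is non-empty because $G_{in}+C$ is itself optimal 1-planar and hence has at least eight vertices. Your explicit observation that a planar separating $4$-cycle separates the whole graph (no edge can cross a planar edge) also supplies the justification, left implicit in the paper, that a reduction at $x\notin C$ cannot destroy $C$. One small point worth adding: the target $v$ need not be the red neighbour lying between $v_2$ and $v_4$ on $C$; if it is one of the other two red neighbours, then one of $\{v,v_2\}$, $\{v,v_4\}$ is the newly created planar edge of the merge rather than a pre-existing kite side, but the conclusion is unchanged.
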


 \begin{proof}
The first statement has been proved by Schumacher \cite{s-s1pg-86}.
Statement (2) is shown by Example \ref{ex:example1}. Finally,
suppose there is a separating 4-cycle $C = (a,b,c,d)$ in $G$. For a
removal, an $SR$-reduction must be applied to a vertex of $C$.
Suppose that $a$ is a candidate with neighbors $b,v,d, w_1, w_2,
w_3$ and $SR(a \mapsto v)$. Thereafter, $C' = (w_2, b,c,d)$ is a
separating 4-cycle.
\end{proof}

In consequence, the following properties hold for  Schumacher's
reduction:

\begin{corollary} \label{cor:SRproperties}
(1) The 5-connected optimal 1-planar graphs are not closed under
 $SR$-reductions. \\
\noindent (2) If $G$ is  an optimal 1-planar graph $G$ with a
separating 4-cycle and $H$ is obtained from $G$ by using only
$SR$-reductions, then $H$ is not an extended wheel graph.
\end{corollary}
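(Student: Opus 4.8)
Both parts of the corollary are immediate consequences of Theorem~\ref{thm:SRproperties}, and the plan is to extract each from the matching item of that theorem together with the single structural fact that, for optimal $1$-planar graphs, possessing a separating $4$-cycle is equivalent to not being $5$-connected. Only one direction of this equivalence is needed here: the four vertices of a separating $4$-cycle form a separator of size four, so a graph containing one is at most $4$-connected and hence not $5$-connected.

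For part~(1), I would simply appeal to Theorem~\ref{thm:SRproperties}(2): there exists a $5$-connected optimal $1$-planar graph $G$ and an $SR$-reduction $G \rightarrow G'$ with $G'$ carrying a separating $4$-cycle; the graph $G_{16}$ of Example~\ref{ex:example1} is an explicit witness, since $SR(h \mapsto i)$ produces the separating cycle $(a,b,i,g)$. As $G'$ is then not $5$-connected, a single $SR$-reduction can carry a $5$-connected optimal $1$-planar graph out of the class, which is exactly the assertion that this class is not closed under $SR$-reductions.

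For part~(2), I would argue by induction along a reduction sequence $G = H_0 \rightarrow H_1 \rightarrow \cdots \rightarrow H_m = H$ that uses only $SR$-reductions. The base case is the hypothesis that $H_0 = G$ has a separating $4$-cycle, and the inductive step is exactly Theorem~\ref{thm:SRproperties}(3): if $H_i$ has a separating $4$-cycle and $H_i \rightarrow H_{i+1}$ by an $SR$-reduction, then so does $H_{i+1}$. Consequently $H = H_m$ has a separating $4$-cycle. To finish, I would invoke that an extended wheel graph is a $5$-connected optimal $1$-planar graph and that a $5$-connected $1$-planar graph is in fact $6$-connected~\cite{s-s1pg-86}; a $6$-connected graph admits no separator of size four, hence no separating $4$-cycle. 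Since $H$ does have one, $H$ cannot be an extended wheel graph.

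I do not anticipate any real obstacle: the content of the corollary is carried entirely by Theorem~\ref{thm:SRproperties}, and the only point requiring attention is the connectivity fact that extended wheel graphs have no separating $4$-cycle. I would establish this through their $6$-connectivity, which is already recorded in the preliminaries, rather than by a direct structural inspection of $XW_{2k}$.
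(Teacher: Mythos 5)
Your proposal is correct and follows exactly the route the paper intends: the paper states this corollary as an immediate consequence of Theorem~\ref{thm:SRproperties} with no written proof, and your argument (part~(1) from item~(2) of the theorem via the observation that a separating $4$-cycle precludes $5$-connectivity; part~(2) by induction along the reduction sequence using item~(3), finished by the $6$-connectivity of extended wheel graphs) supplies precisely the routine details that were left implicit.
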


Hence, the extended wheel graphs are not the set of irreducible
optimal 1-planar graphs under $SR$-reductions. Schumacher's
presupposition \cite{s-s1pg-86} and restriction of $SR$-reductions
to 5-connected optimal 1-planar graphs is necessary. Reductions
towards an extended wheel graph  using only $SR$ get stuck if there
is a 4-cycle.

The extended wheel graphs constitute an infinite set of irreducible
graphs for  $SR$ and $CR$, however, the optimal 1-planar graphs
constitute   a single equivalence class with $XW_6$ as a
representative, even if the equivalence relation is defined only by
$SR$. Let $G_1 \sim  G_2$ if and only if $G_1$ can be transformed
into $G_2$ by a sequence of feasible applications of $SR, \,
SR^{-1}, \, CR$ and $CR^{-1}$, respectively. Here, $G \rightarrow
G'$ by $SR^{-1}$ is feasible if $G' \rightarrow G$ by $SR$ is
feasible, and similarly for $CR^{-1}$. Recall that the inverse
reductions are  the $Q_v$-splitting and the $Q_4$-cycle addition  of
Suzuki \cite{s-rm1pg-10}.

\begin{theorem} \label{thm:transform}
  A graph $G$ is optimal 1-planar if and only if  $G$ is equivalent
  to the minimum extended wheel graph $XW_6$, where the equivalence
  relation is defined by feasible applications of $SR, CR$, and their
  inverse.
\end{theorem}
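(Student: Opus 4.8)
The plan is to prove the two directions of the equivalence separately, exploiting the fact that the relation $\sim$ is symmetric by construction (it is generated by $SR$, $CR$ and their inverses). The backward direction is immediate: if $G \sim XW_6$, then $G$ is obtained from the optimal 1-planar graph $XW_6$ by a sequence of feasible applications of $SR$, $CR$, $SR^{-1}$ and $CR^{-1}$. By Lemma~\ref{lem:feasibility} a feasible $SR$- or $CR$-reduction preserves optimal 1-planarity, and feasibility of the inverse reductions is \emph{defined} so that $G \rightarrow G'$ by $SR^{-1}$ holds exactly when $G' \rightarrow G$ by $SR$ is feasible; hence every graph along the chain, including $G$, is optimal 1-planar.

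For the forward direction, the strategy is to show that every optimal 1-planar graph $G$ can be transformed into $XW_6$ using only the forward reductions $SR$ and $CR$ (the inverses are then not even needed in this direction). I would split into two cases according to whether $G$ is reducible. If $G$ is reducible, Corollary~\ref{cor:XW6-XW8} already gives a reduction $G \rightarrow^* XW_6$ or $G \rightarrow^* XW_8$; in the latter case I still need to reach $XW_6$. Here I would invoke Theorem~\ref{thm:charact-4-connected}: if $G$ has a separating $4$-cycle, then $G \rightarrow^* XW_6$ directly. If $G$ is $5$-connected and reduces only to $XW_8$, I would use the freedom in the reduction process to first produce a pre-extended wheel graph $CXW_6$ (as in Example~\ref{ex:example1}, where the $5$-connected graph $G_{16}$ reaches $XW_6$) or, more cleanly, apply Corollary~\ref{cor:XW6-XW8-2}, which asserts that a $CXW_{2k}$ reduces to every $XW_{2i}$ with $i \geq 3$, including $XW_6$. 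The remaining case is when $G$ is itself an extended wheel graph $XW_{2k}$: then $G$ is irreducible under forward reductions, so I would use the inverse reductions. By Proposition~\ref{prop:properties}(3), every $XW_{2k}$ is obtained from a smaller configuration by $Q_v$-splitting and $Q_4$-cycle addition, i.e.\ by $SR^{-1}$ and $CR^{-1}$; more directly, since any $XW_{2k}$ with $k \geq 4$ can be reached from $XW_6$ by inverse reductions, and these inverse steps are feasible by definition, we get $XW_{2k} \sim XW_6$.

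The main obstacle I expect is the bookkeeping needed to guarantee that the target $XW_6$ (rather than merely \emph{some} extended wheel graph) is always attainable, and that every intermediate step is feasible. The results cited give reductions to $XW_8$ in the $5$-connected case, and the jump from $XW_8$ down to $XW_6$ requires a $CR$-reduction, which in turn requires creating a separating $4$-cycle first. Threading Theorem~\ref{thm:charact-4-connected}, Corollary~\ref{cor:XW6-XW8} and Corollary~\ref{cor:XW6-XW8-2} together to cover all three cases uniformly — reducible with a separating $4$-cycle, reducible and $5$-connected, and already an extended wheel graph — is where care is needed, since the $5$-connected reducible case is precisely the one where forward $SR$-reductions alone may stall at $XW_8$ (cf.\ Corollary~\ref{cor:SRproperties}). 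The cleanest write-up is probably to prove forward reachability of $XW_6$ for all optimal 1-planar graphs as a single lemma-like claim combining these corollaries, and then note that reachability of $XW_6$ immediately yields $G \sim XW_6$ because $\rightarrow$ is a restriction of $\sim$. Symmetry of $\sim$ then closes the argument with no further work.
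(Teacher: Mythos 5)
Your backward direction and your treatment of the case where $G$ is itself an extended wheel graph are fine (apply $CR^{-1}$ to $XW_{2k}$ to obtain $CXW_{2k}$ and reduce via Corollary~\ref{cor:XW6-XW8-2}). The gap is in the forward direction for reducible $G$, precisely in the subcase you yourself flag as delicate: a $5$-connected graph that forward-reduces to $XW_8$ but not to $XW_6$. Your stated plan is to reach $XW_6$ ``using only the forward reductions $SR$ and $CR$ (the inverses are then not even needed in this direction),'' and to do so by ``using the freedom in the reduction process to first produce a pre-extended wheel graph $CXW_6$'' or by invoking Corollary~\ref{cor:XW6-XW8-2}. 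This is circular: Corollary~\ref{cor:XW6-XW8-2} says every $CXW_{2k}$ forward-reduces to $XW_6$, so if $G$ could forward-reduce to \emph{any} $CXW_{2k}$, then $G$ would forward-reduce to $XW_6$ and would not be in this subcase at all. Theorem~\ref{upperbound} deliberately allows $s=4$ for some $5$-connected graphs, and Example~\ref{ex:example1} only exhibits one particular $G_{16}$ reaching $XW_6$; neither gives you the $CXW_{2k}$ you need. Theorem~\ref{thm:charact-4-connected} is also unavailable here since $G$ is $5$-connected.

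The fix is exactly the move you already use for irreducible extended wheel graphs, and it is what the paper does: after reducing $G$ to $XW_{2k}$ with $k\geq 4$, apply the \emph{inverse} reduction $CR^{-1}$ to expand $XW_{2k}$ to the pre-extended wheel graph $CXW_{2k}$, and then reduce $CXW_{2k}$ to $XW_6$ via Lemma~\ref{lem:pre} / Corollary~\ref{cor:XW6-XW8-2}. Since the theorem is about the equivalence relation generated by $SR$, $CR$ and their inverses, there is no reason to avoid $CR^{-1}$ in the forward direction, and your proposed ``single lemma-like claim'' of forward reachability of $XW_6$ for all optimal 1-planar graphs is in fact false (it fails for $XW_{2k}$ itself and possibly for the $s=4$ graphs). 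With that one correction, your argument collapses to the paper's: reduce to a small $XW_{2k}$, expand by $CR^{-1}$ if $k\geq 4$, reduce to $XW_6$.
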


\begin{proof}
  First, reduce $G$ to a small $XW_{2k}$ using $SR$ and $CR$-reductions,
  where $k=3,4$. If $k \geq 4$ then expand $XW_{2k}$ to a pre-extended wheel graph
  $CXW_{2k}$, which is then reduced to $XW_6$ according to Corollary \ref{cor:XW6-XW8}.
\end{proof}

Although $SR$-reductions may get stuck on 4-connected graphs, they
can cope with them under equivalence.

\begin{theorem} \label{thm:transform}
  A graph $G$ is optimal 1-planar if and only if  $G= XW_6$ or $G$
  is $SR$-equivalent
  to     $XW_8$, where the $SR$-equivalence
  relation is defined by feasible applications of $SR$ and
  $SR^{-1}$.
\end{theorem}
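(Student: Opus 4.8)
The plan is to prove the ``if and only if'' by establishing both directions, where the nontrivial content is the forward direction: every optimal 1-planar graph is $SR$-equivalent to $XW_8$ (or equals $XW_6$). The reverse direction is immediate, since by Lemma~\ref{lem:feasibility} every feasible application of $SR$ or $SR^{-1}$ keeps us inside the class of optimal 1-planar graphs, and $XW_6$ and $XW_8$ are themselves optimal 1-planar; hence anything $SR$-equivalent to $XW_8$, as well as $XW_6$ itself, is optimal 1-planar. For the forward direction I would first dispose of the base case $G = XW_6$, which is listed separately in the statement precisely because $XW_6$ cannot be reached from a larger graph by an $SR$-reduction (as noted after Lemma~\ref{lem:charpreXQ}, an $SR$-reduction always introduces a vertex of degree at least eight, and there is no optimal 1-planar graph on nine vertices). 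So I would assume $G \neq XW_6$ and show $G$ is $SR$-equivalent to $XW_8$.

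The key idea is that $SR$-\emph{equivalence} is much more flexible than $SR$-\emph{reducibility}, because we are now permitted to run $SR^{-1}$ (the $Q_v$-splitting) as well. The obstacle identified in Theorem~\ref{thm:SRproperties} and Corollary~\ref{cor:SRproperties} is that pure $SR$-reductions get stuck at separating $4$-cycles: the four candidates on the inner cycle of $CC$ block each other, and only $CR$ can dismantle them. The whole point here is to simulate the missing $CR$-step using $SR$ and $SR^{-1}$. So the first move is to reduce $G$ as far as the $SR$-rules alone allow; by Theorem~\ref{thm:SRproperties}(1) and Lemma~\ref{lem:existence}, if $G$ has no separating $4$-cycle we reach an extended wheel graph, and Schumacher's result gives reducibility all the way to $XW_8$. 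If $G$ does contain separating $4$-cycles, the plan is to proceed as in the proof of Theorem~\ref{thm:charact-4-connected}: reduce the interior of an innermost separating $4$-cycle $C$ to $XW_8$ using $SR$ while freezing $C$, so that $G \rightarrow^* H$ where $H$ still contains $C$.

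At that point the main technical step is to show that a single troublesome $CR$-reduction can be replaced by a short detour using $SR^{-1}$ followed by $SR$. Concretely, where Theorem~\ref{thm:charact-4-connected} finishes with a $CR$-reduction of a $CC$-pattern into a kite, I would instead argue that the two graphs related by this $CR$-step are $SR$-equivalent: the kite side can be $SR^{-1}$-split to reintroduce structure and then $SR$-reduced, or conversely the $CC$-side can be reached from the kite by a feasible $Q_v$-splitting, so that the bridge across the separating $4$-cycle is furnished entirely by $SR$ and $SR^{-1}$. Iterating over the nested separating $4$-cycles from innermost to outermost, each $CR$-step in the original reduction to $XW_6$ gets simulated by an $SR$/$SR^{-1}$ detour; once all separating $4$-cycles are eliminated this way we land on a $5$-connected optimal 1-planar graph, which by statement~(1) of Theorem~\ref{thm:SRproperties} is $SR$-reducible to $XW_8$. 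Composing the reduction with its simulated detours exhibits an $SR$-equivalence from $G$ to $XW_8$.

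The hard part will be verifying that the local $CR$-simulating detour is \emph{feasible} at every step, i.e.\ that the intermediate $SR^{-1}$-splittings and $SR$-reductions do not violate $3$-connectivity of the planar skeleton or create a multi-edge; this is exactly the feasibility condition controlled by Lemma~\ref{lem:feasibility} and the blocking-edge analysis, and it must be checked against the degrees and missing edges present in the frozen separating $4$-cycle configuration. A secondary subtlety is that $XW_8$ is the natural target (rather than $XW_6$) because all vertices of $XW_6$ have degree six, so $XW_6$ is $SR^{-1}$-inaccessible in the same way it is $SR$-inaccessible; one must confirm that the detours never need $XW_6$ as an intermediate stop and that $XW_6$ is therefore correctly excluded into the separate clause of the statement. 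Once the feasibility bookkeeping is in place, the equivalence follows by symmetry and transitivity of $\sim$.
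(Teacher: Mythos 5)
Your reverse direction and your handling of the $XW_6$ base case are fine and agree with the paper. The gap is in the central step of the forward direction: the claim that the $CC$-configuration at a separating $4$-cycle and the kite obtained from it by $CR$ are $SR$-equivalent via a local detour. You offer two mechanisms and neither works as stated. The ``conversely'' variant --- reaching the $CC$-side from the kite by feasible $Q_v$-splittings --- is impossible: by the paper's definition an $SR^{-1}$ step is feasible exactly when its reverse $SR$ step is feasible, so the last splitting in any such sequence would have to be reversible by a feasible $SR$-reduction applied to the $CC$-configuration itself; but the four inner candidates of $CC$ are precisely the mutually blocked ones (type $4$, pairwise blocking red edges), so no such $SR$ exists. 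The first variant (split the kite side and reduce back) never connects to the $CC$-side at all. More fundamentally, your strategy of first freezing $C$, reducing its interior to $XW_8$, and only then trying to dismantle the frozen interior in place cannot succeed with $SR^{\pm1}$ alone, because every intermediate graph in such an in-place dismantling still has $C$ as a separating $4$-cycle, and Corollary~\ref{cor:SRproperties}(2) together with Theorem~\ref{thm:SRproperties}(3) tells you that $SR$-moves preserve separating $4$-cycles.

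The paper avoids this by attacking the separating $4$-cycles \emph{before} reducing, and by inserting the new vertex \emph{across} the cycle rather than inside it. For each separating $4$-cycle $C=(a,b,c,d)$ one takes the two planar quadrangles $(a,b,u_1,u_2)$ and $(a,b,v_1,v_2)$ on the two sides of the planar edge $\{a,b\}$ and applies $SR^{-1}$ to introduce a center $x$ with neighbors $a,u_1,u_2,b,v_2,v_1$; feasibility is witnessed by the reverse reductions $SR(x\mapsto a)$ or $SR(x\mapsto b)$. Since $x$ now has neighbors in both components of $G-C$, the cycle $C$ is no longer separating. Doing this for all separating $4$-cycles yields a $5$-connected optimal 1-planar graph, and Schumacher's theorem (Theorem~\ref{thm:SRproperties}(1)) then gives an $SR$-reduction to $XW_8$. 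If you want to salvage your plan, this ``straddling insertion'' is the missing idea: the $SR^{-1}$ step must bridge the two sides of $C$, not merely add structure inside it.
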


\begin{proof}
The ``if'' direction follows from the fact that feasible
applications of $SR$ and
  $SR^{-1}$ preserve optimal 1-planar graphs.
For the ``only if'' direction, first resolve all separating 4-cycles
by an  $SR^{-1}$-reduction as in Example \ref{ex:example1}. For $C =
(a,b,c,d)$  consider the planar quadrangles defined by $a,b, u_1,
u_2$ and $a,b, v_1, v_2$. Then $SR^{-1}$ introduce a center $x$ with
neighbors $a,u_1, u_2, b, v_2, v_1$ in circular order and $SR(x
\mapsto a)$ or $SR(x \mapsto b)$ is feasible. Thereafter, there is a
5-connected optimal 1-planar graph $H$ that can be reduced to $XW_8$
by $SR$-reductions.
%
 \end{proof}

\begin{corollary}
The equivalence problem $G_1  \sim  G_2$  can be solved in linear
time. In addition, if $G_1  \sim  G_2$ then a transformation of
$G_1$ into $G_2$ can be computed in linear time.
\end{corollary}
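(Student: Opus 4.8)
The plan is to reduce the decision problem to recognition by invoking Theorem~\ref{thm:transform}, which collapses all optimal 1-planar graphs into the single equivalence class of $XW_6$. Consequently, on the domain where the relation is meaningful, $G_1 \sim G_2$ holds if and only if both $G_1$ and $G_2$ are optimal 1-planar: if both are optimal 1-planar then $G_1 \sim XW_6 \sim G_2$ by transitivity, and conversely every \emph{feasible} application of $SR$, $CR$, or their inverses preserves optimal 1-planarity, so membership in the class is an invariant of $\sim$ (a non-optimal graph admits no feasible step and is thus isolated). First I would therefore test $G_1$ and $G_2$ for optimal 1-planarity with the linear-time recognition algorithm of Proposition~\ref{thm:quadratic}, answering \emph{yes} exactly when both tests succeed. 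This already settles the decision part in linear time.

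For the constructive part I would route through the common representative $XW_6$ and concatenate two reduction sequences, reversing the second. Concretely, I would compute a reduction $G_1 \rightarrow^* XW_6$ and a reduction $G_2 \rightarrow^* XW_6$. The recognition algorithm of Proposition~\ref{thm:quadratic} already emits, en passant, the good candidates together with the feasible $SR$- and $CR$-reductions it applies; by Corollary~\ref{cor:XW6-XW8} each reducible optimal 1-planar graph reaches $XW_6$ or $XW_8$, and in the latter case I append the bounded detour $XW_8 \rightarrow^* CXW_8 \rightarrow^* XW_6$ supplied by Lemma~\ref{lem:pre} (one $CR^{-1}$-expansion followed by its reductions). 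If an input is itself an extended wheel graph $XW_{2k}$ with $k \geq 4$, hence irreducible, I instead walk it down by the expand-then-reduce steps $XW_{2k} \rightarrow CXW_{2k} \rightarrow^* XW_{2k-2}$ of Lemma~\ref{lem:pre}, iterated $O(k)$ times.

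Having both sequences, I would reverse $G_2 \rightarrow^* XW_6$ into a sequence $XW_6 \rightarrow^* G_2$ of $SR^{-1}$- and $CR^{-1}$-steps, feasible by the very definition of $\sim$, and concatenate it after $G_1 \rightarrow^* XW_6$, yielding $G_1 \rightarrow^* XW_6 \rightarrow^* G_2$. Each reduction or expansion changes the vertex count by a bounded amount, so the number of steps is $O(n_1 + n_2)$ (each $SR$ shifts the count by one, each $CR$ by four, plus only $O(1)$ detour steps), and with the book-keeping already present in the recognition algorithm every step costs amortized constant time; hence the transformation has linear length and is computed in linear time.

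The step I expect to be the main obstacle is making the two copies of $XW_6$ coincide so that the concatenation is literally well defined: the reduction of $G_1$ terminates at some labeled, embedded copy of $XW_6$, whereas the reversed reduction of $G_2$ must \emph{start} from that same copy, and the inverse operations $SR^{-1}, CR^{-1}$ are sensitive to the planar/crossed edge distinction. Since $XW_6$ has constant size and a rich automorphism group acting on its finitely many inequivalent embeddings, the two copies differ only by an isomorphism respecting the embedding, which can be found in $O(1)$ time and used to relabel the second sequence; relabeling leaves the abstract operations unchanged and keeps every intermediate graph optimal 1-planar, so the spliced sequence is a valid linear-time transformation of $G_1$ into $G_2$.
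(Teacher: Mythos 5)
Your proposal is correct and follows essentially the same route as the paper: decide equivalence by testing both graphs for optimal 1-planarity in linear time, then construct the transformation by reducing both graphs to a common small extended wheel graph (using a $CR^{-1}$ detour through a pre-extended wheel graph when an input is itself an irreducible $XW_{2k}$, via Lemma~\ref{lem:pre}) and concatenating one sequence with the reverse of the other. The only differences are cosmetic: you route through $XW_6$ where the paper routes through $XW_8$, and you additionally make explicit the $O(1)$ relabeling needed to identify the two terminal copies, a point the paper leaves implicit.
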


 \begin{proof}
Two graphs are equivalent if and only if both are optimal 1-planar
which can be solved in linear time \cite{b-ro1plt-16}. For the
transformation, an extended wheel graph is first transformed into a
pre-extended wheel graph $CXW_{2k}$ using $CR^{-1}$. Thereafter, we
use Lemma \ref{lem:pre} and transform both graphs into $XW_8$ and
then concatenate the transformations taking the inverse of the rules
from the reduction of $G_2$.
%
\end{proof}

\section{Conclusion and Perspectives} \label{sect:conclusion}

We have shown that the required feasible use of
 $SR$ and $CR$ reductions can be expressed by local conditions on
 the context of the removed vertices. The reduction system with the rules $SR$ and $CR$
 is context-sensitive, non-deterministic and non-confluent, but, nevertheless, reductions
 can be computed in linear time  \cite{b-ro1plt-16}. Moreover,
  every reducible optimal 1-planar graph can be reduced to any
irreducible extended wheel graph $XW_{2k}$ in a range from $s \leq k
\leq t$, where $s=3$ or $s=4$ and $t$ depends on the given graph.
 Our results generalize
similar ones of Schumacher \cite{s-s1pg-86} who used only
$SR$-reductions that are restricted to 5-connected optimal 1-planar
graphs, but $SR$ reductions do preserve this class.

It would be interesting to see whether similar results hold for
other classes of optimal graphs such as optimal $k$-planar graphs,
which allow up to $k$ crossings per edge \cite{pt-gdfce-97}, or
optimal $IC$ planar graphs, which are the restriction of optimal
1-planar graphs to  independent crossings  where each vertex is
incident to at most one crossing edge \cite{ks-cpgIC-10,
bdeklm-IC-16} and have the maximum of $13/4\,n-6$ edges.


\bibliographystyle{splncs03}
\bibliography{brandybibLNCSV5}

\begin{thebibliography}{10}
\providecommand{\url}[1]{\texttt{#1}}
\providecommand{\urlprefix}{URL }

\bibitem{ah-epgfc-77}
Appel, K., Haken, W.: Every planar map is four colorable. part i. discharging.
  Illinois J. Math.  21,  429--490 (1977)

\bibitem{ahk-epgfI-77}
Appel, K., Haken, W., Koch, J.: Every planar map is four colorable. part ii.
  reducibility. Illinois J. Math.  21,  491--567 (1977)

\bibitem{bn-tr-98}
Baader, F., Nipkow, T.: Term Rewriting and All That. Cambridge University Press
  (1998)

\bibitem{bw-gb-93}
Becker, T., Weispfenning, V.: Gr\"{o}bner bases: a computational approach to
  commutative algebra. Graduate Texts in Mathematics: Readings in Mathematics,
  {Vol.} 141, Springer (1993)

\bibitem{bsw-bs-83}
Bodendiek, R., Schumacher, H., Wagner, K.: {B}emerkungen zu einem
  {S}echsfarbenproblem von {G}. {R}ingel. Abh. aus dem Math. Seminar der Univ.
  Hamburg  53,  41--52 (1983)

\bibitem{bsw-1og-84}
Bodendiek, R., Schumacher, H., Wagner, K.: {\"U}ber 1-optimale {G}raphen.
  Mathematische Nachrichten  117,  323--339 (1984)

\bibitem{bo-srs-93}
Book, R.V., Otto, F.: String-Rewriting Systems. Texts and Monographs in
  Computer Science, Springer (1993)

\bibitem{b-np6ct-95}
Borodin, O.V.: A new proof of the 6 color theorem. J. Graph Theor.  19(4),
  507--521 (1995)

\bibitem{b-ro1plt-16}
Brandenburg, F.J.: Recognizing optimal 1-planar graphs in linear time.
  Algorithmica  (published on-line Oct 13, 2016) (2016)

\bibitem{bdeklm-IC-16}
Brandenburg, F.J., Didimo, W., Evans, W.S., Kindermann, P., Liotta, G.,
  Montecchianti, F.: Recognizing and drawing {IC}-planar graphs. Theor. Comput.
  Sci.  636,  1--16 (2016)

\bibitem{bggmtw-gsqs-05}
Brinkmann, G., Greenberg, S., Greenhill, C., McKay, B.D., Thomas, R., Wollan,
  P.: Generation of simple quadrangulations of the sphere. Discrete Math.  305,
   33--54 (2005)

\bibitem{cmrehl-97}
Corradini, A., Montanari, U., Rossi, F., Ehrig, H., Heckel, R., L{\"o}we, M.:
  Algebraic approaches to graph transformations. In: Rozenberg, G. (ed.)
  {Handbook of Graph Grammars and Computing by Graph Transformation}. pp.
  163--245. World Scientific (1997)

\bibitem{c-handbookTCS-90}
Courcelle, B.: Graph rewriting: An algebraic and logic approach. In: Handbook
  of Theoretical Computer Science, Volume {B:} Formal Models and Sematics
  {(B)}, pp. 193--242. Elsevier (1990)

\bibitem{ce-gsmsol-12}
Courcelle, B., Engelfriet, J.: Graph Structure and Monadic Second-Order Logic -
  {A} Language-Theoretic Approach, Encyclopedia of Mathematics and its
  Applications, vol. 138. Cambridge University Press (2012)

\bibitem{d-ds1pgd-13}
Didimo, W.: Density of straight-line 1-planar graph drawings. Inform. Process.
  Lett.  113(7),  236--240 (2013)

\bibitem{ehklss-tm1pg-13b}
Eades, P., Hong, S.H., Katoh, N., Liotta, G., Schweitzer, P., Suzuki, Y.: A
  linear time algorithm for testing maximal 1-planarity of graphs with a
  rotation system. Theor. Comput. Sci.  513,  65--76 (2013)

\bibitem{el-racg1p-13}
Eades, P., Liotta, G.: Right angle crossing graphs and 1-planarity. Discrete
  Applied Mathematics  161(7-8),  961--969 (2013)

\bibitem{e-cfgg-97}
Engelfriet, J.: Context-free graph grammars. In: Rozenberg, G., Samolaa, A.
  (eds.) {Handbook of Formal Languages, Vol. 3, Beyond Words}. pp. 125--213.
  Springer (1997)

\bibitem{help-ft1pg-12}
Hong, S.H., Eades, P., Liotta, G., Poon, S.H.: F{\'a}ry's theorem for 1-planar
  graphs. In: Gudmundsson, J., Mestre, J., Viglas, T. (eds.) {COCOON} 2012.
  {LNCS}, vol. 7434, pp. 335--346. Springer (2012)

\bibitem{hmu-iatlc-03}
Hopcroft, J.E., Motwani, R., Ullman, J.D.: Introduction to automata theory,
  languages, and computation - international edition {(2.} ed). Addison-Wesley
  (2003)

\bibitem{ks-cpgIC-10}
Kr{\'{a}}l, D., Stacho, L.: Coloring plane graphs with independent crossings.
  Journal of Graph Theory  64(3),  184--205 (2010)

\bibitem{Kyncl-09}
Kyncl, J.: Enumeration of simple complete topological graphs. Eur. J. Comb.
  30(7),  1676--1685 (2009)

\bibitem{pt-gdfce-97}
Pach, J., T{\'o}th, G.: Graphs drawn with a few crossings per edge.
  Combinatorica  17,  427--439 (1997)

\bibitem{ringel-65}
Ringel, G.: Ein {S}echsfarbenproblem auf der {K}ugel. Abh. aus dem Math.
  Seminar der Univ. Hamburg  29,  107--117 (1965)

\bibitem{rsst-4color-97}
Robertson, N., Sanders, D.P., Seymour, P.D., Thomas, R.: The four-colour
  theorem. J. Comb. Theory, Ser. {B}  70(1),  2--44 (1997)

\bibitem{r-hgg-97}
Rozenberg, G.: {H}andbook of {G}raph {G}rammars and {C}omputing by {G}raph
  {T}ransformation. World Scientific, Singapore (1997)

\bibitem{s-s1pg-86}
Schumacher, H.: Zur {S}truktur 1-planarer {G}raphen. Mathematische Nachrichten
  125,  291--300 (1986)

\bibitem{s-rm1pg-10}
Suzuki, Y.: Re-embeddings of maximum 1-planar graphs. {SIAM} J. Discr. Math.
  24(4),  1527--1540 (2010)

\end{thebibliography}

\end{document}